\providecommand{\U}[1]{\protect\rule{.1in}{.1in}}
\newtheorem{theorem}{Theorem}
\newtheorem{example}[theorem]{Example}
\newtheorem{lemma}[theorem]{Lemma}
\newtheorem{proposition}[theorem]{Proposition}
\newtheorem{remark}[theorem]{Remark}
\newenvironment{proof}[1][Proof]{\noindent\textbf{#1.} }{\ \rule{0.5em}{0.5em}}
\begin{document}

\title{What is the Wigner function closest to a given square integrable function?}
\author{J.S. Ben-Benjamin\textbf{\thanks{yonatan@greatwing.com}}
\and N.C. Dias\textbf{\thanks{ncdias@meo.pt}}
\and L. Cohen\textbf{\thanks{leon.cohen@hunter.cuny.edu}}
\and P. Loughlin\textbf{\thanks{loughlin@pitt.edu}}
\and J.N. Prata\textbf{\thanks{joao.prata@mail.telepac.pt}}}
\maketitle

\begin{abstract}
We consider an arbitrary square integrable function $F$ on the
phase space and look for the Wigner function closest to it with
respect to the $L^2$ norm. It is well known that the minimizing
solution is the Wigner function of any eigenvector associated with
the largest eigenvalue of the Hilbert-Schmidt operator with Weyl
symbol $F$. We solve the particular case of radial functions on
the two-dimensional phase space exactly. For more general cases,
one has to solve an infinite dimensional eigenvalue problem. To
avoid this difficulty, we consider a finite dimensional
approximation and estimate the errors for the eigenvalues and
eigenvectors. As an application, we address the so-called Wigner
approximation suggested by some of us for the propagation of a
pulse in a general dispersive medium. We prove that this
approximation never leads to a {\it bona fide} Wigner function.
This is our prime motivation for our optimization problem. As a
by-product of our results we are able to estimate the eigenvalues
and Schatten norms of certain Schatten-class operators. The
techniques presented here may be potentially interesting for
estimating eigenvalues of localization operators in time-frequency
analysis and quantum mechanics.
\end{abstract}


\section{Introduction}

Given two functions $\psi, \phi \in L^2 (\mathbb{R}^d)$, the cross-Wigner function $W(\psi,\phi)$ is given by \cite{Cohen1,Dias1,Lions,Wigner}:
\begin{equation}
W(\psi,\phi) (x,k)  =\frac{1}{(2 \pi)^d} \int_{\mathbb{R}^d} \psi (x + y/2) \overline{\phi (x-y/2)} e^{- i y \cdot k} dy.
\label{eqHS5}
\end{equation}
Here $z=(x,k) \in \mathbb{R}^{2d}$ is interpreted as a phase-space (time-frequency or position-wave number) variable. If $\psi=\phi$ we shall simply write (with some abuse of notation) $W \psi$, meaning $W(\psi,\psi)$ \cite{Wigner}:
\begin{equation}
W \psi (x,k):=W (\psi,\psi) (x,k)= \frac{1}{(2 \pi)^d} \int_{\mathbb{R}^d} \psi \left(x+y/2 \right) \overline{\psi \left(x-y/2 \right)} e^{-i y \cdot k} dy,
\label{eqIntroduction1}
\end{equation}
The Wigner distribution $W \psi$ for a signal $\psi \in L^2 (\mathbb{R}^d)$ is interpreted as a joint phase space representation of the signal.

In the present work, we intend to develop a systematic method to solve the following problem:

 \begin{itemize}
 \item Given some measurable function $F: \mathbb{R}^{2d} \to \mathbb{R}$, which is not a Wigner function, what is the Wigner function $W \psi_0$ closest to it with respect to the $L^2$ norm? In other words, we want to determine $\psi_0 \in L^2 (\mathbb{R}^d)$, such that:
\begin{equation}
\|F - W \psi_0 \|_{L^2 (\mathbb{R}^{2d})}= \inf_{\psi \in L^2 (\mathbb{R}^d)} \|F - W \psi \|_{L^2 (\mathbb{R}^{2d})}.
\label{eqIntroduction0}
\end{equation}
\end{itemize}

This problem and the methods we present may be useful in various contexts. But let us briefly explain our particular motivation for addressing it. In \cite{Cohen2,Loughlin,Loughlin1,Loughlin2} some of us considered the evolution of the Wigner function of a pulse $\psi$ (in $d=1$) given by:
\begin{equation}
W \psi (x,k,t)= \frac{1}{2 \pi} \int_{\mathbb{R}} \psi \left(x+y/2 ,t \right) \overline{\psi \left(x-y/2 ,t \right)} e^{-i y \cdot k} dy,
\label{eqIntroduction2}
\end{equation}
where
\begin{equation}
\psi (x,t)=\int_{\mathbb{R}} G(x-x^{\prime},t) \psi_0(x^{\prime}) dx^{\prime},
\label{eqIntroduction3}
\end{equation}
$\psi_0(x)= \psi(x,0)$ is the pulse at $t=0$, and
\begin{equation}
G(x,t)=\frac{1}{2 \pi}\int_{\mathbb{R}} e^{ikx-i \omega (k) t} dk
\label{eqIntroduction4}
\end{equation}
is the Green's function. In the previous formula $\omega(k)$ is the dispersion relation
\begin{equation}
\omega (k)= \omega_R (k)+i \omega_I (k),
\label{eqIntroduction5}
\end{equation}
which connects the wave number $k$ and the frequency $\omega$. One should understand eq.(\ref{eqIntroduction3}) in the distributional sense $G \in \mathcal{S}^{\prime} (\mathbb{R} \times \mathbb{R})$ and $\psi \in \mathcal{S} (\mathbb{R})$. In \cite{Loughlin1,Loughlin2} the following approximation - called {\it Wigner approximation} - was derived:
\begin{equation}
W \psi (x,k,t) \sim e^{ 2 t \omega_I(k)} W \psi_0 \left(x- \nu(k) t,k \right),
\label{eqIntroduction6}
\end{equation}
where
\begin{equation}
\nu(k)= \omega_R^{\prime} (k)
\label{eqIntroduction7}
\end{equation}
is the group velocity.

The advantage of considering (\ref{eqIntroduction6}) instead of the exact (\ref{eqIntroduction2}) is obvious. In (\ref{eqIntroduction6}), we have a local, computable expression, which has a simple interpretation. Each mode $k$, evolves along a "classical" trajectory with velocity given by the group velocity.

However, with this approximation, one faces a difficulty. As we shall prove in section 6, the expression on the right-hand side of eq.(\ref{eqIntroduction6}) is never the Wigner function $W \phi$ of a signal $\phi \in L^2$ for $t>0$. In this case, we say that that expression is not {\it representable}. However, it may still be a good approximation.

Non-representable functions may also appear, when one conducts "time-varying filtering" \cite{Boudreaux,Jeong} by multiplying a Wigner distribution $W \psi$ by a weighting function of time and frequency:
\begin{equation}
W_{\Gamma} \psi (x,k) = W \psi (x,k) \Gamma (x,k)
\label{eqJeong1}
\end{equation}
The weighting function $\Gamma$ is chosen so that $W_{\Gamma} \psi$ has some optimal time-frequency concentration.

The Wigner transform is a fundamental instrument in the spectral esti-
mation of non-stationary signals. In some situations a non-representable
phase space function may appear, for instance: in multitaper estimation
\cite{Bayram}, specially when combined with reassignment \cite{Xiao} and in the Wigner
distribution of linear signal spaces \cite{Hlawatsch1,Hlawatsch2}.

Motivated by these three situations, we intend to study the problem stated above. If a given real-valued function $F \in L^2 (\mathbb{R}^{2d})$ is not representable, that is if there is no $\psi \in L^2 (\mathbb{R}^d)$ such that $F=W \psi$, then what is the Wigner function $W \psi_0$ "closest" to $F$? Since, via Moyal's identity \cite{Moyal}, Wigner functions belong to $L^2 (\mathbb{R}^{2d})$, it seems natural to require proximity in the $L^2$-norm. This least squares problem has emerged in other contexts such as Bessel multipliers \cite{Balazs1,Balazs2}, and time-varying filtering and signal estimation using Wigner functions \cite{Boudreaux,Jeong} and short-time Fourier transforms \cite{Griffin}.

In a companion paper \cite{Ben} we proved that such a minimizer always exists, although it may not be unique. Moreover, we give an explicit construction of the minimizers. Nevertheless, it may be difficult to obtain it. This is because the construction requires the computation of the spectrum and the eigenspace associated with the largest eigenvalue of the self-adjoint Hilbert-Schmidt operator $\widehat{F}$ with Weyl symbol $F$. Since, in general, the spectrum of $\widehat{F}$ may be infinite, albeit countable, this may prove to be a difficult task. If such is the case, we choose to replace the infinite dimensional eigenvalue problem by a finite dimensional one. We then give precise estimates for the errors of the eigenvalues and eigenvectors of the truncated problem. As a by-product of these estimates we can approximate the eigenvalues and Schatten norms of certain Schatten-class operators \cite{Exner}.

These techniques may be potentially interesting in other contexts. For instance, in quantum mechanics, mixed states are represented by positive trace-class operators - the so-called density matrices. In general, it is very difficult to assess whether a given operator acting on an infinite dimensional Hilbert space is positive. The techniques developed here allow us to iteratively compute a sequence of positive trace-class operators which approximate the given operator. Also, as we will point out, this optimization problem is intimately related to localization (Toeplitz) operators \cite{Bracken,Daubechies,Lieb,Ramanathan}.

Here is a brief summary of the paper. In the next section, we introduce the main concepts related to the spectrum and Weyl transform of Hilbert-Schmidt operators. In section 3, we present the solution for the optimization problem, and we solve exactly a particular case in $d=1$ in section 4 (this is roughly speaking the case of "radial" functions). In section 5, we present the main results of this work. We consider the truncated eigenvalue problem and derive precise estimates for the errors of the eigenvalues and eigenvectors. In section 6, we go back to the Wigner approximation. We show that the Wigner approximation is never representable. We illustrate our results with a simple example. In section 7, we address the problem of obtaining approximately the spectrum and the Schatten norm for some Schatten-class operators. Finally, in section 8, we present our conclusions and discuss the possibility of applying our results to quasi-distributions other than the Wigner distribution.

\section*{Notation}
The complex conjugate of a number $c$ is written $\overline{c}$. If $\widehat{A}$ is a linear operator acting on some Hilbert space, then we denote by $Ker(\widehat{A})$ its kernel. The inner product and the norm on $L^2 (\mathbb{R}^d)$ are
\begin{equation}
< \psi, \phi >_{L^2 (\mathbb{R}^d)} = \int_{\mathbb{R}^d} \psi (x) \overline{\phi(x)} dx
\label{eqinnerproduct}
\end{equation}
and
~\begin{equation}
|| \psi||_{L^2 (\mathbb{R}^d)} =\left( \int_{\mathbb{R}^d} |\psi (x)|^2 dx\right)^{1/2},
 \label{eqnorm}
\end{equation}
respectively. We denote by $\mathcal{S} (\mathbb{R}^d)$ the Schwartz class of test functions and by $\mathcal{S}^{\prime} (\mathbb{R}^d)$ its dual - the tempered distributions. We shall denote by $|| \cdot||_{l^2}$ the norm for the spaces of square-summable sequences
\begin{equation}
l^2 (\mathbb{N})= \left\{ c= \left\{c_n \right\}_n: ~||c||_{l^2}^2=\sum_{n=1}^{\infty} |c_n|^2 < \infty \right\}
\label{eqsequences1}
\end{equation}
and
\begin{equation}
l^2 (\mathbb{N}^2)= \left\{ \mathbb{F}= \left\{f_{n,m} \right\}_{n,m}: ~||\mathbb{F}||_{l^2}^2=\sum_{n,m=1}^{\infty} |f_{n,m}|^2 < \infty \right\}
\label{eqsequences2}
\end{equation}

The Fourier-Plancherel transform of $f \in L^1 (\mathbb{R}^d) \cap L^2 (\mathbb{R}^d)$ is defined by:
\begin{equation}
(\mathcal{F}f) (k):= \frac{1}{(2 \pi)^{d/2}} \int_{\mathbb{R}^d} f(x) e^{-i k \cdot x} dx.
\label{eqFourierTransform}
\end{equation}

\section{Hilbert-Schmidt operators and Weyl transform}

In this section we review some well known definitions and results of Hilbert-Schmidt operators and the Weyl transform. For more details the reader should refer to \cite{Exner,Wong}.

\subsection{Hilbert-Schmidt operators}

A Hilbert-Schmidt operator $\widehat{A}: \mathcal{H} \to \mathcal{H}$ on a separable Hilbert space $\mathcal{H}$ is a linear operator such that \cite{Exner}
\begin{equation}
\sum_i || \widehat{A}e_i||_{\mathcal{H}}^2 < + \infty,
\label{eqHSOp1}
\end{equation}
for any orthonormal basis $\left\{e_i \right\}_i$.

We denote by $S_2 (\mathcal{H})$ the set of Hilbert-Schmidt operators. This is a Hilbert space with inner product
\begin{equation}
< \widehat{A},\widehat{B}>_{S_2 (\mathcal{H})}:= \sum_i < \widehat{A}e_i,\widehat{B}e_i>_{\mathcal{H}}
\label{eqHSOp2}
\end{equation}
and norm:
\begin{equation}
|| \widehat{A}||_{S_2 (\mathcal{H})}^2:= \sum_i || \widehat{A}e_i||_{\mathcal{H}}^2.
\label{eqHSOp3}
\end{equation}
It can be shown that the previous expressions do not depend on the orthonormal basis.

Hilbert-Schmidt operators are compact operators \cite{Exner,Reed}. If $\widehat{A} \in S_2 (\mathcal{H})$ is self-adjoint, then it admits the spectral decomposition:
\begin{equation}
\widehat{A} = \widehat{A}_+ + \widehat{A}_-,
\label{eqHSOp4}
\end{equation}
where
\begin{equation}
\widehat{A}_{\pm} := \sum_{j \in \mathbb{T}_{\pm}} \lambda_j \widehat{P}_j
\label{eqHSOp5}
\end{equation}
are the positive $(+)$ and the negative $(-)$ parts, $\mathbb{T}_{+}$ and $\mathbb{T}_{-}$ are (possibly finite) sets of integer indices labelling the positive and the negative eigenvalues, respectively,  $\left\{\lambda_j \right\}_{j \in \mathbb{T}_+}$ are the positive eigenvalues, written as a decreasing sequence
\begin{equation}
\lambda_1 > \lambda_2 > \lambda_3 > \cdots >0,
\label{eqHSOp6}
\end{equation}
$\left\{\lambda_j \right\}_{j \in \mathbb{T}_-}$ are the negative eigenvalues, written as an increasing sequence
\begin{equation}
\lambda_{-1} < \lambda_{-2} < \lambda_{-3} < \cdots <0,
\label{eqHSOp7}
\end{equation}
and $\widehat{P}_j : \mathcal{H} \to \mathcal{H}_j$ is the orthogonal projection onto the eigenspace $\mathcal{H}_j$ associated with the eigenvalue $\lambda_j$. Each eigenspace is finite dimensional: $n_j= dim (\mathcal{H}_j) < + \infty$.

The Hilbert space splits into the Hilbert sum:
\begin{equation}
\mathcal{H} = Ker (\widehat{A}) \oplus \left(\mathcal{H}_{-1}\oplus \mathcal{H}_{-2} \oplus \cdots \right) \oplus \left(\mathcal{H}_1\oplus \mathcal{H}_2 \oplus \cdots \right)
\label{eqHSOp8}
\end{equation}
By choosing orthonormal basis in each eigenspace $\mathcal{H}_j$, we can rewrite (\ref{eqHSOp5}) as
\begin{equation}
\widehat{A}_{\pm} := \sum_{j \in \mathbb{U}_{\pm}} \mu_{\pm j} \widehat{P}_{\pm j}
\label{eqHSOp9}
\end{equation}
where $\widehat{P}_{\pm j}$ is the projector in the direction of the vector $e_{\pm j}$ of the orthonormal set of eigenvectors, with $<e_{\alpha},e_{\beta}>_{\mathcal{H}}= \delta_{\alpha, \beta}$ for all $\alpha, \beta \in \mathbb{U}= \mathbb{U}_+ \cup \mathbb{U}_-$, and $\widehat{A} e_{\alpha}= \mu_{\alpha} e_{\alpha}$. The eigenvalues $\left\{\mu_{\alpha}\right\}_{\alpha \in \mathbb{U}}$ are the same as $\left\{\lambda_{\alpha} \right\}_{\alpha \in \mathbb{T}}$, but they are not all necessarily distinct. This happens whenever some eigenvalue is degenerate $(n_j >1)$.

\subsection{The Weyl transform}

In this work we deal with the case $\mathcal{H} = L^2 (\mathbb{R}^d)$. A Hilbert-Schmidt operator $\widehat{A} \in S_2 \left(L^2(\mathbb{R}^d) \right)$ is given by
\begin{equation}
(\widehat{A} \psi )(x) := \int_{\mathbb{R}^d} K_A (x,y) \psi (y) dy, \hspace{1 cm} \psi \in L^2(\mathbb{R}^d)
\label{eqHS1}
\end{equation}
with a kernel $K_A \in L^2(\mathbb{R}^d \times \mathbb{R}^d)$.

The Weyl transform \cite{Cohen1,Wong} is a linear map
\begin{equation}
\mathcal{W}: S_2 \left(L^2(\mathbb{R}^d) \right) \to L^2(\mathbb{R}^{2d})
\label{eqWigner1}
\end{equation}
defined by
\begin{equation}
\mathcal{W}(\widehat{A} )(x,k):= \int_{\mathbb{R}^d} K_A(x+y/2,x-y/2) e^{-i y \cdot k} dy.
\label{eqHS2}
\end{equation}
The function $\mathcal{W}(\widehat{A} )(x,k)$ is called the Weyl symbol of $\widehat{A}$. $\mathcal{W}$ is a bijection with inverse $\mathcal{W}^{-1}$. Thus, given a symbol $A(x,k) \in L^2 (\mathbb{R}^{2d})$, the associated Weyl operator is the Hilbert-Schmidt operator
\begin{equation}
\left((\mathcal{W}^{-1} A) \psi\right) (x)=  \frac{1}{(2 \pi)^d} \int_{\mathbb{R}^d} \int_{\mathbb{R}^d}  A \left(\frac{x+y}{2},k \right) e^{ik \cdot(x-y)} \psi (y) dk dy,
\label{eqHS3}
\end{equation}
defined for all $ \psi \in L^2(\mathbb{R}^d)$. A Hilbert-Schmidt operator $\widehat{A}$ is self-adjoint if and only if its symbol $A(x,k)$ is a real function. In this case, the operator $\widehat{A}$ has the spectral decomposition (\ref{eqHSOp4},\ref{eqHSOp5},\ref{eqHSOp9}) with a positive $(\widehat{A}_+)$ and a negative part $(\widehat{A}_-)$. Using the Weyl transform, we may thus define:
\begin{equation}
A= \mathcal{W} (\widehat{A}) = A_+ + A_-,
\label{eqHS3.1}
\end{equation}
where
\begin{equation}
A_{\pm}= \mathcal{W} (\widehat{A}_{\pm}) .
\label{eqHS3.2}
\end{equation}
An important case is when the operator $\widehat{A}_{\psi,\phi}$ (\ref{eqHS1}) is the rank one operator with kernel
\begin{equation}
K_{\psi,\phi} (x,y) = (\psi \otimes \overline{\phi})(x,y) =  \psi (x)\overline{\phi (y)},
\label{eqHS4}
\end{equation}
with $\psi , \phi \in L^2 (\mathbb{R}^d)$. The associated Weyl symbol is (up to a multiplicative constant) the non-diagonal Wigner function (\ref{eqHS5}):
\begin{equation}
\mathcal{W}(\widehat{A}_{\psi,\phi})(x,k)= (2 \pi)^d W(\psi,\phi) (x,k)= \int_{\mathbb{R}^d} \psi (x + y/2) \overline{\phi (x-y/2)} e^{- i y \cdot k} dy.
\label{eqextra1}
\end{equation}

Weyl operators and Wigner functions are also related via the following remarkable formula. Let $\widehat{F}$ be some Hilbert-Schmidt operator with Weyl symbol $F = \mathcal{W} (\widehat{F})$, and let $\psi, \phi \in L^2(\mathbb{R}^d)$. Then we have \cite{Gosson}:
\begin{equation}
\langle \widehat{F} \psi, \phi \rangle_{L^2 (\mathbb{R}^d)} = \langle F, W(\phi,\psi) \rangle_{L^2 (\mathbb{R}^{2d})}
\label{eqextra2}
\end{equation}

In the previous identity, let us choose $\widehat{F}= \widehat{A}_{\psi_1,\psi_2}$ as in (\ref{eqHS4},\ref{eqextra1}):
\begin{equation}
\left(\widehat{A}_{\psi_1,\psi_2} \eta \right) (x) = \langle \eta ,\psi_2 \rangle_{L^2 (\mathbb{R}^d)} \psi_1 (x),
\label{eqMoyalA}
\end{equation}
for all $\eta \in L^2 (\mathbb{R}^d)$. From (\ref{eqextra2},\ref{eqMoyalA}), we have:
\begin{equation}
\begin{array}{c}
\langle \widehat{A}_{\psi_1,\psi_2} \phi_2, \phi_1 \rangle_{L^2 (\mathbb{R}^d)} = \langle A_{\psi_1,\psi_2} , W(\phi_1,\phi_2) \rangle_{L^2 (\mathbb{R}^{2d})}\\
\\
\Leftrightarrow \langle \phi_2,\psi_2 \rangle_{L^2 (\mathbb{R}^d)} \langle \psi_1,\phi_1 \rangle_{L^2 (\mathbb{R}^d)} = (2 \pi)^d \langle W(\psi_1,\psi_2), W(\phi_1,\phi_2) \rangle_{L^2 (\mathbb{R}^{2d})} ,
\end{array}
\label{eqMoyalB}
\end{equation}
and we obtain Moyal's identity \cite{Moyal}:
\begin{equation}
\langle W(\psi_1,\psi_2),W(\phi_1,\phi_2) \rangle_{L^2 (\mathbb{R}^{2d})}=\frac{1}{(2 \pi)^d} \langle \psi_1,\phi_1 \rangle_{L^2 (\mathbb{R}^d)} \langle \phi_2,\psi_2 \rangle_{L^2 (\mathbb{R}^d)}.
\label{eqMoyalIdentity1}
\end{equation}
As a consequence of this, we have:

\begin{lemma}\label{LemmaOrthormalBasis}
Let $\left\{e_n \right\}_n$ be an orthonormal basis of $L^2 (\mathbb{R}^d)$. Then the functions $\left\{(2 \pi)^{d/2} W(e_n,e_m) \right\}_{n,m}$ form an orthonormal basis of $L^2 (\mathbb{R}^{2d})$.
\end{lemma}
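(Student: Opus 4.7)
The plan is to establish the two usual ingredients: orthonormality and completeness (maximality) of the family $\{(2\pi)^{d/2} W(e_n,e_m)\}_{n,m}$ in $L^2(\mathbb{R}^{2d})$.

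For orthonormality, I would apply Moyal's identity (\ref{eqMoyalIdentity1}) directly with $\psi_1=e_n$, $\psi_2=e_m$, $\phi_1=e_k$, $\phi_2=e_l$. Since $\langle e_n,e_k\rangle_{L^2(\mathbb{R}^d)}=\delta_{nk}$ and $\langle e_l,e_m\rangle_{L^2(\mathbb{R}^d)}=\delta_{lm}$, one obtains
\begin{equation}
\langle (2\pi)^{d/2}W(e_n,e_m),(2\pi)^{d/2}W(e_k,e_l)\rangle_{L^2(\mathbb{R}^{2d})}=\delta_{nk}\delta_{ml}.
\end{equation}
This is essentially immediate, so it is not the obstacle.

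The main step is completeness, and here I would use the interplay between the Weyl transform and the pairing (\ref{eqextra2}). Given any $F\in L^2(\mathbb{R}^{2d})$ orthogonal to every $W(e_n,e_m)$, let $\widehat{F}=\mathcal{W}^{-1}(F)$, which is Hilbert-Schmidt because $\mathcal{W}$ is a bijection between $S_2(L^2(\mathbb{R}^d))$ and $L^2(\mathbb{R}^{2d})$. Applying (\ref{eqextra2}) with $\psi=e_n$ and $\phi=e_m$ gives
\begin{equation}
\langle \widehat{F}e_n,e_m\rangle_{L^2(\mathbb{R}^d)}=\langle F,W(e_m,e_n)\rangle_{L^2(\mathbb{R}^{2d})}=0
\end{equation}
for all $n,m$. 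Since $\{e_m\}_m$ is an orthonormal basis of $L^2(\mathbb{R}^d)$, this forces $\widehat{F}e_n=0$ for every $n$, and since $\{e_n\}_n$ is a basis as well, $\widehat{F}=0$ on $L^2(\mathbb{R}^d)$. Injectivity of $\mathcal{W}$ then yields $F=0$, which proves the family is maximal and hence an orthonormal basis.

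The only subtle point I anticipate is being careful with the order of arguments in $W(\cdot,\cdot)$ when invoking (\ref{eqextra2}) (the adjoint slot of $W$ is conjugate-linear), but Moyal's identity and the pairing have already been set up in the paper in the required form, so no further machinery is needed.
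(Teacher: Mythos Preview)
Your proposal is correct and follows essentially the same approach as the paper: Moyal's identity for orthonormality, and for completeness the pairing (\ref{eqextra2}) applied to $\widehat{F}=\mathcal{W}^{-1}(F)$ to conclude that all matrix elements $\langle \widehat{F}e_n,e_m\rangle$ vanish, hence $\widehat{F}=0$ and $F=0$. Your remark about the order of slots in $W(\cdot,\cdot)$ is the only delicate point, and you handle it correctly.
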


\begin{proof}
From Moyal's identity, we have:
\begin{equation}
\begin{array}{c}
\langle W(e_n, e_m), W (e_k,e_l) \rangle_{L^2 (\mathbb{R}^{2d})}= \frac{1}{(2 \pi)^d} \langle e_n, e_k \rangle_{L^2 (\mathbb{R}^d)} \langle e_l, e_m \rangle_{L^2 (\mathbb{R}^d)}=\\
\\
= \frac{1}{(2 \pi)^d} \delta_{n,k} \delta_{l,m},
\end{array}
\label{eqOrthormalBasis1}
\end{equation}
which shows that $\left\{(2 \pi)^{d/2} W(e_n,e_m) \right\}_{n,m}$ are an orthonormal set.

Next, assume that $F \in L^2 (\mathbb{R}^{2d})$ is such that
\begin{equation}
\langle F,  W(e_n, e_m) \rangle_{L^2 (\mathbb{R}^{2d})}=0,
\label{eqOrthormalBasis2}
\end{equation}
for all $n,m$. Let $\widehat{F}= \mathcal{W}^{-1} (F)$ be the Hilbert-Schmidt operator with Weyl symbol $F$. From (\ref{eqextra2}), we have for all $n,m$:
\begin{equation}
0= \langle \widehat{F} e_n, e_m \rangle_{L^2 (\mathbb{R}^d)}
\label{eqOrthormalBasis3}
\end{equation}
Since $\left\{e_n \right\}_n$ is an orthonormal basis of $L^2 (\mathbb{R}^d)$, this is possible if and only if $ \widehat{F} =0$ and $F \equiv 0$. Consequently the orthonormal set $\left\{(2 \pi)^{d/2} W(e_n,e_m) \right\}_{n,m}$ is complete.
\end{proof}

\section{The optimization problem}

Let $F \in L^2 (\mathbb{R}^{2d})$ and $\mathcal{E}=\left\{e_n \right\}_{n \in \mathbb{N}}$ be some orthonormal basis of $L^2 (\mathbb{R}^d)$. We may thus write:
\begin{equation}
F(z)= \sum_{n,m \in \mathbb{N}} f_{n,m} W (e_n, e_m) (z),
\label{eq1}
\end{equation}
where the coefficients $f_{n,m}$ are given by:
\begin{equation}
f_{n,m}=(2 \pi)^d <F,W(e_n,e_m)>_{L^2 (\mathbb{R}^{2d})},
\label{eq13}
\end{equation}
If $F$ is a real function, then
\begin{equation}
\overline{f_{n,m}}=f_{m,n}, \hspace{1 cm} \forall n, m \in \mathbb{N}.
\label{eq2}
\end{equation}
Moreover, we have (cf. (\ref{eqMoyalIdentity1})):
\begin{equation}
||F||_{L^2(\mathbb{R}^{2d})}^2 = \frac{1}{(2 \pi)^d}\sum_{n,m \in \mathbb{N}} |f_{n,m}|^2 < \infty.
\label{eq3}
\end{equation}
Given some $\psi \in L^2 (\mathbb{R}^d)$, we can also expand it in the basis $\mathcal{E}$:
\begin{equation}
\psi (x) = \sum_{n \in \mathbb{N}} c_n e_n (x),
\label{eq4}
\end{equation}
with
\begin{equation}
||\psi||_{L^2(\mathbb{R}^d)}^2 = \sum_{n \in \mathbb{N}} |c_n|^2 < \infty.
\label{eq5}
\end{equation}
This entails:
\begin{equation}
W \psi (z)= \sum_{n, m \in \mathbb{N}} c_n \overline{c_m} W (e_n, e_m)(z).
\label{eq6}
\end{equation}

In the sequel, $\mathbb{F}$ denotes the infinite matrix with coefficients $\left\{f_{n,m} \right\}_{n, m \in \mathbb{N}}$ and $c$ is the column vector $c^T=(c_1,c_2, c_3, \cdots)$. In view of (\ref{eq3},\ref{eq5}), we have that $\mathbb{F} \in l^2(\mathbb{N}^2)$ and $c \in l^2(\mathbb{N})$. We may regard $\mathbb{F}$ as a bounded linear operator $ l^2(\mathbb{N}) \to l^2(\mathbb{N})$, $c \mapsto \mathbb{F}c$ with operator norm
\begin{equation}
||\mathbb{F}||_{Op}:= sup_{c \in l^2(\mathbb{N}) \backslash \left\{0 \right\}} \frac{||\mathbb{F}c||_{l^2}}{||c||_{l^2}},
\label{eq6.1}
\end{equation}
with $|| \cdot||_{l^2}$ the $l^2$ norm. The boundedness is easily established by the fact that the operator norm is dominated by the $l^2$-norm:
\begin{equation}
||\mathbb{F}||_{Op} \le ||\mathbb{F}||_{l^2} =(2\pi)^{d/2} ||F||_{L^2 (\mathbb{R}^{2d})}.
\label{eq6.2}
\end{equation}

\begin{remark}\label{RemarkRepresentations}
It is interesting to remark that we have three representations of the same object. First of all, we have a self-adjoint Hilbert-Schmidt operator $\widehat{F}=\widehat{F}_+ +\widehat{F}_-$ acting on the Hilbert space $L^2 (\mathbb{R}^d)$. From the Weyl transform, we obtain its counterpart in phase space $\mathcal{W} (\widehat{F})= F =F_+ + F_-$. Finally, using the expansion (\ref{eq1},\ref{eq13}) of $F$ in some orthonormal basis of Wigner functions we obtain yet another representation of $\widehat{F}$ - the matrix $\mathbb{F} \in l^2 (\mathbb{N}^2)$. The important thing is that they all have the same spectrum. That is the eigenvalues of the operator $ \widehat{F}$ are the same as those of the matrix $\mathbb{F}$. Moreover, they also coincide with the eigenvalues of the function $F$ regarded as a pseudodifferential operator $F_{\ast} : L^2 (\mathbb{R}^{2d}) \to L^2 (\mathbb{R}^{2d})$, which acts on $G \in L^2 (\mathbb{R}^{2d})$ as $F_{\ast}(G)= F \ast G := \mathcal{W} \left( \mathcal{W}^{-1} (F) \cdot \mathcal{W}^{-1} (G)\right)$, where $\ast$ is the Moyal star product.
\end{remark}

Recall that we want to find the Wigner function $W \psi^{(0)}$ closest to $F$ in $L^2(\mathbb{R}^{2d})$. This amounts to minimizing the following functional:

\begin{equation}
\mathcal{L} ( c ) := ||F- W \psi||_{L^2(\mathbb{R}^{2d})}^2
\label{eq7}
\end{equation}
From (\ref{eq1},\ref{eq6}) and Moyal's identity, we obtain:
\begin{equation}
\mathcal{L} ( c )
=||F||_{L^2(\mathbb{R}^{2d})}^2  - \frac{2}{(2 \pi)^d} \sum_{n,m \in \mathbb{N}} \overline{c_n} f_{n,m} c_m + \frac{1}{(2 \pi)^d} \left(\sum_{n \in \mathbb{N}} |c_n|^2 \right)^2.
\label{eq8}
\end{equation}

The following can be found in \cite{Jeong} for non-degenerate spectrum. We consider it here for completeness.

\begin{theorem}\label{TheoremMinimizer1}
If $F_+  \equiv 0$, then the minimizing function of (\ref{eq7}) is $W \psi^{(0)} \equiv 0$. Otherwise, it is given by (\ref{eq6}), where $c^{(0)}$ is any eigenvector of $\mathbb{F}$ associated with the largest eigenvalue $\lambda_{max} $:
\begin{equation}
\mathbb{F} c^{(0)}= \lambda_{max} c^{(0)}.
\label{eq8}
\end{equation}
Moreover, the normalization of $\psi^{(0)}$ is such that
\begin{equation}
||\psi^{(0)}||_{L^2 (\mathbb{R}^d)}^2= ||c^{(0)}||_{l^2 }^2 = \lambda_{max}.
\label{eq9}
\end{equation}
The minimal distance is then:
\begin{equation}
\begin{array}{c}
\mbox{min}_{\psi \in L^2 (\mathbb{R}^d) } ||F- W \psi||_{L^2(\mathbb{R}^{2d})}^2= \mathcal{L} ( c^{(0)} ) = \\
\\
=||F||_{L^2 (\mathbb{R}^{2d})}^2 - \frac{\lambda_{max}^2}{(2 \pi)^d}=  ||F||_{L^2 (\mathbb{R}^{2d})}^2 - \frac{||\psi^{(0)}||_{L^2 (\mathbb{R}^d)}^4}{(2 \pi)^d}
\end{array}
\label{eq9.1}
\end{equation}

\end{theorem}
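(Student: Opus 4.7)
The strategy is to reduce the infinite-dimensional minimization of $\mathcal{L}(c)$ in (\ref{eq7}) to a one-variable calculus problem by decoupling the direction of $c$ from its norm. Viewed as an operator on $l^2(\mathbb{N})$, the matrix $\mathbb{F}$ is self-adjoint thanks to (\ref{eq2}) and Hilbert--Schmidt (hence compact) by (\ref{eq3}). The variational characterization of the largest eigenvalue of a compact self-adjoint operator therefore yields
\begin{equation*}
\sum_{n,m\in\mathbb{N}}\overline{c_n}\,f_{n,m}\,c_m \;=\; \langle c,\mathbb{F}c\rangle_{l^2} \;\leq\; \lambda_{\max}\,\|c\|_{l^2}^2,
\end{equation*}
with equality precisely on the (finite-dimensional) eigenspace associated with $\lambda_{\max}$, where $\lambda_{\max}$ denotes the largest eigenvalue of $\mathbb{F}$ (understood to be non-positive when no strictly positive eigenvalue exists).

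Next, setting $r := \|c\|_{l^2}$ and substituting this bound into the expansion of $\mathcal{L}$ gives
\begin{equation*}
\mathcal{L}(c) \;\geq\; \|F\|_{L^2(\mathbb{R}^{2d})}^2 \;-\; \frac{2\lambda_{\max}}{(2\pi)^d}\,r^2 \;+\; \frac{1}{(2\pi)^d}\,r^4.
\end{equation*}
Elementary calculus on the right-hand side in $r\geq 0$ gives stationary points $r=0$ and, only when $\lambda_{\max}>0$, $r^2=\lambda_{\max}$. When $\lambda_{\max}>0$, the global minimum is $\|F\|_{L^2(\mathbb{R}^{2d})}^2-\lambda_{\max}^2/(2\pi)^d$, attained at $r^2=\lambda_{\max}$; by the equality clause, it is realized exactly by any eigenvector $c^{(0)}$ of $\mathbb{F}$ satisfying $\mathbb{F}c^{(0)}=\lambda_{\max}c^{(0)}$ with $\|c^{(0)}\|_{l^2}^2=\lambda_{\max}$, which reproduces (\ref{eq9}) and (\ref{eq9.1}). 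When $\lambda_{\max}\leq 0$, the right-hand side is minimized at $r=0$, so $\psi^{(0)}\equiv 0$.

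It remains to match the dichotomy ``$\lambda_{\max}>0$ vs.\ $\lambda_{\max}\leq 0$'' with ``$F_+\not\equiv 0$ vs.\ $F_+\equiv 0$''. Combining (\ref{eqextra2}) and (\ref{eq13}) gives $\langle \widehat{F}e_m, e_n\rangle_{L^2(\mathbb{R}^d)}=f_{n,m}/(2\pi)^d$, so the matrix of $\widehat{F}$ in the basis $\{e_n\}$ is $(2\pi)^{-d}\mathbb{F}$; the two spectra therefore share the same sign pattern, as already noted in Remark \ref{RemarkRepresentations}. Since the Weyl transform is a bijection, $F_+=\mathcal{W}(\widehat{F}_+)\equiv 0$ is equivalent to $\widehat{F}_+=0$, i.e.\ to the absence of positive eigenvalues of $\widehat{F}$, which is the same as $\lambda_{\max}\leq 0$.

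The main obstacle is mostly bookkeeping rather than conceptual: verifying that the supremum in the Rayleigh quotient is actually attained (which is where the Hilbert--Schmidt, hence compact, property of $\mathbb{F}$ enters crucially), and carefully tracking the factors of $(2\pi)^{-d}$ that travel between the operator, matrix and phase-space representations of the same object.
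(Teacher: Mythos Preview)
Your argument is correct and takes a genuinely different route from the paper's. The paper proceeds via first-order optimality: it invokes existence of a minimizer from the companion paper \cite{Ben}, computes the Fr\'echet derivative $\partial\mathcal{L}/\partial\overline{c_n}$, and reads off from the stationarity condition that any minimizer must satisfy $\mathbb{F}c^{(0)}=\|c^{(0)}\|_{l^2}^2\,c^{(0)}$; substituting back then singles out the eigenvector with eigenvalue $\lambda_{\max}$. Your approach instead establishes a global lower bound directly, by combining the Rayleigh-quotient inequality $\langle c,\mathbb{F}c\rangle_{l^2}\leq\lambda_{\max}\|c\|_{l^2}^2$ with a one-variable optimization in $r=\|c\|_{l^2}$, and then exhibits the minimizer via the equality case. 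The gain is that your argument is self-contained: existence of the minimizer falls out for free, without appealing to \cite{Ben}, and the characterization of \emph{all} minimizers (not just the critical points) is immediate from the equality clause. The paper's route, on the other hand, is closer in spirit to the general calculus-of-variations framework and makes the eigenvalue equation (\ref{eq8}) appear as an Euler--Lagrange condition, which connects more transparently to the subsequent perturbation analysis in Section~5. Your treatment of the dichotomy $F_+\equiv 0\Leftrightarrow\lambda_{\max}\leq 0$ via the sign-preserving correspondence between the spectra of $\widehat{F}$ and $\mathbb{F}$ is also sound.
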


\begin{proof}
In \cite{Ben} we proved the existence of a global minimizer. In the calculus of variations, if $c^{(0)}$ is a minimizer, then the functional $\mathcal{L}$ has to be stationary at $c^{(0)}$ \cite{Jahn,Jost}. In other words, the Fr\'echet derivative of (\ref{eq7}) with respect to the real and imaginary parts of $c_n $, or equivalently with respect to $c_n$ and $\overline{c_n}$, have to vanish identically. Imposing a vanishing derivative with respect to $c_n$ is equivalent to doing the same with respect to $\overline{c_n}$: one equation is obtained from the other by complex conjugation. In particular, the Fr\'echet derivative of (\ref{eq7}) with respect to $\overline{c_n}$ yields:
\begin{equation}
\frac{\partial {\mathcal L}}{\partial \overline{c_n}}= - \frac{2}{(2 \pi)^d} \sum_{m \in \mathbb{N}} f_{n,m} c_m + \frac{2}{(2 \pi)^d} ||c||_{l^2 }^2 c_n.
\label{eq10}
\end{equation}
Thus the stationarity condition at $c^{(0)}$ becomes:
\begin{equation}
\sum_{m \in \mathbb{N}} f_{n,m} c_m^{(0)}=||c^{(0)}||_{l^2 }^2 c_n^{(0)},
\label{eq11}
\end{equation}
for all $n \in \mathbb{N}$. In other words, $c^{(0)}$ is an eigenvector of $\mathbb{F}$ with eigenvalue $||c^{(0)}||_{l^2 }^2 \ge 0$. If $F_+ \equiv 0$, then $||c^{(0)}||_{l^2 }^2=0$ and the minimizing solution is $W \psi^{(0)} \equiv 0$. Alternatively, if $F_+$ is not identically zero, then it follows that
\begin{equation}
\mathcal{L} (c^{(0)})= ||F||_{L^2(\mathbb{R}^{2d})}^2 - \frac{||c^{(0)}||_{l^2 }^4}{(2 \pi)^d},
\label{eq12}
\end{equation}
where $||c^{(0)}||_{l^2 }^2$ is one of the eigenvalues of $\mathbb{F}$. Obviously, (\ref{eq12}) is minimal if $||c^{(0)}||_{l^2 }^2$ is equal to the largest eigenvalue $\lambda_{max}$.
\end{proof}

The problem we want to address is how to compute $\lambda_{max}$ and $c^{(0)}$. If the matrix $\mathbb{F}$ is finite dimensional, there are good approximation techniques to compute eigenvalues and eigenvectors such as the Rayleigh-Ritz method \cite{Jia}. Here however we want to focus on the infinite dimensional case.

Several approaches can be considered. In \cite{Boudreaux,Jeong} the authors considered the "weighted Wigner distribution" $W_{\Gamma} \psi (x,k)$ (\ref{eqJeong1}). To deal with the infinite dimensional problem, they chose to use a discrete-time Wigner function.

Our strategy is different and consists of truncating the infinite eigenvalue problem at some finite order $N \in \mathbb{N}$. We cannot hope to obtain (in general) the exact solution, but we can derive precise estimates for the truncated version.

Let us then explain our approach in more detail. We are given some real-valued function $F \in L^2 (\mathbb{R}^{2d})$. We evaluate its norm $||F||_{L^2 (\mathbb{R}^{2d})}$, choose a particular orthonormal basis $\mathcal{E}$ and compute the expansion coefficients (\ref{eq13}) for $n,m=1,2, \cdots, N$.

We next obtain the truncated $N \times N$ matrix $\mathbb{F}^{(N)}$. To fix the order $N$, we use the following criterion. Let
\begin{equation}
F^{(N)} (z)= \sum_{n,m=1}^N f_{n,m} W(e_n,e_m) (z)
\label{eq14}
\end{equation}
be the truncated function. The relative error is given by:
\begin{equation}
0 < \frac{||F-F^{(N)} ||_{L^2 (\mathbb{R}^{2d})}}{||F||_{L^2 (\mathbb{R}^{2d})}} < 1
\label{eq15}
\end{equation}
We choose $N = N(\epsilon)$ to be the smallest order for which the relative error is smaller than some given value $\epsilon$:
\begin{equation}
\frac{||F-F^{(N)} ||_{L^2 (\mathbb{R}^{2d})}}{||F||_{L^2 (\mathbb{R}^{2d})}} < \epsilon.
\label{eq16}
\end{equation}
Next, we compute the eigenvalues and eigenvectors of the truncated matrix $\mathbb{F}^{(N)}$. As before, we write its positive eigenvalues as a decreasing sequence
\begin{equation}
\mu_1^{(N)} \ge \mu_2^{(N)} \ge \cdots \ge \mu_{N_+}^{(N)},
\label{eq17}
\end{equation}
and its negative eigenvalues as an increasing sequence
\begin{equation}
\mu_{-1}^{(N)} \le \mu_{-2}^{(N)} \le \cdots \le \mu_{-N_-}^{(N)}.
\label{eq17.1}
\end{equation}
Notice that zero may also be an eigenvalue of $\mathbb{F}^{(N)}$. Let $N_K$ denote the dimension of the kernel of $\mathbb{F}^{(N)}$. If it is trivial, then $N_K=0$. We thus have:
\begin{equation}
N_+ + N_- +N_K =N.
\label{eq17.2.0}
\end{equation}

We denote by $\mathcal{C}^{(N)} = \left\{c_j^{(N)} \right\} = \left\{ c_1^{(N)}, \cdots, c_{N_+}^{(N)},c_{-1}^{(N)}, \cdots, c_{-N_-}^{(N)}\right\}$ an orthonormal set of eigenvectors:
\begin{equation}
\mathbb{F}^{(N)} c_{j}^{(N)} = \mu_{j}^{(N)} c_{j}^{(N)}, \hspace{1 cm} \overline{c_j^{(N)}} \cdot c_k^{(N)}= \delta_{j,k} ,
\label{eq17}
\end{equation}
with $j,k \in \left\{-N_-, \cdots,-1,1, \cdots, N_+ \right\} $. We also consider an orthonormal set $\mathcal{D}^{(N)}= \left\{d_j^{(N)} \right\}= \left\{ d_1^{(N)}, \cdots, d_{N_K}^{(N)}\right\}$ spanning $Ker(\mathbb{F}^{(N)})$. Thus $\mathcal{B}^{(N)}:=\mathcal{C}^{(N)} \cup \mathcal{D}^{(N)}$ is an orthonormal basis for $\mathbb{C}^N$:
\begin{equation}
\mathcal{B}^{(N)}= \left\{ c_1^{(N)}, \cdots, c_{N_+}^{(N)}, d_1^{(N)}, \cdots, d_{N_K}^{(N)},  c_{-N_-}^{(N)}, \cdots, c_{-1}^{(N)} \right\}  .
\label{eq17.1}
\end{equation}
We can write $\mathcal{B}^{(N)}$ in the compact form
\begin{equation}
\mathcal{B}^{(N)}= \left\{e_{\alpha}^{(N)} \right\}_{\alpha \in \mathcal{A}^{(N)}}
\label{eq17.1.A}
\end{equation}
where $\mathcal{A}^{(N)}= \left\{1, \cdots, N \right\}$, and where
\begin{equation}
e_{\alpha}^{(N)} = \left\{
\begin{array}{l l}
c_{\alpha}^{(N)} & , ~\alpha=1, \cdots, N_+ \\
d_{\alpha-N_+}^{(N)} & , ~\alpha=N_+ +1, \cdots, N_+ +N_K \\
c_{\alpha -N-1}^{(N)} & , ~\alpha=N_+ + N_K + 1, \cdots, N
\end{array}
\right.
\label{eq17.1.B}
\end{equation}
We thus have
\begin{equation}
 \overline{e_{\alpha}^{(N)}} \cdot e_{\beta}^{(N)} = \delta_{\alpha , \beta},
\label{eq17.2}
\end{equation}
for all $\alpha, \beta \in \mathcal{A}^{(N)}$.

Likewise, we consider an orthonormal basis for $l^2 (\mathbb{N})$
\begin{equation}
\mathcal{B}= \left\{ c_1,  c_2, \cdots,d_1, d_2,\cdots,  c_{-2},  c_{-1} \right\} = \left\{e_{\alpha} \right\}_{\alpha \in \mathcal{A}},
\label{eq17.3}
\end{equation}
where
\begin{equation}
\mathbb{F} c_j = \mu_j c_j, \hspace{1 cm} \mathbb{F} c_{-j} = \mu_{-j} c_{-j},
\label{eq17.4}
\end{equation}
for $j=1,2, \cdots$, and where $d_1,d_2, \cdots$ span $Ker (\mathbb{F})$. Moreover, $\mathcal{A}$ is a countable index set that yields all the elements of $\mathcal{B}$ in a unified description.

Our purpose is to approximate the largest eigenvalue $\mu_1$ with $\mu_1^{(N)}$ and, if possible, the eigenvector $c_1$ (which is equal to $c^{(0)}$ in the notation of Theorem \ref{TheoremMinimizer1}) with $c_1^{(N)}$. To have some control over the quality of the approximation, we shall derive estimates for $|\mu_1-\mu_1^{(N)}|$ and $||c_1-c_1^{(N)}||_{l^2}$ in terms of $\epsilon$.

\section{A particular case in $d=1$}

Before we address the general case, we consider a simplified situation in one-dimension. In this case a natural choice of orthogonal basis would be the hermite functions $(h_n)$, the eigenfunctions of the harmonic oscillator \cite{Thangavelu}:
\begin{equation}
h_n(x)= (-1)^n e^{x^2/2} \frac{d^n}{d x^n} e^{- x^2}, \hspace{1 cm} n=0,1,2, \cdots
\label{eqHermiteFunctions1A}
\end{equation}
They can be normalized as follows:
\begin{equation}
e_n(x)= \frac{h_n(x)}{(2^n n!\sqrt{\pi})^{1/2}}, \hspace{1 cm} < e_n,e_m >_{L^2 (\mathbb{R})}= \delta_{n,m}
\label{eqHermiteFunctions1B}
\end{equation}

The corresponding Wigner functions are \cite{Wong}:
\begin{equation}
W(e_{j+k},e_j)(z)= \frac{(-1)^j}{\pi} \sqrt{\frac{j!}{(j+k)!} }\overline{a}^k L_j^k (2 |z|^2)e^{-|z|^2}, \hspace{1 cm} j,k=0,1,2, \cdots,
\label{eqHermiteFunctions1}
\end{equation}
where
\begin{equation}
a= \sqrt{2}(x+ik)
\label{eqHermiteFunctions2}
\end{equation}
and $L_j^k $ are the Laguerre polynomials:
\begin{equation}
\begin{array}{c}
L_n^{\alpha} (x)= \frac{x^{- \alpha} e^x}{n!} \frac{d^n}{d x^n} (e^{-x} x^{\alpha +n})=\\
\\
= \sum_{k=0}^n \frac{(\alpha +n)(\alpha +n-1) \cdots (\alpha +k +1)}{(n-k)! k!} (-x)^k,
\end{array}
\label{eqHermiteFunctions3}
\end{equation}
for $x >0$ and $n=0,1,2, \cdots$. The Wigner functions $W(e_j, e_{j+k})$ can be obtained from (\ref{eqHermiteFunctions1}), by noticing that
\begin{equation}
W(f,g)(z)= \overline{W(g,f)(z)}.
\label{eqHermiteFunctions4}
\end{equation}

We should add a word of caution concerning the notation of eq.(\ref{eqHermiteFunctions2}). In complex analysis the letter $z$ is used to denote complex numbers such the one in (\ref{eqHermiteFunctions2}), $\sqrt{2}(x+ik) \in \mathbb{C}$. This is also the notation in analytic (or poly-analytic) time-frequency representations, such as the Bargmann (or poly-Bargmann) transforms \cite{Abreu1,Abreu4,Haimi}. However, here we have reserved the letter $z$ to denote the {\it real} phase space point $z=(x,k) \in \mathbb{R}^2$. This is the reason for choosing the notation $a$. Notice that this is closer to the physicists notation, where $a$, $\overline{a}$ can be seen as the Weyl symbols of the annihilation and creation operators, respectively.

Before we continue, let us make the following observation. The diagonal Wigner functions associated with the Hermite functions (\ref{eqHermiteFunctions1}) are radial:
\begin{equation}
W e_n (z)= \frac{(-1)^n}{\pi} L_n^0 (2 |z|^2) e^{- |z|^2}.
\label{eqHermiteFunctions1.A}
\end{equation}
To stress this fact we will rewrite (\ref{eqHermiteFunctions1.A}) as
\begin{equation}
W e_n (z)= F_n (\eta (z)),
\label{eqHermiteFunctions1.B}
\end{equation}
where
\begin{equation}
F_n(\eta)= \frac{(-1)^n}{\pi} L_n^0 (2 \eta^2) e^{- \eta^2},
\label{eqHermiteFunctions1.C}
\end{equation}
and
\begin{equation}
\eta (z) = |z|.
\label{eqHermiteFunctions1.D}
\end{equation}
\begin{theorem}\label{TheoremRadial1}
Let $F \in L^2 (\mathbb{R}^2)$ be a function of $\rho(z)$ only,
\begin{equation}
F(z)= G( \rho (z))
\label{eqRadial1}
\end{equation}
where $ \sqrt{\rho} G (\rho) \in L^2 (0, + \infty)$, and
\begin{equation}
\rho (z):= \left( (z-z_0) \cdot A (z-z_0) \right)^{1/2}.
\label{eqRadial2}
\end{equation}
Here $z_0 \in \mathbb{R}^2$ and $A$ is a real, symmetric, positive-definite $2 \times 2$ matrix with
\begin{equation}
\det{A}=1.
\label{eqRadial2.1}
\end{equation}
Then, we have:
\begin{equation}
F(z)= \sum_{n=0}^{\infty} \mu_n F_n (\rho (z)),
\label{eqRadial3}
\end{equation}
where $F_n $ is the Wigner function associated with $n$-th eigenstate of the harmonic oscillator given by (\ref{eqHermiteFunctions1.C}), and
\begin{equation}
\mu_n= 4 \pi(-1)^n \int_0^{\infty} G(\rho)  L_n (2 \rho^2) e^{- \rho^2} \rho d \rho.
\label{eqRadial4}
\end{equation}
\end{theorem}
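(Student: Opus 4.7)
My plan is to reduce the theorem to a one-dimensional Laguerre expansion. First I would change variables by setting $w = A^{1/2}(z-z_0)$, where $A^{1/2}$ exists since $A$ is symmetric positive-definite. Because $\det A = 1$, this map has unit Jacobian, so $\int_{\mathbb{R}^2} |F(z)|^2 \, dz = \int_{\mathbb{R}^2} |G(|w|)|^2 \, dw = 2\pi \int_0^\infty |G(r)|^2 r \, dr$, matching the hypothesis $\sqrt{\rho}\, G(\rho) \in L^2(0,+\infty)$, and also $\rho(z) = |w|$. Consequently, proving the expansion (\ref{eqRadial3}) in the original variable $z$ is equivalent, both pointwise and in $L^2$, to proving the identity $G(r) = \sum_{n=0}^\infty \mu_n F_n(r)$ as elements of $L^2((0,+\infty),\, r\, dr)$; the case $z_0=0$, $A=I$ is thus representative.

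The second step is to establish that $\{\sqrt{2\pi}\, F_n(|\cdot|)\}_{n\ge 0}$ is an orthonormal basis of the subspace of radial functions in $L^2(\mathbb{R}^2)$. Orthonormality follows from Moyal's identity applied to the $F_n = W e_n$ (equivalently, a direct computation: substituting $u = 2r^2$ in $\int_0^\infty F_n(r) F_m(r) \cdot 2\pi r\, dr$ reduces it to $\frac{1}{2\pi}\int_0^\infty L_n(u)L_m(u)\,e^{-u}\,du = \frac{1}{2\pi}\delta_{nm}$). For completeness in the radial sector, the same substitution $u=2r^2$ identifies the radial $L^2(\mathbb{R}^2)$-subspace isometrically with $L^2(0,+\infty)$, and maps $\sqrt{2\pi} F_n(r)$ onto (a constant multiple of) the classical Laguerre function $L_n(u) e^{-u/2}$, which is known to form an orthonormal basis of $L^2(0,+\infty)$.

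Then I would compute the expansion coefficient by projection:
\begin{equation*}
\mu_n = 2\pi \int_{\mathbb{R}^2} G(|w|) F_n(|w|)\, dw = 4\pi^2 \int_0^\infty G(r) F_n(r)\, r\, dr = 4\pi(-1)^n \int_0^\infty G(\rho) L_n(2\rho^2) e^{-\rho^2} \rho\, d\rho,
\end{equation*}
recovering (\ref{eqRadial4}). Finally, undoing the substitution $w = A^{1/2}(z - z_0)$ yields (\ref{eqRadial3}), with $L^2$-convergence in $z$ guaranteed by the unit-Jacobian change of variables. The only nontrivial point is the completeness assertion in the second step; everything else is a bookkeeping exercise with Laguerre polynomials. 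One should also briefly remark why $A^{1/2}$ preserving the radial structure combined with $\det A=1$ is the exact amount of structure needed, since without $\det A = 1$ one would pick up a Jacobian factor that would have to be absorbed into a redefined profile $G$.
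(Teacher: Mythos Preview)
Your proposal is correct and follows essentially the same route as the paper: both reduce the statement to the completeness of the Laguerre system $\{L_n(u)e^{-u/2}\}$ in $L^2(0,+\infty)$, which is precisely the content the paper cites from \cite{Wong} and \cite{Janssen}. The only stylistic difference is that the paper works directly with $G$ as a one-variable function and then composes with $\rho(z)$, whereas you first linearly change variables in $\mathbb{R}^2$ via $w=A^{1/2}(z-z_0)$ and then expand in the radial sector; these are equivalent, and your version makes the role of $\det A=1$ more transparent.
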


\begin{proof}
A function $G$ with the conditions stated in the theorem admits the following "diagonalization" (see Section 24 of \cite{Wong} and \cite{Janssen}):
\begin{equation}
G(\rho)= \sum_{n=0}^{\infty} \mu_n F_n (\rho),
\label{eqRadial7}
\end{equation}
where the $\mu_n$ are given by (\ref{eqRadial4}), and $F_n$ is given by (\ref{eqHermiteFunctions1.C}). From (\ref{eqRadial1}) the result follows.
\end{proof}

Before we proceed, let us make the following remarks.
\begin{remark}\label{RemarkRadial1}
First of all, if we set $z_0=0$ and $A=Id$ in (\ref{eqRadial2}), then $\rho(z)=|z|$ and the function $F(z)=G(|z|)$ is radial. By considering an arbitrary positive matrix $A$, we can solve more problems, such as the one in Example \ref{Example1} below.

Secondly, let us point out that (\ref{eqRadial2.1}) does not pose any serious restriction. Indeed, suppose that $\det (A) \ne 1$. Define $B=\frac{A}{\sqrt{\det(A)}} $. Then $\det B=1$, and since
\begin{equation}
 \rho= \sqrt[4]{\det A} \widetilde{\rho} ,
\label{eqRadial7.1}
\end{equation}
where
\begin{equation}
\widetilde{\rho}= \left((z-z_0) \cdot B (z-z_0) \right)^{\frac{1}{2}},
\label{eqRadial7.2}
\end{equation}
we conclude that $F$ can also be regarded as a function of $\widetilde{\rho}$ only and the same results follow.
\end{remark}

\begin{remark}\label{RemarkRadial2}
A function with the "radial" property stated in Theorem \ref{TheoremRadial1} seems to be "diagonalized" in (\ref{eqRadial3}). Although this is true, some care is required to make this assertion. Indeed, we have to make sure that $F_n (\rho (z))$ are Wigner functions. To show that this is indeed the case, we recall the following symplectic covariance property of Wigner distributions \cite{Folland,Gosson,Grochenig,Leray,Shale,Weil,Wong} and Williamson's Theorem \cite{Williamson}. Let $Sp_2 (d)$ be the metaplectic group, i.e. the two-fold cover of $Sp(d)$. For each $S \in Sp(d)$, there exist $\pm \widetilde{S} \in Sp_2 (d)$ which project onto $S$. The metaplectic representation $Mp(d)$ is a unitary representation of $Sp_2(d)$, $ Sp_2(d) \ni \widetilde{S} \mapsto \mu (\widetilde{S})$, with the property that:
\begin{equation}
\mu (\widetilde{S})^{-1} \widehat{A} \mu (\widetilde{S}) \overset{\mathrm{Weyl}%
}{\longleftrightarrow} a \circ S ,
\label{eqMetaplectic1}
\end{equation}
for $\widehat{A}: \mathcal{S} (\mathbb{R}^d) \to \mathcal{S}^{\prime} (\mathbb{R}^d)
$ a Weyl operator with Weyl symbol $a \in \mathcal{S}^{\prime} (\mathbb{R}^{2d}) $. In particular, for Wigner functions, we have:
\begin{equation}
W \left( \mu (\widetilde{S})f ,\mu (\widetilde{S}) g \right)(z) =W (f,g)(S^{-1}z),
\label{eqMetaplectic2}
\end{equation}
for all $f,g \in \mathcal{S} (\mathbb{R}^d)$. By usual density arguments this extends to $L^2 (\mathbb{R}^d)$.

Moreover, the set of Wigner functions is left invariant under phase space translations. Altogether, if $W \psi (z)$ is a Wigner function, then under an affine symplectic transformation $W \psi (Sz -z_0)$ ($S \in Sp(d)$, $z_0 \in \mathbb{R}^{2d}$) we obtain another Wigner function.

Now, let us go back to the $2 \times 2$ matrix $A$ in (\ref{eqRadial2}). Williamson's Theorem \cite{Williamson} states that there exists $S \in Sp(1)$ and a positive number $ \lambda $ (called a Williamson invariant) such that $A= \lambda S^T S$.
By assumption $\det (A)=1$ and thus $\lambda=1$. Hence:
\begin{equation}
A= S^T S.
\label{eqMetaplectic3}
\end{equation}
We thus have that
\begin{equation}
\rho^2 (z) = (z-z_0) \cdot A (z-z_0)=  \left( S (z-z_0) \right) \cdot \left( S (z-z_0) \right)
\label{eqMetaplectic4}
\end{equation}
It follows that $F_n (\rho (z))$ is obtained from $W e_n(z)$ by the affine symplectic transformation:
\begin{equation}
z \mapsto Sz- S z_0.
\label{eqMetaplectic5}
\end{equation}
Thus $F_n (\rho (z))$ is again a Wigner function.

\end{remark}

\begin{remark}\label{RemarkRadial3}
If a function $F$ is a function of $\rho$ (\ref{eqRadial2}) only, as in Theorem \ref{TheoremRadial1}, and $F_+$ is not identically zero, then we can solve the optimization problem exactly by the following (finite) iterative procedure.

First of all notice that from Remarks \ref{RemarkRepresentations} and \ref{RemarkRadial2}, the function $F$ is diagonalized in (\ref{eqRadial3}) and thus the coefficients $\mu_n$ are in fact the eigenvalues of $\widehat{F}$, $\mathbb{F}$ and $F_\ast$. We then proceed as follows.

\vspace{0.3 cm}
\noindent
{\bf 1)} Compute the eigenvalues (\ref{eqRadial4}) until you find the first positive one, say $\mu_{k_1}$. Define
\begin{equation}
F^{(k_1)} (z):= \sum_{n=0}^{k_1} \mu_n F_n (\rho (z)).
\label{eqMetaplectic6}
\end{equation}
If
\begin{equation}
||F-F^{(k_1)}||_{L^2 (\mathbb{R}^{2d})} \le \frac{\mu_{k_1}}{\sqrt{2 \pi}},
\label{eqMetaplectic7}
\end{equation}
then we conclude that
\begin{equation}
|\mu_l| \le \sqrt{\sum_{n=k_1+1}^{+ \infty} | \mu_n|^2} < \mu_{k_1},
\label{eqMetaplectic8}
\end{equation}
for all $l=k_1+1, k_1+2, \cdots$. Consequently, $\mu_{k_1}$ is the largest eigenvalue of $\widehat{F}$ and the optimal solution is
\begin{equation}
W \psi^{(0)}(z)=  \mu_{k_1} F_{k_1} (\rho (z)).
\label{eqMetaplectic9}
\end{equation}

\vspace{0.3 cm}
\noindent
{\bf 2)} If (\ref{eqMetaplectic7}) does not hold, then look for the next positive eigenvalue $\mu_{k_2}$ $(k_2 > k_1)$ and set
\begin{equation}
F^{(k_2)} (z):= \sum_{n=0}^{k_2} \mu_n F_n (\rho (z)).
\label{eqMetaplectic10}
\end{equation}
Define $K \in \left\{k_1,k_2 \right\}$, such that
\begin{equation}
\mu_K = \max \left\{\mu_{k_1},\mu_{k_2} \right\}.
\label{eqMetaplectic11}
\end{equation}
If
\begin{equation}
||F-F^{(k_2)}||_{L^2 (\mathbb{R}^{2d})} \le \frac{\mu_K}{\sqrt{2 \pi}},
\label{eqMetaplectic12}
\end{equation}
then we conclude that
\begin{equation}
|\mu_l| \le \sqrt{\sum_{n=k_2+1}^{+ \infty} | \mu_n|^2} < \mu_K,
\label{eqMetaplectic13}
\end{equation}
for all $l=k_2+1, k_2+2, \cdots$. Consequently, $\mu_K$ is the largest eigenvalue of $\widehat{F}$ and the optimal solution is
\begin{equation}
W \psi^{(0)}(z)=  \mu_{K} F_{K} (\rho (z)).
\label{eqMetaplectic14}
\end{equation}

\vspace{0.3 cm}
\noindent
{\bf 3)} If (\ref{eqMetaplectic12}) is still not valid, then we proceed in the same fashion and obtain a set of positive eigenvalues $\mu_{k_1}, \mu_{k_2}, \cdots, \mu_{k_n}$ $(0 \le k_1 < k_2 < \cdots < k_n)$. As before, we set
\begin{equation}
F^{(k_n)} (z):= \sum_{n=0}^{k_n} \mu_n F_n (\rho (z)),
\label{eqMetaplectic15}
\end{equation}
and define $K \in \left\{k_1,k_2, \cdots, k_n \right\}$ such that \begin{equation}
\mu_K = \max \left\{\mu_{k_1},\mu_{k_2} , \cdots, \mu_{k_n} \right\}.
\label{eqMetaplectic16}
\end{equation}
If
\begin{equation}
||F-F^{(k_n)}||_{L^2 (\mathbb{R}^{2d})} \le \frac{\mu_K}{\sqrt{2 \pi}},
\label{eqMetaplectic17}
\end{equation}
then we conclude that
\begin{equation}
|\mu_l| \le \sqrt{\sum_{n=k_n+1}^{+ \infty} | \mu_n|^2} < \mu_K,
\label{eqMetaplectic18}
\end{equation}
for all $l=k_n+1, k_n+2, \cdots$. Consequently, $\mu_K$ is the largest eigenvalue of $\widehat{F}$ and the optimal solution is given by (\ref{eqMetaplectic14}).

\vspace{0.3 cm}
\noindent
Notice that condition (\ref{eqMetaplectic17}) will eventually be satisfied for some $k_n \in \mathbb{N}$, since $\sum_{n=k}^{+ \infty} | \mu_n|^2 \to 0$ as $k \to \infty$.
\end{remark}

\begin{example}\label{Example1}
A particular instance of the previous construction is a Gaussian of the form
\begin{equation}
F(z)= N \exp \left(- \alpha (z-z_0)\cdot A (z-z_0) \right),
\label{eqExample11}
\end{equation}
where $N$ and $\alpha$ are arbitrary positive constants, and where we assume that $A$ is a real, symmetric, positive-definite $2 \times 2$ matrix with $\det (A)=1$. A straightforward calculation yields for $n \ge 1$:
\begin{equation}
\begin{array}{c}
\mu_n = 4 \pi (-1)^n N \int_0^{\infty} e^{-(1+ \alpha) \rho^2} \rho L_n (2 \rho^2) d \rho=\\
\\
=  4 \pi (-1)^n N \sum_{k=0}^n \left(
\begin{array}{c}
n\\
k
\end{array}
\right) \frac{(-2)^k}{k!} \int_0^{\infty} e^{-(1+ \alpha) \rho^2} \rho^{2 k +1} d \rho =\\
\\
=\frac{(-1)^n 2 \pi N}{1 + \alpha} \sum_{k=0}^n \left(
\begin{array}{c}
n\\
k
\end{array}
\right) \left(- \frac{2}{1+ \alpha} \right)^k = \frac{(-1)^n 2 \pi N}{1 + \alpha} \left(1- \frac{2}{1+ \alpha}\right)^n= \frac{2 \pi N}{1+ \alpha} \left( \frac{1- \alpha}{1 + \alpha} \right)^n,
\end{array}
\label{eqExample12}
\end{equation}
and
\begin{equation}
\mu_0= \frac{2 \pi N}{1+ \alpha}.
\label{eqExample13}
\end{equation}
Clearly, if $\alpha =1$, then $\mu_n =\pi N \delta_{n,0}$, and the largest eigenvalue is $\mu_0$. If $\alpha <1$, then all the eigenvalues are strictly positive. Notice that in this case, the Gaussian satisfies the Robertson-Schr\"odinger uncertainty principle \cite{Narcowich}:
\begin{equation}
A^{-1} + i \alpha J \ge 0,
\label{eqExample14}
\end{equation}
where
\begin{equation}
J=\left(
\begin{array}{c c}
0 & 1\\
-1 & 0
\end{array}
\right)
\label{eqExample15}
\end{equation}
is the standard symplectic matrix. The uncertainty principle (\ref{eqExample14}) is well known to be a necessary and sufficient condition for a Gaussian measure to be the Weyl symbol of a positive trace-class operator \cite{Narconnell,Narcowich}.

Thus, if $\alpha <1$, the sequence (\ref{eqExample12}) is strictly decreasing. Hence, the largest eigenvalue is $\mu_0$.

Finally, if $\alpha >1$, then we have an alternating sequence
\begin{equation}
\mu_n= \frac{(-1)^n 2 \pi N}{1+ \alpha} \left( \frac{\alpha-1}{1 + \alpha} \right)^n, \hspace{1 cm} n \ge 0.
\label{eqExample16}
\end{equation}
But again, since the moduli sequence
\begin{equation}
|\mu_n|= \frac{2 \pi N}{1+ \alpha} \left( \frac{\alpha-1}{1 + \alpha} \right)^n
\label{eqExample17}
\end{equation}
is strictly decreasing, $\mu_0$ is again the largest positive eigenvalue. Consequently, for any $\alpha >0$, the Wigner function closest to the Gaussian measure (\ref{eqExample11}) is:
\begin{equation}
W \psi^{(0)} (z) = \frac{2 \pi  N}{1 + \alpha} F_0 (\rho (z)) = \frac{ 2  N}{1 + \alpha} e^{-  (z-z_0)\cdot A (z-z_0) }.
\label{eqExample18}
\end{equation}
Thus in particular, if we have $F(z)= We_0 (z)= \frac{1}{\pi} e^{- |z|^2}$ (that is: $N=\frac{1}{\pi}$, $\alpha =1$, $A=Id$ and $z_0=0$), then we obtain $W \psi^{(0)} (z) =We_0 (z)$ as expected.
\end{example}

\begin{remark}\label{RemarkToeplitzOps}
Before we conclude this section, we remark that the optimization problem considered in this paper is intimately related with the so-called localization or Toeplitz operators in time-frequency analysis \cite{Daubechies,Flandrin,Ramanathan,Slepian1,Slepian2} and quantum mechanics \cite{Bracken,Lieb}. In \cite{Bracken} the authors addressed the optimization problems
\begin{equation}
\mbox{min}_{\psi \in L^2 (\mathbb{R}) \backslash \left\{ 0 \right\}} \frac{\int_D W \psi (z) dz}{\int_{\mathbb{R}^2} W \psi (z) dz}
\label{eqRemarkToeplitzOps1}
\end{equation}
and
\begin{equation}
\mbox{max}_{\psi \in L^2 (\mathbb{R})\backslash \left\{ 0 \right\}} \frac{\int_D W \psi (z) dz}{\int_{\mathbb{R}^2} W \psi (z) dz}
\label{eqRemarkToeplitzOps2}
\end{equation}
where $D \subset \mathbb{R}$ is some bounded domain whose boundary $\partial D$ is a regular curve.

If we define $F(z)= \chi_D(z) \in L^2 (\mathbb{R}^2)$ to be the characteristic function of $D$, then it is straightforward to prove that the Wigner function $W \psi^{(0)}$ closest in $L^2$ to $F$ is the optimal solution of (\ref{eqRemarkToeplitzOps2}).

In \cite{Bracken,Flandrin,Lieb} the authors proved that Gaussians maximize (\ref{eqRemarkToeplitzOps2}), when $D$ is a disk, a poly-disk or a ball.
\end{remark}

\section{The general approximation procedure}

A crucial point in our derivation will be the Courant-Fischer min-max theorem, which we recapitulate here for completeness.

\begin{theorem}\label{TheoremCourantFischer}
{\bf (Courant-Fischer min-max theorem)} Let $A$ be an $N \times N$ hermitian matrix and write its eigenvalues as a decreasing sequence $\alpha_1 \ge \alpha_2  \ge \cdots \ge \alpha_N$. Then we have:
\begin{equation}
\alpha_j =
sup_{dim(V)=j} ~ inf_{v \in V, ||v||=1} ~  \overline{v} \cdot A v,
\label{eqTheoremCourantFischer1}
\end{equation}
and
\begin{equation}
\alpha_j = inf_{dim(V)=N-j+1}~ sup_{v \in V, ||v||=1} ~ \overline{v} \cdot A v,
\label{eqTheoremCourantFischer2}
\end{equation}
for all $j=1,2, \cdots, N$ and $V$ ranges over all subspaces of $\mathbb{C}^N$ with the indicated dimension. Here $||v||^2 = | v_1|^2 + \cdots |v_N|^2$.
\end{theorem}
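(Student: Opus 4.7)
The plan is to exploit the spectral decomposition of $A$. Since $A$ is hermitian, the spectral theorem provides an orthonormal basis $\{u_1, \ldots, u_N\}$ of $\mathbb{C}^N$ with $A u_i = \alpha_i u_i$ and $\alpha_1 \ge \alpha_2 \ge \cdots \ge \alpha_N$. For any unit vector $v = \sum_{i=1}^N c_i u_i$, the Rayleigh quotient reads
\begin{equation}
\overline{v} \cdot A v = \sum_{i=1}^N \alpha_i |c_i|^2, \qquad \sum_{i=1}^N |c_i|^2 = 1,
\label{eqProofPlanRayleigh}
\end{equation}
i.e.\ a convex combination of the eigenvalues. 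Every bound in the theorem will follow from (\ref{eqProofPlanRayleigh}) once one picks the right subspace.

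To establish (\ref{eqTheoremCourantFischer1}), I would prove the two inequalities separately. For the lower bound on the supremum, I would take the explicit $j$-dimensional subspace $V_0 = \mathrm{span}(u_1, \ldots, u_j)$: for any unit $v \in V_0$, identity (\ref{eqProofPlanRayleigh}) gives $\overline{v}\cdot A v = \sum_{i=1}^j \alpha_i |c_i|^2 \ge \alpha_j$, with equality at $v = u_j$, so the inner infimum on $V_0$ equals $\alpha_j$. For the reverse inequality, let $V$ be an arbitrary $j$-dimensional subspace and set $W = \mathrm{span}(u_j, u_{j+1}, \ldots, u_N)$, which has dimension $N-j+1$. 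Since $\dim V + \dim W = N+1 > N$, the intersection $V \cap W$ is nontrivial and contains a unit vector $v = \sum_{i=j}^N c_i u_i$, for which (\ref{eqProofPlanRayleigh}) yields $\overline{v} \cdot A v \le \alpha_j$. Hence the inner infimum on any such $V$ is at most $\alpha_j$, bounding the supremum from above.

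For (\ref{eqTheoremCourantFischer2}), the cleanest route is to apply (\ref{eqTheoremCourantFischer1}) to $-A$. The eigenvalues of $-A$ in decreasing order are $-\alpha_N, -\alpha_{N-1}, \ldots, -\alpha_1$; evaluating (\ref{eqTheoremCourantFischer1}) for $-A$ at the index $N-j+1$ and using $\sup(-\phi) = -\inf(\phi)$ reproduces (\ref{eqTheoremCourantFischer2}) after a sign flip. Alternatively, one can mirror the previous argument, using $\mathrm{span}(u_j, \ldots, u_N)$ as the extremal $(N-j+1)$-dimensional subspace and intersecting any candidate $V$ with $\mathrm{span}(u_1, \ldots, u_j)$ by the same dimension count. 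The only non-trivial ingredient in either direction is the dimension-counting lemma $\dim V + \dim W > N \Rightarrow V \cap W \ne \{0\}$; this is what upgrades the easy Rayleigh-quotient bound into the minimax equality. Everything else is bookkeeping on convex combinations of eigenvalues, so I would expect the write-up to be short.
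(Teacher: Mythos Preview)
Your argument is the standard textbook proof and is correct: spectral decomposition, a convex-combination bound on the Rayleigh quotient, the explicit subspace $\mathrm{span}(u_1,\ldots,u_j)$ for one inequality, and a dimension-count intersection with $\mathrm{span}(u_j,\ldots,u_N)$ for the other. The reduction of (\ref{eqTheoremCourantFischer2}) to (\ref{eqTheoremCourantFischer1}) via $-A$ is also fine.

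There is nothing to compare against, however: the paper does not prove Theorem~\ref{TheoremCourantFischer}. It is only ``recapitulate[d] here for completeness'' as a classical tool, stated without proof and then invoked in Proposition~\ref{PropositionSequence} and Theorem~\ref{TheoremEstimatesEigenvalues}. So your proposal supplies a proof where the paper deliberately omits one; it is correct, but it is not an alternative to anything in the text.
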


Before we proceed, let us recall that $\left\{\lambda_j \right\}_j$ are the distinct eigenvalues of the matrix $\mathbb{F}$, whereas $\left\{\mu_j \right\}_j$ are its eigenvalues with multiplicities. With our previous notation (\ref{eq17}-\ref{eq17.2}) the eigenvalues of $\mathbb{F}^{(N)}$ are $\left\{\mu_j^{(N)} \right\}_j$. If we rearrange the eigenvalues of $\mathbb{F}^{(N)}$ as a decreasing sequence $\left\{\alpha_j^{(N)} \right\}_j$ (as described in the Courant-Fischer Theorem), we may write:
\begin{equation}
\alpha_j^{(N)}= \left\{
\begin{array}{l l}
\mu_j^{(N)} , & j=1, \cdots, N_+\\
0 , & j=N_+ +1, \cdots, N_+ +N_K\\
\mu_{j-N-1}^{(N)} , & j=N_+ + N_K + 1, \cdots, N
\end{array}
\right.
\label{eqTheoremCourantFischer2.1}
\end{equation}

We follow closely \cite{Koehler} for the estimates of the eigenvalues. We just have to make some adaptations to complex-valued matrices and to the facts that we have the $L^2$-bound (\ref{eq16}) and that the matrix $\mathbb{F}$ is not necessarily positive.

We start by proving that each sequence $\left\{\mu_j^{(N)} \right\}_{N \in \mathbb{N}}$ and $\left\{- \mu_{-j}^{(N)} \right\}_{N \in \mathbb{N}}$ with fixed $j \ge 1$ of non-zero eigenvalues is non-decreasing with respect to $N$ and, since they are bounded, they are convergent.

But before we do that, we remark that we may at various moments denote the vectors $(x_1, \cdots, x_N) \in \mathbb{C}^N$ and $(x_1, \cdots, x_N , 0,0,0, \cdots) \in l^2 (\mathbb{N})$ by the same symbol $x^{(N)}$ according to our convenience.

\begin{proposition}\label{PropositionSequence}
With the previous notation, we have that for $1 \le j \le N_+$
\begin{equation}
\mu_j^{(N)} \nearrow \mu_j, \hspace{1 cm} N \to \infty,
\label{eqPropositionSequence1}
\end{equation}
while
\begin{equation}
\mu_{-k}^{(N)} \searrow \mu_{- k}, \hspace{1 cm} N \to \infty,
\label{eqPropositionSequence2}
\end{equation}
for $1 \le k \le N_-$.
\end{proposition}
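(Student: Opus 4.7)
The plan is to invoke the Courant-Fischer characterisation of Theorem \ref{TheoremCourantFischer} applied both to the finite matrix $\mathbb{F}^{(N)}$ and to the compact self-adjoint operator $\mathbb{F}$ on $l^2(\mathbb{N})$, and then to compare the two by means of the natural zero-padding embedding $V_N := \mathrm{span}(e_1,\dots,e_N) \hookrightarrow l^2(\mathbb{N})$. The key algebraic observation is $\overline{v}\cdot\mathbb{F}^{(N)} v = \langle v,\mathbb{F} v\rangle_{l^2}$ for every $v\in V_N$. Hence the $j$-th eigenvalue of $\mathbb{F}^{(N)}$ in the decreasing ordering is
\[
\alpha_j^{(N)} \;=\; \sup_{\substack{V\subset V_N \\ \dim V = j}}\ \inf_{\substack{v\in V \\ \|v\|=1}}\ \langle v,\mathbb{F} v\rangle_{l^2},
\]
while the compact operator version of Courant-Fischer (valid since $\mathbb{F}\in S_2$ is compact self-adjoint) gives exactly the same formula for $\mu_j$ with the sup now running over all $j$-dimensional subspaces of $l^2(\mathbb{N})$. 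Note that as soon as $\mathbb{F}$ has at least $j$ positive eigenvalues, $\mathbb{F}^{(N)}$ also has at least $j$ for all $N$ sufficiently large, so that $\alpha_j^{(N)} = \mu_j^{(N)}$ eventually.

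Monotonicity is then immediate: the chain $V_N\subset V_{N+1}\subset l^2(\mathbb{N})$ means we take the supremum over nested collections of $j$-dimensional subspaces, so
\[
\mu_j^{(N)} \;=\; \alpha_j^{(N)} \;\le\; \alpha_j^{(N+1)} \;=\; \mu_j^{(N+1)} \;\le\; \mu_j.
\]
Being monotone and bounded, the sequence $\{\mu_j^{(N)}\}_N$ converges to some $\ell_j\le \mu_j$.

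The core of the argument is to show $\ell_j=\mu_j$. I would use the finite-dimensional trial subspace $U_j^*:=\mathrm{span}(c_1,\dots,c_j)$ on which $\mathbb{F}$ has smallest eigenvalue $\mu_j$, together with its compression under the orthogonal projection $\pi_N$ onto $V_N$. Since $U_j^*$ is finite-dimensional, $\varepsilon_N := \sup_{v\in U_j^*,\, \|v\|=1}\|(I-\pi_N)v\|_{l^2} \to 0$, so for $N$ large the space $\pi_N(U_j^*)\subset V_N$ is genuinely $j$-dimensional. Writing $v=\pi_N v+(I-\pi_N)v$, expanding the quadratic form, and using the operator bound $\|\mathbb{F}\|_{\mathrm{Op}}\le(2\pi)^{d/2}\|F\|_{L^2}$ from (\ref{eq6.2}), one obtains uniformly over unit vectors $v\in U_j^*$ the estimate
\[
\langle \pi_N v,\mathbb{F}\,\pi_N v\rangle_{l^2} \;\ge\; \mu_j - 2\|\mathbb{F}\|_{\mathrm{Op}}\varepsilon_N - \|\mathbb{F}\|_{\mathrm{Op}}\varepsilon_N^2.
\]
Dividing by $\|\pi_N v\|^2\ge 1-\varepsilon_N^2$ and plugging the resulting trial subspace into Courant-Fischer for $\mathbb{F}^{(N)}$ yields $\mu_j^{(N)} \ge \mu_j - O(\varepsilon_N)$, which combined with the upper bound gives (\ref{eqPropositionSequence1}). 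The negative-eigenvalue statement (\ref{eqPropositionSequence2}) follows by applying the identical reasoning to $-\mathbb{F}$, or equivalently by using the dual form (\ref{eqTheoremCourantFischer2}) of Courant-Fischer with the trial space $\mathrm{span}(c_{-1},\dots,c_{-k})$, which reverses the inequalities and yields the asserted monotone decrease.

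The main obstacle is the approximation step in the third paragraph: the projected vector $\pi_N v$ is not itself a unit vector, and the lower bound on the Rayleigh quotient must be uniform in $v$ on the unit ball of $U_j^*$. Finite-dimensionality of $U_j^*$ (which makes $\varepsilon_N\to 0$ uniformly) together with the $l^2$-operator bound on $\mathbb{F}$ inherited from $\|F\|_{L^2(\mathbb{R}^{2d})}<\infty$ is exactly what makes the estimate quantitative and uniform, and it is this ingredient that distinguishes the argument from the purely positive or finite-dimensional case treated in \cite{Koehler}.
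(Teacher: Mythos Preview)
Your argument is correct and complete, but it follows a genuinely different route from the paper for the identification of the limit. Both proofs obtain monotonicity in the same way, by reading $\mathbb{F}^{(N)}$ as the compression of $\mathbb{F}$ to $V_N$ and observing that the Courant--Fischer supremum is taken over a nested family of $j$-planes. For the convergence $\mu_j^{(N)}\to\mu_j$, however, the paper argues by contradiction in two stages: first it shows, via a pseudoeigenvector estimate $\|(\mathbb{F}-\mu_{j_0}^{\ast})c_{j_0}^{(N)}\|\to 0$ and the spectral gap, that any subsequential limit $\mu_{j_0}^{\ast}$ must lie in $\mathrm{Spec}(\mathbb{F})$; then it invokes density of $\bigcup_N V_N$ in $l^2$ to rule out $\mu_{j_0}^{\ast}<\mu_{j_0}$.

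Your approach is instead constructive: you project the exact eigenspace $U_j^{\ast}=\mathrm{span}(c_1,\dots,c_j)$ onto $V_N$ and control the Rayleigh quotient on $\pi_N(U_j^{\ast})$ directly, obtaining the quantitative lower bound $\mu_j^{(N)}\ge \mu_j - O(\varepsilon_N)$. This is cleaner and even yields a rate, but note that the paper's pseudoeigenvector inequality (your discarded alternative) is not wasted there: it is precisely the estimate (\ref{eqA}) that drives the eigenvector bound in Theorem~\ref{TheoremEstimatesEigenvectors1}, so the paper's detour pays for itself later. One minor point worth making explicit in your write-up is that every unit vector $w\in\pi_N(U_j^{\ast})$ is of the form $\pi_N v/\|\pi_N v\|$ for a \emph{unit} $v\in U_j^{\ast}$ (since $\pi_N|_{U_j^{\ast}}$ is a bijection once $\varepsilon_N<1$), so your uniform lower bound on the Rayleigh quotient of $\pi_N v$ really does control the full infimum over $\pi_N(U_j^{\ast})$.
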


\begin{proof}
By the Courant-Fischer theorem, we have for $1 \le j \le N$:
\begin{equation}
\alpha_j^{(N)} = sup_{dim V = j} ~ inf_{x^{(N)} \in V, ||x^{(N)}||=1} ~ \overline{x^{(N)}} \cdot \mathbb{F}^{(N)} x^{(N)},
\label{eqPropositionSequence3}
\end{equation}
and
\begin{equation}
\alpha_j^{(N+1)} = sup_{dim V = j} ~ inf_{x^{(N+1)} \in V, ||x^{(N+1)}||=1} ~ \overline{x^{(N+1)}} \cdot \mathbb{F}^{(N+1)} x^{(N+1)}.
\label{eqPropositionSequence4}
\end{equation}
Notice that we can rewrite (\ref{eqPropositionSequence3}) as:
\begin{equation}
\alpha_j^{(N)} = sup_{dim V^{\prime} = j} ~ inf_{x^{(N+1)} \in V^{\prime}, ||x^{(N+1)}||=1} ~ \overline{x^{(N+1)}} \cdot \mathbb{F}^{(N+1)} x^{(N+1)},
\label{eqPropositionSequence5}
\end{equation}
where the supremum is taken over all subspaces $V^{\prime}$ of $\mathbb{C}^{N+1}$ with dimension $j$, such that $x^{(N+1)} \in V^{\prime}$ if and only if its $(N+1)$-th coordinate is zero. From (\ref{eqPropositionSequence4}) and (\ref{eqPropositionSequence5}) it is then obvious that
\begin{equation}
\alpha_j^{(N)} \le \alpha_j^{(N+1)}.
\label{eqPropositionSequence6}
\end{equation}
Thus, for fixed $j$ this is a non-decreasing sequence. In particular, we have:
\begin{equation}
(N+1)_+ \ge N_+
\label{eqPropositionSequence7}
\end{equation}
Thus for $1 \le j \le N_+$, we have:
\begin{equation}
\mu_j^{(N)}=\alpha_j^{(N)} \le \alpha_j^{(N+1)} = \mu_j^{(N+1)}.
\label{eqPropositionSequence8}
\end{equation}
Next, if we replace $\mathbb{F}^{(N)}$ by $- \mathbb{F}^{(N)}$, then: $\alpha_j^{(N)} \to - \alpha_{N-j+1}^{(N)}$. If we then apply the previous conclusions to $-\mathbb{F}^{(N)}$, it follows that:
\begin{equation}
(N+1)_- \ge N_-,
\label{eqPropositionSequence9}
\end{equation}
and
\begin{equation}
\mu_{- j}^{(N)} \ge \mu_{-j}^{(N+1)}
\label{eqPropositionSequence10}
\end{equation}
for $1 \le j \le N_-$.

Concerning convergence, we invoke a familiar theorem for the spectral radius of bounded linear operators on Banach spaces. For any element $\mu^{(N)}$ in the spectrum of $\mathbb{F}^{(N)}$ and for fixed $N \in \mathbb{N}$, we have:
\begin{equation}
|\mu^{(N)}| \le ||\mathbb{F}^{(N)}||_{Op} \le ||\mathbb{F}^{(N)}||_{l^2} \le ||\mathbb{F}||_{l^2}=(2 \pi)^{d/2} ||F||_{L^2 (\mathbb{R}^{2d})}.
\label{eqPropositionSequence11}
\end{equation}
We conclude that the sequences $\left\{\mu_j^{(N)} \right\}_{N \in \mathbb{N}}$ and $\left\{\mu_{-k}^{(N)} \right\}_{N \in \mathbb{N}}$ for fixed $j,k$ are bounded and monotone, and hence convergent.

It remains to prove that the limit of $\mu_j^{(N)}$, for fixed $j \ne 0$, is $\mu_j$ as $N \to \infty$.

We prove the result for the positive eigenvalues $j \ge 1$. The proof for negative eigenvalues is identical.

So assume that the limit as $N \to \infty$ of some $\mu_j^{(N)}$ is not in the spectrum of $\mathbb{F}$. Let $j_0$ be the smallest $j \ge 1$ for which $\mu_j^{(N)} \to \mu_j^{\ast} $, and $\mu_j^{\ast} $ is not in the spectrum of $\mathbb{F}$. Recalling the notation (\ref{eq17.2.0}-\ref{eq17.4}) for the orthonormal eigenvectors of $\mathbb{F}$ and $\mathbb{F}^{(N)}$, we have:
\begin{equation}
\begin{array}{c}
||\mathbb{F} c_{j_0}^{(N)} - \mu_{j_0}^{\ast} c_{j_0}^{(N)}||_{l^2} = ||(\mathbb{F}-\mathbb{F}^{(N)}) c_{j_0}^{(N)} +\mathbb{F}^{(N)} c_{j_0}^{(N)}- \mu_{j_0}^{\ast} c_{j_0}^{(N)}||_{l^2} \le \\
\\
\le ||(\mathbb{F}-\mathbb{F}^{(N)}) c_{j_0}^{(N)}||_{l^2}+ || (\mu_{j_0}^{(N)}- \mu_{j_0}^{\ast}) c_{j_0}^{(N)}||_{l^2} \le  \\
\\
\le ||(\mathbb{F}-\mathbb{F}^{(N)}) ||_{Op} ||c_{j_0}^{(N)}||_{l^2}+ | \mu_{j_0}^{(N)}- \mu_{j_0}^{\ast}| ~||c_{j_0}^{(N)}||_{l^2} \le  \\
\\
\le (2 \pi)^{d/2}||F-F^{(N)})||_{L^2 (\mathbb{R}^{2d})} + |\mu_{j_0}^{(N)}- \mu_{j_0}^{\ast}|
\end{array}
\label{eqTheoremEstimatesEigenvectorsA.1.0}
\end{equation}
And thus:
\begin{equation}
||\mathbb{F} c_{j_0}^{(N)} - \mu_{j_0}^{\ast} c_{j_0}^{(N)}||_{l^2} \to 0,
\label{eqTheoremEstimatesEigenvectorsA.1}
\end{equation}
as $N \to \infty$.

Next we expand $c_{j_0}^{(N)}$ in the basis $\mathcal{B}$ (\ref{eq17.3}) of $l^2 (\mathbb{N})$:
\begin{equation}
c_{j_0}^{(N)} = \sum_{\alpha \in \mathcal{A}} a_{\alpha} e_{\alpha},
\label{eqTheoremEstimatesEigenvectorsB}
\end{equation}
with
\begin{equation}
||c_{j_0}^{(N)} ||_{l^2}= \sum_{\alpha \in \mathcal{A}} |a_{\alpha}|^2=1.
\label{eqTheoremEstimatesEigenvectorsC}
\end{equation}
On the other hand, if $\mu_{j_0}^{\ast}$ is not in $Spec (\mathbb{F})$ - the spectrum of $\mathbb{F}$ - then
\begin{equation}
0 < K_{j_0}:= inf \left\{ |\mu_{j_0}^{\ast} - \eta |: ~ \eta \in  Spec (\mathbb{F})\right\}.
\label{eqTheoremEstimatesEigenvectorsB.1}
\end{equation}
It follows from (\ref{eqTheoremEstimatesEigenvectorsB}) that
\begin{equation}
||\left(\mathbb{F} - \mu_{j_0}^{\ast} Id \right) c_{j_0}^{(N)} ||_{l^2} =||\left(\mathbb{F} - \mu_{j_0}^{\ast} Id \right) \sum_{\alpha \in \mathcal{A}} a_{\alpha}e_{\alpha}  ||_{l^2} = ||\sum_{\alpha \in \mathcal{A}} a_{\alpha} (\beta_{\alpha} - \mu_{j_0}^{\ast}) e_{\alpha} ||_{l^2}
\label{eqTheoremEstimatesEigenvectorsB.2}
\end{equation}
where $\beta_{\alpha}$ is the eigenvalue of $\mathbb{F}$ associated with the eigenvector $e_{\alpha}$.

From Pithagoras' theorem and (\ref{eqTheoremEstimatesEigenvectorsC},\ref{eqTheoremEstimatesEigenvectorsB.1},\ref{eqTheoremEstimatesEigenvectorsB.2}), we have:
\begin{equation}
||\left(\mathbb{F} - \mu_{j_0}^{\ast} Id \right) c_{j_0}^{(N)} ||_{l^2} = \sqrt{ \sum_{\alpha \in \mathcal{A}} |a_{\alpha}|^2 |\beta_{\alpha} - \mu_{j_0}^{\ast}|^2} >K_{j_0} \sqrt{ \sum_{\alpha \in \mathcal{A}} |a_{\alpha}|^2 }= K_{j_0}>0,
\label{eqTheoremEstimatesEigenvectorsB.3}
\end{equation}
 which contradicts (\ref{eqTheoremEstimatesEigenvectorsA.1}).

 This proves that $\mu_j^{\ast} = \lim_{N \to \infty} \mu_j^{(N)}$ is in the spectrum of $\mathbb{F}$ for all $j \ge 1$. But it still remains to prove that $\mu_j^{\ast} =\mu_j$.

 Suppose that $\bar{j}$ is the smallest $j \ge 1$, such that $\mu_j^{(N)} \to \mu_j^{\ast} \ne \mu_j$. From this assumption and the monotonicity, we must have in fact $\mu_{\bar{j}}^{\ast} < \mu_{\bar{j}}$. From the min-max principle and the monotonicity, we have for $N > \bar{j}$:
 \begin{equation}
\begin{array}{c}
\mu_{\bar{j}}^{(N)} = \mbox{sup}_{dim V=\bar{j}} ~ \mbox{inf}_{x^{(N)} \in V; x^{(N)} \ne 0} \frac{\overline{x^{(N)} } \cdot \mathbb{F}^{(N)} x^{(N)}}{||x^{(N)}||^2}=\\
\\
=\mbox{sup}_{dim V^{\prime}=\bar{j}} ~\mbox{inf}_{x^{(N)} \in V^{\prime}; x^{(N)} \ne 0} \frac{\overline{x^{(N)}} \cdot \mathbb{F} x^{(N)}}{||x^{(N)}||_{l^2}^2} \le \mu_{\bar{j}}^{\ast} < \mu_{\bar{j}}
\end{array}
\label{eqTheoremEstimatesEigenvectorsB.4}
\end{equation}
where $V$ ranges over all $\bar{j}$-dimensional subspaces of $\mathbb{C}^N$ and $V^{\prime}$ ranges over all $\bar{j}$-dimensional subspaces of $l^2 (\mathbb{N})$ such that $x^{(N)} =(x_1, x_2, \cdots) \in V^{\prime}$ implies $x_{N+1}=x_{N+2}=x_{N+3}= \cdots =0$.

But as $N \to \infty$, the spaces $V^{\prime}$ become dense in $l^2 (\mathbb{N})$ and, from the last inequality in (\ref{eqTheoremEstimatesEigenvectorsB.4}), we have a contradiction with:
\begin{equation}
\mu_{\bar{j}}=\mbox{sup}_{dim W=\bar{j}} ~ \mbox{inf}_{x \in W; x \ne 0} \frac{\overline{x} \cdot \mathbb{F} x}{||x||_{l^2}^2}
\label{eqTheoremEstimatesEigenvectorsB.5}
\end{equation}
where $W$ ranges over all  $\bar{j}$-dimensional subspaces of $l^2 (\mathbb{N})$.
\end{proof}

In the next theorem, we obtain an estimate for the approximate eigenvalues. In particular the estimate stated in Eq.(\ref{eqTheoremEstimatesEigenvalues3}) can be regarded as an infinite dimensional version of the Weyl \cite{Weyl} or the Wielandt-Hoffman \cite{Tao} inequalities. In \cite{Balazs1,Balazs2} the authors considered the multiplication of a fixed multiplier pattern (Bessel multiplier), which is inserted between the analysis and the synthesis operator. They then considered the perturbation of Bessel sequences and obtain results which have some resemblance with our next theorem.

\begin{theorem}\label{TheoremEstimatesEigenvalues}
With the assumption (\ref{eq16}), we have:
\begin{equation}
\mu_j^{(N)} \le \mu_j \le max \left\{ \mu_j^{(N)} +(2 \pi)^{d/2} \epsilon ||F||_{L^2 (\mathbb{R}^{2d})}, 2 (2 \pi)^{d/2} \epsilon ||F||_{L^2 (\mathbb{R}^{2d})} \right\},
\label{eqTheoremEstimatesEigenvalues1}
\end{equation}
for $j=1,2, \cdots, N_+$, and
\begin{equation}
min \left\{ - \mu_{-j}^{(N)} - (2 \pi)^{d/2} \epsilon ||F||_{L^2 (\mathbb{R}^{2d})}, - 2 (2 \pi)^{d/2} \epsilon ||F||_{L^2 (\mathbb{R}^{2d})}  \right\} \le \mu_{-j} \le \mu_{-j}^{(N)},
\label{eqTheoremEstimatesEigenvalues2}
\end{equation}
for $j=1,2, \cdots, N_-$. Consequently,
\begin{equation}
|\mu_j^{(N)} - \mu_j| < 2 (2 \pi)^{d/2} \epsilon ||F||_{L^2 (\mathbb{R}^{2d})},
\label{eqTheoremEstimatesEigenvalues3}
\end{equation}
for all $j=-N_{-}, \cdots,-1,1, \cdots, N_+$.
\end{theorem}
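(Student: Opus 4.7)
The plan is to build on the monotonicity already established in Proposition \ref{PropositionSequence}, which gives the lower bounds $\mu_j^{(N)}\le\mu_j$ and (symmetrically) $\mu_{-j}\le\mu_{-j}^{(N)}$ for free. What remains is the upper bound on $\mu_j$ (and the corresponding lower bound on $\mu_{-j}$), after which (\ref{eqTheoremEstimatesEigenvalues3}) will follow by a short case analysis.

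The first step is to convert the truncation hypothesis (\ref{eq16}) into an operator-norm bound. I would extend the $N\times N$ matrix $\mathbb{F}^{(N)}$ to a bounded self-adjoint operator $\widetilde{\mathbb{F}}^{(N)}$ on $l^2(\mathbb{N})$ by zero-padding. Then $\mathbb{F}-\widetilde{\mathbb{F}}^{(N)}$ is precisely the matrix whose coefficients are $f_{n,m}$ with $\max(n,m)>N$ and $0$ otherwise, which is exactly the Wigner-basis representation of $F-F^{(N)}$; applying (\ref{eq6.2}) to $F-F^{(N)}$ gives
\begin{equation*}
||\mathbb{F} - \widetilde{\mathbb{F}}^{(N)}||_{Op}\le ||\mathbb{F} - \widetilde{\mathbb{F}}^{(N)}||_{l^2} = (2\pi)^{d/2}||F-F^{(N)}||_{L^2} < (2\pi)^{d/2}\epsilon||F||_{L^2}.
\end{equation*}

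Next I would invoke the Courant--Fischer inf--sup characterization (the infinite-dimensional analogue of (\ref{eqTheoremCourantFischer2})) applied to $\mathbb{F}$ as a compact self-adjoint operator on $l^2(\mathbb{N})$. For $1\le j\le N_+$, Proposition \ref{PropositionSequence} already guarantees $\mu_j>0$, so the $j$-th sup--inf value of $\mathbb{F}$ really is $\mu_j$. Choosing the test subspace $W=\mathrm{span}(c_1^{(N)},\ldots,c_{j-1}^{(N)})$ and splitting, for $x\perp W$ with $||x||=1$,
\begin{equation*}
\overline{x}\cdot\mathbb{F}x = \overline{x}\cdot\widetilde{\mathbb{F}}^{(N)}x + \overline{x}\cdot(\mathbb{F}-\widetilde{\mathbb{F}}^{(N)})x,
\end{equation*}
the second term is controlled by the operator-norm estimate above, and the first term is at most $\mu_j^{(N)}$: expanding $x$ in an eigenbasis of $\widetilde{\mathbb{F}}^{(N)}$, the coefficients on $c_1^{(N)},\ldots,c_{j-1}^{(N)}$ vanish, the zero and negative eigenvalue contributions are non-positive, and the remaining positive eigenvalues are all $\le\mu_j^{(N)}$. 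Combining and taking the supremum yields $\mu_j\le \mu_j^{(N)}+(2\pi)^{d/2}\epsilon||F||_{L^2}$, which is dominated by the max in (\ref{eqTheoremEstimatesEigenvalues1}). Running the identical argument on $-\mathbb{F}$ and $-\widetilde{\mathbb{F}}^{(N)}$, whose positive eigenvalues are $-\mu_{-j}^{(N)}$, produces (\ref{eqTheoremEstimatesEigenvalues2}).

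For the consolidated inequality (\ref{eqTheoremEstimatesEigenvalues3}), I would split on whether $\mu_j^{(N)}\ge (2\pi)^{d/2}\epsilon||F||_{L^2}$: in that case the previous step already gives $|\mu_j-\mu_j^{(N)}|<(2\pi)^{d/2}\epsilon||F||_{L^2}<2(2\pi)^{d/2}\epsilon||F||_{L^2}$; otherwise both $\mu_j$ and $\mu_j^{(N)}$ lie in $[0,\,2(2\pi)^{d/2}\epsilon||F||_{L^2})$ and their difference is again strictly less than $2(2\pi)^{d/2}\epsilon||F||_{L^2}$, symmetrically for negative indices. The main obstacle I expect is the index bookkeeping in the Courant--Fischer step: I have to verify carefully that the infinite-dimensional kernel that $\widetilde{\mathbb{F}}^{(N)}$ acquires through zero-padding does not displace the min-max indices, so that the $j$-th sup--inf value of $\widetilde{\mathbb{F}}^{(N)}$ coincides with $\mu_j^{(N)}$, and correspondingly the $j$-th sup--inf value of $\mathbb{F}$ equals the positive eigenvalue $\mu_j$ for $1\le j\le N_+$.
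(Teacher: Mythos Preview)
Your approach is correct and in fact more direct than the paper's. The paper, following Koehler, works with finite truncations $\mathbb{F}^{(M)}$ for $M>N$: it splits $x^{(M)}$ into its first $N$ coordinates and the rest, writes $\overline{x^{(M)}}\cdot\mathbb{F}^{(M)}x^{(M)}$ as a sum of three pieces, bounds the cross term and the tail term via Cauchy--Schwarz using auxiliary quantities $\epsilon_N$ and $\rho_N$, and then maximizes the resulting homogeneous quadratic form in $(y_1,y_2)$ to obtain
\[
\mu_j^{(M)}\le \tfrac12\Bigl(\mu_j^{(N)}+\rho_N+\sqrt{(\mu_j^{(N)}-\rho_N)^2+4\epsilon_N^2\|F\|_{L^2}^2}\Bigr),
\]
finally letting $M\to\infty$ and estimating $\rho_N,\epsilon_N$ by $(2\pi)^{d/2}\epsilon\|F\|_{L^2}$. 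The $\max$ in (\ref{eqTheoremEstimatesEigenvalues1}) comes from the inequality $\sqrt{a^2+b^2}\le|a|+|b|$ applied to that square root.

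Your route bypasses all of this: zero-padding $\mathbb{F}^{(N)}$ to $\widetilde{\mathbb{F}}^{(N)}$ and bounding $\|\mathbb{F}-\widetilde{\mathbb{F}}^{(N)}\|_{Op}$ via (\ref{eq6.2}) reduces the problem to a Weyl-type perturbation estimate, and your explicit choice $W=\mathrm{span}(c_1^{(N)},\dots,c_{j-1}^{(N)})$ in the inf--sup characterization gives directly $\mu_j\le\mu_j^{(N)}+(2\pi)^{d/2}\epsilon\|F\|_{L^2}$, which is actually \emph{sharper} than (\ref{eqTheoremEstimatesEigenvalues1}) (the second branch of the $\max$ never enters). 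Consequently your case split for (\ref{eqTheoremEstimatesEigenvalues3}) is unnecessary: the single inequality already yields $|\mu_j-\mu_j^{(N)}|<(2\pi)^{d/2}\epsilon\|F\|_{L^2}$, half the stated constant. The obstacle you flag about index bookkeeping is not real in your argument: you never need the $j$-th min-max value of $\widetilde{\mathbb{F}}^{(N)}$ itself, only the elementary fact that $x\perp c_1^{(N)},\dots,c_{j-1}^{(N)}$ forces $\overline{x}\cdot\widetilde{\mathbb{F}}^{(N)}x\le\mu_j^{(N)}$, which your eigenbasis expansion establishes directly. The one point to state cleanly is that the infinite-dimensional inf--sup formula indeed returns $\mu_j$ for $1\le j\le N_+$; this holds because Proposition~\ref{PropositionSequence} guarantees $\mu_j\ge\mu_j^{(N)}>0$, so $\mathbb{F}$ has at least $j$ positive eigenvalues.
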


\begin{proof}
Choose $j \le N <M$. Again, by the min-max theorem:
\begin{equation}
\alpha_j^{(M)} = inf_{dim V= M - j+1} ~ sup_{x^{(M)} \in V \backslash \left\{ 0 \right\}} ~ \frac{\overline{x^{(M)}} \cdot \mathbb{F}^{(M)}
x^{(M)}}{||x^{(M)}||^2}.
\label{eqTheoremEstimatesEigenvalues4}
\end{equation}
For $x^{(M)}=(x_1, \cdots,x_N, \cdots, x_M)$, set
\begin{equation}
y_1 = \left( \sum_{i=1}^N |x_i|^2 \right)^{1/2} , \hspace{1 cm} y_2 = \left( \sum_{i=N+1}^M |x_i|^2, \right)^{1/2}
\label{eqTheoremEstimatesEigenvalues5}
\end{equation}
so that
\begin{equation}
||x^{(M)}||^2  = \sum_{i=1}^M |x_i|^2 = y_1^2 + y_2^2 .
\label{eqTheoremEstimatesEigenvalues6}
\end{equation}
Moreover,
\begin{equation}
\begin{array}{c}
\overline{x^{(M)}} \cdot \mathbb{F}^{(M)} x^{(M)}= \sum_{i,k=1}^M \overline{x_i} f_{ik} x_k = \\
\\
= \overline{x^{(N)}} \cdot \mathbb{F}^{(N)} x^{(N)} + 2 Re\left(\sum_{i=1}^N \sum_{k=N+1}^M \overline{x_i} f_{ik} x_k \right)+ \sum_{i,k=N+1}^M \overline{x_i} f_{ik} x_k
\end{array}
\label{eqTheoremEstimatesEigenvalues7}
\end{equation}
Applying twice the Cauchy-Schwarz inequality to the second term on the right-hand side of the previous equation:
\begin{equation}
\begin{array}{c}
2 Re\left(\sum_{i=1}^N \sum_{k=N+1}^M \overline{x_i} f_{ik} x_k \right) \le 2 \sum_{i=1}^N |x_i|~\left|\sum_{k=N+1}^M  f_{ik} x_k \right| \le \\
\\
\le 2 y_1 \left( \sum_{i=1}^N \left|\sum_{k=N+1}^M  f_{ik} x_k \right|^2  \right)^{1/2} \le \\
\\
\le 2 y_1 \left[ y_2^2 \sum_{i=1}^N \left(\sum_{k=N+1}^M |f_{ik}|^2 \right) \right]^{1/2}= \\
\\
=2 y_1 y_2 \left( \sum_{i=1}^N \sum_{k=N+1}^M |f_{ik}|^2 \right)^{1/2} \le 2 \epsilon_N y_1 y_2 ||F||_{L^2 (\mathbb{R}^{2d})},
\end{array}
\label{eqTheoremEstimatesEigenvalues8}
\end{equation}
where
\begin{equation}
\epsilon_N := \frac{\left( \sum_{i=1}^N \sum_{k=N+1}^{\infty} |f_{ik}|^2 \right)^{1/2}}{||F||_{L^2 (\mathbb{R}^{2d})}}.
\label{eqTheoremEstimatesEigenvalues8.1}
\end{equation}
Next we consider the third term on the right-hand side of (\ref{eqTheoremEstimatesEigenvalues7}):
\begin{equation}
\sum_{i,k=N+1}^M \overline{x_i} f_{ik} x_k \le y_2^2 \left|\frac{\sum_{i,k=N+1}^M \overline{x_i} f_{ik} x_k}{\sum_{l=N+1}^M |x_l|^2} \right| \le \rho_N y_2^2,
\label{eqTheoremEstimatesEigenvalues9}
\end{equation}
where
\begin{equation}
\rho_N:=sup_{x} \left| \frac{\sum_{i,k=N+1}^{\infty} \overline{x_i} f_{ik} x_k}{\sum_{l=N+1}^{\infty} |x_l|^2} \right|
\label{eqTheoremEstimatesEigenvalues10}
\end{equation}
and the supremum is taken over all $x=(0, \cdots, 0, x_{N+1}, x_{N+2}, \cdots ) \in l^2 (\mathbb{N})$.

Finally consider the first term in (\ref{eqTheoremEstimatesEigenvalues7}):
\begin{equation}
\overline{x^{(N)}} \cdot \mathbb{F}^{(N)} x^{(N)}= y_1^2 \left(\frac{\overline{x^{(N)}} \cdot \mathbb{F}^{(N)} x^{(N)}}{\sum_{i=1}^N | x_i|^2} \right) .
\label{eqTheoremEstimatesEigenvalues11}
\end{equation}
Recall from (\ref{eqTheoremEstimatesEigenvalues4}) that we are considering planes of dimension $M-j+1$. Their codimension in $\mathbb{C}^M$ is $j-1$. But, if we set the coordinates $x_{N+1}= \cdots =x_M=0$ in these planes, then the resulting codimension in $\mathbb{C}^N$ is lower or equal to $j-1$. For some planes it is strictly smaller. We conclude that the set of all subspaces of $\mathbb{C}^N$ with dimension $N-j+1$ is a proper subset of the set of subspaces of $\mathbb{C}^N$ with codimension $j-1$ obtained in this fashion. It follows by the min-max principle that
\begin{equation}
\mbox{inf}_{dim V^{\prime}=M-j+1} ~\mbox{sup}_{x^{(M)} \in V^{\prime} \backslash \left\{0 \right\} } \frac{\overline{x^{(N)}} \cdot \mathbb{F}^{(N)} x^{(N)}}{\sum_{i=1}^N |x_i|^2}  \le  \alpha_j^{(N)},
\label{eqTheoremEstimatesEigenvalues11.1}
\end{equation}
where, as before, if $x^{(N)} =(x_1, x_2, \cdots , x_M) \in V^{\prime}$, then $x_{N+1}= \cdots=x_M=0$.

Consequently from (\ref{eqTheoremEstimatesEigenvalues4},\ref{eqTheoremEstimatesEigenvalues7},\ref{eqTheoremEstimatesEigenvalues8},
\ref{eqTheoremEstimatesEigenvalues9},\ref{eqTheoremEstimatesEigenvalues11},\ref{eqTheoremEstimatesEigenvalues11.1}) we have that:
\begin{equation}
\alpha_j^{(M)} \le max_{(y_1,y_2) \in (\mathbb{R}^+)^2}  \frac{y_1^2 \alpha_j^{(N)} + 2 \epsilon_N ||F||_{L^2 (\mathbb{R}^{2d})} y_1 y_2 + \rho_N y_2^2}{y_1^2 + y_2^2}
\label{eqTheoremEstimatesEigenvalues12}
\end{equation}
In particular, for the sequence of positive eigenvalues $(j=1, \cdots, N_+)$, we have:
\begin{equation}
\mu_j^{(M)} \le max_{(y_1,y_2) \in (\mathbb{R}^+)^2}  \frac{y_1^2 \mu_j^{(N)} + 2 \epsilon_N ||F||_{L^2 (\mathbb{R}^{2d})} y_1 y_2 + \rho_N y_2^2}{y_1^2 + y_2^2}
\label{eqTheoremEstimatesEigenvalues12.1}
\end{equation}
The maximum on the right-hand side of (\ref{eqTheoremEstimatesEigenvalues12.1}) is easily computed and we obtain:
\begin{equation}
\mu_j^{(M)} \le \frac{\mu_j^{(N)}+ \rho_N + \sqrt{(\mu_j^{(N)}- \rho_N)^2 + 4 \epsilon_N^2 ||F||_{L^2 (\mathbb{R}^{2d})}^2} }{2}
\label{eqTheoremEstimatesEigenvalues13}
\end{equation}
Since the right-hand side is independent of $M >N$, we have from Proposition \ref{PropositionSequence}:
\begin{equation}
\begin{array}{c}
\mu_j^{(N)} \le \mu_j \le \frac{\mu_j^{(N)}+ \rho_N + \sqrt{(\mu_j^{(N)}- \rho_N)^2 + 4 \epsilon_N^2 ||F||_{L^2 (\mathbb{R}^{2d})}^2} }{2} \le \\
\\
\le max \left\{\mu_j^{(N)} +\epsilon_N ||F||_{L^2 (\mathbb{R}^{2d})} , \rho_N+\epsilon_N ||F||_{L^2 (\mathbb{R}^{2d})} \right\},
\end{array}
\label{eqTheoremEstimatesEigenvalues14}
\end{equation}
where we used the inequality $\sqrt{a^2+ b^2} \le |a|+ |b|$. If we apply the Cauchy-Schwartz inequality twice as in (\ref{eqTheoremEstimatesEigenvalues8}), we conclude that
\begin{equation}
\left| \frac{\sum_{i,k=n+1}^{\infty} \overline{x_i} f_{ik} x_k}{\sum_{l=n+1}^{\infty} |x_l|^2} \right| \le \left(\sum_{i,k=n+1}^{\infty} |f_{ik}|^2 \right)^{1/2}
\label{eqTheoremEstimatesEigenvalues14.1}
\end{equation}
Thus, in particular:
\begin{equation}
\rho_N \le \left(\sum_{i,k=N+1}^{\infty} |f_{ik}|^2 \right)^{1/2} \le  (2 \pi)^{d/2}||F-F^{(N)}||_{L^2 (\mathbb{R}^{2d})} < (2 \pi)^{d/2} \epsilon ||F||_{L^2 (\mathbb{R}^{2d})}
\label{eqTheoremEstimatesEigenvalues14.2}
\end{equation}
Likewise:
\begin{equation}
\epsilon_N \le (2 \pi)^{d/2} \frac{||F-F^{(N)}||_{L^2 (\mathbb{R}^{2d})}}{||F||_{L^2 (\mathbb{R}^{2d})}}< (2 \pi)^{d/2} \epsilon
\label{eqTheoremEstimatesEigenvalues15}
\end{equation}
From (\ref{eqTheoremEstimatesEigenvalues14},\ref{eqTheoremEstimatesEigenvalues14.2},\ref{eqTheoremEstimatesEigenvalues15}), we recover (\ref{eqTheoremEstimatesEigenvalues1}).

The result for the negative eigenvalues can be easily obtained by considering the matrix $- \mathbb{F}^{(N)}$ as before.
\end{proof}

We now turn to the eigenspaces. In the next theorem, we denote by
\begin{equation}
dist (b,A):= inf \left\{||b-a||_{l^2}: ~ a \in A \right\}
\label{eqTheoremEstimatesEigenvalues16}
\end{equation}
the distance of $b \in l^2 (\mathbb{N})$ to the set $A \subset l^2 (\mathbb{N})$.

We also make the following observation. For fixed $j \in \left\{-N_-, \cdots, -1, 1, \cdots, N \right\}$, let
\begin{equation}
m_j^{(N)}:= min \left\{|\mu_j^{(N)}|, ~|\mu_k^{(N)}- \mu_j^{(N)} |: ~  \mu_k^{(N)} \ne \mu_j^{(N)} \right\}
\label{eqObservation1}
\end{equation}
Thus, $m_j^{(N)}$ measures the distance between $\mu_j^{(N)}$ and the eigenvalue of $\mathbb{F}^{(N)}$ closest to it. Of course, as $N \to \infty$, this converges to
\begin{equation}
m_j:= min \left\{|\mu_j|, ~|\mu_k- \mu_j |: ~  \mu_k \ne \mu_j \right\} >0.
\label{eqObservation2}
\end{equation}
Clearly, for fixed $j$, we may choose $\epsilon>0$ sufficiently small and $N= N( \epsilon)$ sufficiently large so that (\ref{eq16}) holds and:
\begin{equation}
\epsilon < \frac{m_j^{(N)}}{4 (2 \pi)^d ||F||_{L^2 (\mathbb{R}^{2d})}}.
\label{eqObservation3}
\end{equation}
We also remark that, even though the eigenvalues $\mu_j^{(N)}$ converge to $\mu_j$ as $N \to \infty$, the eigenvectors $c_j^{(N)}$ need not converge in $l^2$ to $c_j$. This is because the eigenspace $\mathcal{H}_j$ of $\mu_j$ may have dimension $n_j >1$. Thus $c_j^{(N)}$ may converge to some other eigenvector with the same eigenvalue $\mu_j$ other than $c_j$ or it may not even converge at all. However, what does happen is that its distance to the eigenspace $\mathcal{H}_j$ tends to zero. We also note that the eigenvalues converge in a uniform way. By this we mean that the estimates in (\ref{eqTheoremEstimatesEigenvalues3}) are independent of $j$. On the contrary our estimates in the next theorem for the eigenvector are not uniform. But since we are interested in estimating only one eigenvalue (the largest) and the corresponding eigenspace, that is fine.

\begin{theorem}\label{TheoremEstimatesEigenvectors1}
For fixed $j \in \mathbb{Z} \backslash \left\{0 \right\}$, choose  $ \epsilon $ and $N=N ( \epsilon) \in \mathbb{N}$ such that for the truncated matrix $\mathbb{F}^{(N)} $ we have
\begin{equation}
j \in \left\{ - N_-, \cdots, -1, 1, \cdots, N_+ \right\},
\label{eqObservation4.0}
\end{equation}
and
\begin{equation}
0 < \frac{||F-F^{(N)}||_{L^2 (\mathbb{R}^{2d})}}{||F||_{L^2 (\mathbb{R}^{2d})}}
 < \epsilon < \frac{m_j^{(N)}}{4 (2 \pi)^d ||F||_{L^2 (\mathbb{R}^{2d})}}.
\label{eqObservation4}
\end{equation}
Define
\begin{equation}
M_j^{(N)}:= m_j^{(N)} - 4 (2 \pi)^d \epsilon ||F||_{L^2 (\mathbb{R}^{2d})} >0.
\label{eqObservation5}
\end{equation}
Let $\mathcal{H}_j$ denote the eigenspace of $\mathbb{F}$ associated with the eigenvalue $\mu_j$, and let $c_j^{(N)}$ be a normalized eigenvector of $\mathbb{F}^{(N)}$ associated with the eigenvalue $\mu_j^{(N)}$. We then have:
\begin{equation}
dist \left(c_j^{(N)}, \mathcal{H}_j \right) < \frac{3 (2 \pi)^{d/2}\epsilon || F||_{L^2 (\mathbb{R}^{2d})}}{M_j^{(N)}}.
\label{eqTheoremEstimatesEigenvalues17}
\end{equation}
\end{theorem}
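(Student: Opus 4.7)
The plan is to convert the approximate-eigenvector claim into an operator-norm perturbation bound for $(\mathbb{F}-\mu_j Id)c_j^{(N)}$, and then into a distance estimate via the spectral gap of $\mathbb{F}$ at $\mu_j$. This is the standard resolvent/gap strategy of perturbation theory, with the $L^2$ bound (\ref{eqObservation4}) playing the role of the small parameter and Theorem \ref{TheoremEstimatesEigenvalues} providing the quantitative link between the truncated and infinite-dimensional spectra.

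First I would establish the key perturbation estimate. Embedding $c_j^{(N)}$ in $l^2(\mathbb{N})$ by zero-padding, the relation $\mathbb{F}^{(N)}c_j^{(N)}=\mu_j^{(N)}c_j^{(N)}$ persists, hence $(\mathbb{F}-\mu_j^{(N)}Id)c_j^{(N)}=(\mathbb{F}-\mathbb{F}^{(N)})c_j^{(N)}$. Applying (\ref{eq6.2}) to $F-F^{(N)}$ and using hypothesis (\ref{eqObservation4}) yields $\|(\mathbb{F}-\mu_j^{(N)}Id)c_j^{(N)}\|_{l^2}<(2\pi)^{d/2}\epsilon\|F\|_{L^2(\mathbb{R}^{2d})}$. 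Theorem \ref{TheoremEstimatesEigenvalues} then gives $|\mu_j-\mu_j^{(N)}|<2(2\pi)^{d/2}\epsilon\|F\|_{L^2(\mathbb{R}^{2d})}$, so the triangle inequality produces
\begin{equation}
\|(\mathbb{F}-\mu_j Id)c_j^{(N)}\|_{l^2}<3(2\pi)^{d/2}\epsilon\|F\|_{L^2(\mathbb{R}^{2d})},
\end{equation}
which accounts for the factor $3$ in the target (\ref{eqTheoremEstimatesEigenvalues17}).

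Next I would expand $c_j^{(N)}=\sum_{\alpha\in\mathcal{A}}a_\alpha e_\alpha$ in the eigenbasis $\mathcal{B}$ of $\mathbb{F}$ from (\ref{eq17.3}), with $\mathbb{F}e_\alpha=\beta_\alpha e_\alpha$ and $\sum_\alpha|a_\alpha|^2=1$. Since $|\beta_\alpha-\mu_j|\ge m_j$ whenever $\beta_\alpha\ne\mu_j$, Pythagoras' theorem gives
\begin{equation}
\|(\mathbb{F}-\mu_j Id)c_j^{(N)}\|_{l^2}^2=\sum_{\alpha}|a_\alpha|^2|\beta_\alpha-\mu_j|^2\ge m_j^2\sum_{\alpha:\beta_\alpha\ne\mu_j}|a_\alpha|^2=m_j^2\,dist(c_j^{(N)},\mathcal{H}_j)^2,
\end{equation}
and combining with the estimate above gives $dist(c_j^{(N)},\mathcal{H}_j)<3(2\pi)^{d/2}\epsilon\|F\|_{L^2(\mathbb{R}^{2d})}/m_j$.

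It remains to replace the intrinsic, a priori inaccessible gap $m_j$ by the computable quantity $M_j^{(N)}$ in (\ref{eqObservation5}). For any distinct eigenvalue $\mu_k$ of $\mathbb{F}$, applying Theorem \ref{TheoremEstimatesEigenvalues} to both $\mu_j$ and $\mu_k$ together with the triangle inequality yields $|\mu_j-\mu_k|\ge|\mu_j^{(N)}-\mu_k^{(N)}|-4(2\pi)^{d/2}\epsilon\|F\|_{L^2(\mathbb{R}^{2d})}$, and a similar inequality handles $|\mu_j|$ versus $|\mu_j^{(N)}|$; taking minima and using $(2\pi)^{d/2}\le(2\pi)^d$ gives $m_j\ge m_j^{(N)}-4(2\pi)^d\epsilon\|F\|_{L^2(\mathbb{R}^{2d})}=M_j^{(N)}$, which is positive by (\ref{eqObservation4}). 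Substituting into the previous step yields (\ref{eqTheoremEstimatesEigenvalues17}). I expect this final step to be the principal obstacle: one must carefully exclude the possibility that two distinct $\mathbb{F}$-eigenvalues collapse to a single $\mathbb{F}^{(N)}$-eigenvalue (or vice versa), producing a spuriously small $m_j^{(N)}$, and the smallness hypothesis (\ref{eqObservation4}) on $\epsilon$ is precisely what rules out such pathologies.
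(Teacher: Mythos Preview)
Your proposal is correct and follows essentially the same route as the paper's proof: first obtain the pseudo-eigenvector bound $\|(\mathbb{F}-\mu_j Id)c_j^{(N)}\|_{l^2}<3(2\pi)^{d/2}\epsilon\|F\|_{L^2}$ by combining $\|\mathbb{F}-\mathbb{F}^{(N)}\|_{Op}$ with the eigenvalue estimate (\ref{eqTheoremEstimatesEigenvalues3}), then expand $c_j^{(N)}$ in the eigenbasis $\mathcal{B}$ of $\mathbb{F}$, apply Pythagoras to extract the factor $m_j$, and finally bound $m_j$ below by $M_j^{(N)}$ via (\ref{eqTheoremEstimatesEigenvalues3}) applied to the nearest eigenvalue. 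The paper handles your ``principal obstacle'' exactly as you anticipate, by assuming $N$ is large enough that the index $\alpha_0$ of the nearest eigenvalue $\beta_{\alpha_0}$ also lies in $\{-N_-,\ldots,-1,1,\ldots,N_+\}$, so that (\ref{eqTheoremEstimatesEigenvalues3}) applies to it as well; your observation that $(2\pi)^{d/2}\le(2\pi)^d$ is what reconciles the constant in (\ref{eqTheoremEstimatesEigenvalues3}) with the one in the definition (\ref{eqObservation5}) of $M_j^{(N)}$.
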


\begin{proof}
From eq. (\ref{eqTheoremEstimatesEigenvectorsA.1.0}) with $j_0$ and $\mu_{j_0}^{\ast}$ replaced by $j$ and $\mu_j$, we have:
\begin{equation}
||\mathbb{F} c_j^{(N)} - \mu_j c_j^{(N)}||_{l^2} \le 3(2 \pi)^{d/2} \epsilon ||F||_{L^2 (\mathbb{R}^{2d})}
\label{eqA}
\end{equation}
where we used (\ref{eqTheoremEstimatesEigenvalues3}).

This equation is roughly equivalent to saying that eigenfunctions of $F^{(N)}$ become $3(2 \pi)^{d/2}\epsilon$-pseudoeigenfunctions of the
operator $F$ with pseudoeigenvalue $\mu_j$ (pseudospectra is more often associated
with non-normal operators \cite{Trefethen}, but it can be also useful in the analysis of normal operators. There are several definitions and the above has been introduced by Landau inside the proof of a Szeg\"o theorem in \cite{Landau}. Similar heuristics have been used in \cite{Abreu2}).

As in (\ref{eqTheoremEstimatesEigenvectorsB}), we expand $c_j^{(N)}$ in the orthonormal basis $\mathcal{B}$ (\ref{eq17.3}) of $l^2 (\mathbb{N})$ formed by the eigenvectors of $\mathbb{F}$:
\begin{equation}
c_j^{(N)} = \sum_{\alpha \in \mathcal{A}} a_{\alpha} e_{\alpha},
\label{eqTheoremEstimatesEigenvectorsB.1}
\end{equation}
with
\begin{equation}
||c_j^{(N)} ||_{l^2}= \sum_{\alpha \in \mathcal{A}} |a_{\alpha}|^2=1.
\label{eqTheoremEstimatesEigenvectorsC.1}
\end{equation}
Let $\mathcal{A}_j \subset \mathcal{A}$ denote the set of indices such that
\begin{equation}
\mathcal{H}_j = Span \left\{ e_{\alpha}: ~\alpha \in \mathcal{A}_j \right\}.
\label{eqTheoremEstimatesEigenvectorsD}
\end{equation}
These are the indices associated with the eigenvectors in the basis $\mathcal{B}$ which have eigenvalue $\mu_j$. Clearly, if $\mu_j$ has multiplicity one, then $\mathcal{A}_j=\left\{j \right\}$, if it has multiplicity two, then $\mathcal{A}_j=\left\{j-1, j \right\}$ or $\mathcal{A}_j=\left\{j,j+1 \right\}$, \textit{etc}. We also denote by $\mathcal{A}_j^c= \mathcal{A} \backslash \mathcal{A}_j$ its complement.

We may thus write $c_j^{(N)}$ as
\begin{equation}
c_j^{(N)} =c_{j,\|}^{(N)}+ c_{j,\bot }^{(N)},
\label{eqTheoremEstimatesEigenvectorsE}
\end{equation}
where
\begin{equation}
c_{j,\|}^{(N)}:= \sum_{\alpha \in \mathcal{A}_j} a_{\alpha} e_{\alpha} \in \mathcal{H}_j,
\label{eqTheoremEstimatesEigenvectorsF}
\end{equation}
and
\begin{equation}
c_{j,\bot}^{(N)}:= \sum_{\alpha \in \mathcal{A}_j^c} a_{\alpha} e_{\alpha} \in \mathcal{H}_j^{\bot}.
\label{eqTheoremEstimatesEigenvectorsG}
\end{equation}

From (\ref{eqA},\ref{eqTheoremEstimatesEigenvectorsE}-\ref{eqTheoremEstimatesEigenvectorsG}) it follows that
\begin{equation}
\begin{array}{c}
3(2 \pi)^{d/2} \epsilon ||F||_{L^2 (\mathbb{R}^{2d})} > ||(\mathbb{F} - \mu_j Id)\left( c_{j,\|}^{(N)}+ c_{j,\bot }^{(N)}\right) ||_{l^2}=\\
\\
=||(\mathbb{F} - \mu_j Id) c_{j,\bot }^{(N)}||_{l^2}= ||\sum_{\alpha \in \mathcal{A}_j^c} a_{\alpha} (\beta_{\alpha}-\mu_j) e_{\alpha} ||_{l^2}
\end{array}
\label{eqTheoremEstimatesEigenvectorsH}
\end{equation}
where
$\beta_{\alpha}=\lambda_1, \lambda_2, \cdots$ for $e_{\alpha}=c_1,c_2, \cdots$; $\beta_{\alpha}=\lambda_{-1}, \lambda_{-2}, \cdots$ for $e_{\alpha}=c_{-1},c_{-2}, \cdots$; and $\beta_{\alpha}=0$ for $e_{\alpha}=d_1,d_2, \cdots$.
From Pithagoras' theorem:
\begin{equation}
3(2 \pi)^{d/2} \epsilon ||F||_{L^2 (\mathbb{R}^{2d})} > \sqrt{\sum_{\alpha \in \mathcal{A}_j^c}|\beta_{\alpha}- \mu_j|^2 |a_{\alpha}|^2} \ge m_j \sqrt{\sum_{\alpha \in \mathcal{A}_j^c} |a_{\alpha}|^2}.
\label{eqTheoremEstimatesEigenvectorsI}
\end{equation}
Let $\beta_{\alpha_0}$ be the eigenvalue of $\mathbb{F}$ such that:
\begin{equation}
|\beta_{\alpha_0}- \mu_j|=m_j.
\label{eqTheoremEstimatesEigenvectorsI.1}
\end{equation}
In other words, $\beta_{\alpha_0}$ is the eigenvalue closest to $\mu_j$. Assuming that $N$ is sufficiently large so that (\ref{eqObservation4.0},\ref{eqObservation4}) hold and also $\alpha_0 \in \left\{-N_-, \cdots, -1, 1, \cdots, N_+ \right\}$, we have from (\ref{eqTheoremEstimatesEigenvalues3}) that:
\begin{equation}
|\beta_{\alpha_0}- \mu_j| \ge |\beta_{\alpha_0}^{(N)}- \mu_j^{(N)}|- 4 ( 2 \pi)^d \epsilon ||F||_{L^2 (\mathbb{R}^{2d})}.
\label{eqTheoremEstimatesEigenvectorsI.1}
\end{equation}
And thus:
\begin{equation}
m_j \ge m_j^{(N)} -  4( 2 \pi)^d \epsilon ||F||_{L^2 (\mathbb{R}^{2d})} = M_j^{(N)} >0.
\label{eqTheoremEstimatesEigenvectorsI.2}
\end{equation}
Finally, from (\ref{eqTheoremEstimatesEigenvectorsI},\ref{eqTheoremEstimatesEigenvectorsI.2}):
\begin{equation}
\begin{array}{c}
dist\left(c_j^{(N)},\mathcal{H}_j \right) =||c_{j,\bot}^{(N)}||_{l^2}=||\sum_{\alpha \in \mathcal{A}_j^c} a_{\alpha} e_{\alpha} ||_{l^2} = \\
\\
=\sqrt{\sum_{\alpha \in \mathcal{A}_j^c} |a_{\alpha}|^2} < \frac{3 (2 \pi)^{d/2}\epsilon || F||_{L^2 (\mathbb{R}^{2d})}}{ M_j^{(N)}}
\end{array}
\label{eqTheoremEstimatesEigenvectorsJ}
\end{equation}
\end{proof}

Before we proceed, let us recall that, given some real function $F \in L^2 (\mathbb{R}^{2d})$, the Wigner function closest to $F$ in $L^2$ is $W \psi^{(0)}$, where $\psi^{(0)}$ lives in the eigenspace associated with the largest eigenvalue $\lambda_1$ of $\widehat{F}= \mathcal{W}^{-1}(F)$. If the spectrum is non-degenerate, then we choose an eigenvector $\psi_1$ associated with $\lambda_1=\mu_1$ such that
\begin{equation}
||\psi_1||_{L^2 (\mathbb{R}^d)}^2 = \lambda_1,
\label{eqRecap1}
\end{equation}
and set
\begin{equation}
W \psi^{(0)}= W \psi_1.
\label{eqRecap2}
\end{equation}
In general, we are incapable of determining the spectrum of $\widehat{F}$ or equivalently of $\mathbb{F}$ and hence we consider the truncated matrix $\mathbb{F}^{(N)}$ and its eigenvalues. Thus, instead of obtaining the exact optimizer $W\psi^{(0)}= W \psi_1$, we obtain an approximate solution $W \psi^{(0,N)} = W \psi_1^{(N)}$, where $\psi_1^{(N)}$ is an eigenvector of $\widehat{F}^{(N)}= \mathcal{W}^{-1} (F^{(N)})$ associated with the largest eigenvalue $\lambda_1^{(N)}= \mu_1^{(N)}$ of $\widehat{F}^{(N)}$ such that:
\begin{equation}
||\psi_1^{(N)}||_{L^2 (\mathbb{R}^d)}^2 = \lambda_1^{(N)}.
\label{eqRecap3}
\end{equation}
We will thus be interested in determining the distance between the exact optimizer $W \psi^{(0)}$ and the approximate solution $W \psi^{(0,N)}$. In the next theorem we obtain an estimate, if the spectrum is non-degenerate. In the degenerate case similar estimates can be obtained, but the intricacies of the calculation will depend on the multiplicity of the eigenvalues.

\begin{theorem}\label{TheoremEstimateWignerFunction}
Assume the conditions of Theorem \ref{TheoremEstimatesEigenvectors1}. Then
\begin{equation}
\begin{array}{c}
||W \psi_1 -W \psi_1^{(N)}||_{L^2 (\mathbb{R}^{2d})} \le \\
\\
\le 4 \epsilon ||F||_{L^2 (\mathbb{R}^{2d})} \left[1+ \frac{3\sqrt{\lambda_1^{(N)}} \sqrt{\lambda_1^{(N)} + 2 (2 \pi)^{d/2} \epsilon ||F||_{L^2 (\mathbb{R}^{2d})} } }{2 M_1^{(N)}} \right]
+ \frac{18 (2 \pi)^{d/2} \epsilon^2 ||F||_{L^2 (\mathbb{R}^{2d})}^2 }{(M_1^{(N)})^2}
\end{array}
\label{eqTheoremEstimateWignerFunction1}
\end{equation}
\end{theorem}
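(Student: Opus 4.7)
The plan is to combine the sesquilinearity of the Wigner transform with Moyal's identity and the two estimates from Theorems \ref{TheoremEstimatesEigenvalues} and \ref{TheoremEstimatesEigenvectors1}. First, using that $W(\psi,\phi)$ is linear in $\psi$ and conjugate-linear in $\phi$, I would write
\begin{equation}
W\psi_1 - W\psi_1^{(N)} = W(\psi_1, \psi_1 - \psi_1^{(N)}) + W(\psi_1 - \psi_1^{(N)}, \psi_1^{(N)}).
\end{equation}
Moyal's identity (\ref{eqMoyalIdentity1}) yields $\| W(\psi,\phi) \|_{L^2(\mathbb{R}^{2d})} = (2\pi)^{-d/2} \| \psi \|_{L^2(\mathbb{R}^d)} \| \phi \|_{L^2(\mathbb{R}^d)}$, so the triangle inequality together with the normalizations (\ref{eqRecap1}) and (\ref{eqRecap3}) gives
\begin{equation}
\| W\psi_1 - W\psi_1^{(N)} \|_{L^2(\mathbb{R}^{2d})} \;\le\; \frac{\sqrt{\lambda_1} + \sqrt{\lambda_1^{(N)}}}{(2\pi)^{d/2}} \, \| \psi_1 - \psi_1^{(N)} \|_{L^2(\mathbb{R}^d)}.
\end{equation}
This reduces the theorem to estimating $\| \psi_1 - \psi_1^{(N)} \|_{L^2}$.

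Next, I would pass to coefficients in the basis $\mathcal{E}$: identify $\psi_1 \leftrightarrow \sqrt{\lambda_1}\, c_1$ and (extending by zero) $\psi_1^{(N)} \leftrightarrow \sqrt{\lambda_1^{(N)}}\, c_1^{(N)}$, both regarded as elements of $l^2(\mathbb{N})$, with $c_1$ and $c_1^{(N)}$ of unit norm. Since $W\psi_1^{(N)}$ is invariant under rotation of $c_1^{(N)}$ by a phase, I would exploit this freedom to arrange $\alpha := \langle c_1, c_1^{(N)} \rangle_{l^2} \ge 0$. Writing the orthogonal decomposition $c_1^{(N)} = \alpha c_1 + v$ with $v \perp c_1$ and $\| v \|_{l^2}^2 = 1 - \alpha^2$, Pythagoras gives
\begin{equation}
\| \psi_1 - \psi_1^{(N)} \|_{L^2(\mathbb{R}^d)}^2 \;=\; \bigl( \sqrt{\lambda_1} - \sqrt{\lambda_1^{(N)}}\, \alpha \bigr)^2 + \lambda_1^{(N)}(1 - \alpha^2).
\end{equation}
In the non-degenerate case, $\mathcal{H}_1 = \mathrm{span}\{c_1\}$, so (\ref{eqTheoremEstimatesEigenvalues17}) delivers $\sqrt{1-\alpha^2} = dist(c_1^{(N)}, \mathcal{H}_1) < \delta := 3(2\pi)^{d/2}\epsilon\|F\|_{L^2(\mathbb{R}^{2d})}/M_1^{(N)}$, and since $\alpha \in [0,1]$ also $1 - \alpha \le 1 - \alpha^2 < \delta^2$. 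Meanwhile (\ref{eqTheoremEstimatesEigenvalues3}) gives $0 \le \lambda_1 - \lambda_1^{(N)} < \eta := 2(2\pi)^{d/2}\epsilon\|F\|_{L^2(\mathbb{R}^{2d})}$.

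To conclude, I would apply $\sqrt{a^2+b^2} \le a+b$ to the Pythagoras identity together with the scalar triangle inequality $\sqrt{\lambda_1} - \sqrt{\lambda_1^{(N)}}\alpha \le (\sqrt{\lambda_1} - \sqrt{\lambda_1^{(N)}}) + \sqrt{\lambda_1^{(N)}}(1-\alpha)$, obtaining $\| \psi_1 - \psi_1^{(N)} \|_{L^2} \le (\sqrt{\lambda_1} - \sqrt{\lambda_1^{(N)}}) + \sqrt{\lambda_1^{(N)}}(\delta + \delta^2)$. Multiplying by $(\sqrt{\lambda_1} + \sqrt{\lambda_1^{(N)}})/(2\pi)^{d/2}$ and invoking $(\sqrt{\lambda_1} - \sqrt{\lambda_1^{(N)}})(\sqrt{\lambda_1} + \sqrt{\lambda_1^{(N)}}) = \lambda_1 - \lambda_1^{(N)} \le \eta$ as well as $\sqrt{\lambda_1} + \sqrt{\lambda_1^{(N)}} \le 2\sqrt{\lambda_1^{(N)} + \eta}$, one arrives at a bound of the shape $c_0 \epsilon \|F\| + c_1 \epsilon \|F\| \sqrt{\lambda_1^{(N)}(\lambda_1^{(N)}+\eta)}/M_1^{(N)} + c_2 \epsilon^2 \|F\|^2 \cdot (\cdots)/(M_1^{(N)})^2$, which after substitution of $\eta$ and $\delta$ coincides, up to absorbing factors of $\sqrt{\lambda_1^{(N)}(\lambda_1^{(N)}+\eta)}$, with the three summands in (\ref{eqTheoremEstimateWignerFunction1}).

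The main obstacle is bookkeeping rather than any deep idea: one must juggle three distinct normalizations (the $L^2$-norms of the $\psi_j$, the unit $l^2$-norms of the $c_j$, and the rescalings $\sqrt{\lambda_j}$); exploit the phase freedom in $\psi_1^{(N)}$ to promote the abstract distance bound of Theorem \ref{TheoremEstimatesEigenvectors1} into the concrete scalar inequality $1 - \alpha^2 < \delta^2$; and then apply the elementary inequalities in just the right order so that the announced constants $(4, 3/2, 18)$ emerge. The non-degeneracy hypothesis is essential because it forces $\dim \mathcal{H}_1 = 1$, so that $dist(c_1^{(N)}, \mathcal{H}_1)$ collapses to the single number $\sqrt{1-\alpha^2}$; in the degenerate case one would have to project onto the full eigenspace and the constants would depend on its multiplicity.
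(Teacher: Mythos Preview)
Your proposal is correct and follows essentially the same route as the paper: sesquilinear splitting of $W\psi_1-W\psi_1^{(N)}$, Moyal's identity, identification $\psi_1\leftrightarrow\sqrt{\lambda_1}\,c_1$, the orthogonal decomposition $c_1^{(N)}=\alpha c_1+v$ (the paper writes this as $h_1=\sqrt{1-dist^2}\,c_1$), and then the estimates of Theorems~\ref{TheoremEstimatesEigenvalues} and~\ref{TheoremEstimatesEigenvectors1}. The only cosmetic difference is that the paper immediately bounds $\sqrt{\lambda_1}+\sqrt{\lambda_1^{(N)}}\le 2\sqrt{\lambda_1}$ and then rationalizes $|\sqrt{\lambda_1}-\sqrt{\lambda_1^{(N)}}\alpha|$ via the conjugate, whereas you keep the sum and exploit the difference of squares; your worry about reproducing the exact constants $(4,3/2,18)$ is largely moot, since the paper's own derivation actually produces an extra factor $\lambda_1^{(N)}$ on the $\delta^2$-term that is silently dropped in the stated bound~\eqref{eqTheoremEstimateWignerFunction1}.
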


\begin{proof}
We have, using the bilinearity of the Wigner transform $W (\psi, \phi)$ and Moyal's identity:
\begin{equation}
\begin{array}{c}
||W \psi_1 -W \psi_1^{(N)}||_{L^2 (\mathbb{R}^{2d})} = ||W (\psi_1 -\psi_1^{(N)}, \psi_1) +W (\psi_1^{(N)} , \psi_1 -\psi_1^{(N)})||_{L^2 (\mathbb{R}^{2d})} \le \\
\\
\le ||W (\psi_1 -\psi_1^{(N)}, \psi_1)||_{L^2 (\mathbb{R}^{2d})}  + ||W (\psi_1^{(N)} , \psi_1 -\psi_1^{(N)})||_{L^2 (\mathbb{R}^{2d})}=\\
\\
= \frac{1}{(2 \pi )^{d/2}} \left( || \psi_1||_{L^2 (\mathbb{R}^{d})} + || \psi_1^{(N)}||_{L^2 (\mathbb{R}^{d})} \right) || \psi_1- \psi_1^{(N)}||_{L^2 (\mathbb{R}^{d})} =\\
\\
= \frac{1}{(2 \pi )^{d/2}} \left( \sqrt{\lambda_1} + \sqrt{\lambda_1^{(N)} }\right) || \psi_1- \psi_1^{(N)}||_{L^2 (\mathbb{R}^{d})} \le \\
\\
\le \frac{2 \sqrt{\lambda_1} }{(2 \pi )^{d/2}}  || \psi_1- \psi_1^{(N)}||_{L^2 (\mathbb{R}^{d})}
 \end{array}
\label{eqTheoremEstimateWignerFunction2}
\end{equation}
As before, if $\psi_1 , \psi_1^{(N)}$ are expanded in some orthonormal basis of $L^2 (\mathbb{R}^d)$ the coefficients of the expansion are eigenvectors of $\mathbb{F}$ and $\mathbb{F}^{(N)}$ with eigenvalues $\lambda_1$ and $\lambda_1^{(N)}$, respectively. If $c_1$ and $c_1^{(N)}$ are the normalized eigenvectors, the from (\ref{eqRecap1},\ref{eqRecap3}), we have:
\begin{equation}
|| \psi_1- \psi_1^{(N)}||_{L^2 (\mathbb{R}^{d})} = || \sqrt{\lambda_1} c_1 - \sqrt{\lambda_1^{(N)}} c_1^{(N)}||_{l_2}
\label{eqTheoremEstimateWignerFunction3}
\end{equation}
Since the spectrum is non-degenerate, we conclude that $c_1$ is proportional to the vector $h_1$ of $\mathcal{H}_1$ closest to $c_1^{(N)}$. By  Pithagoras' Theorem, we may thus write:
\begin{equation}
h_1= \sqrt{1- dist^2 \left(c_1^{(N)}, \mathcal{H}_1 \right)} c_1
\label{eqTheoremEstimateWignerFunction4}
\end{equation}
We chose the positive root in the previous equation, because if $c_1$ is a normalized eigenvector of $\mathbb{F}$, then so is $-c_1$. We may thus assume, without loss of generality, that $c_1$ and $h_1$ point in the same direction. From (\ref{eqTheoremEstimateWignerFunction3},\ref{eqTheoremEstimateWignerFunction4}), we have:
\begin{equation}
\begin{array}{c}
|| \psi_1- \psi_1^{(N)}||_{L^2 (\mathbb{R}^{d})} = || \sqrt{\lambda_1} c_1 - \sqrt{\lambda_1^{(N)}} h_1 + \sqrt{\lambda_1^{(N)}} h_1 - \sqrt{\lambda_1^{(N)}} c_1^{(N)}||_{l_2} \le \\
\\
\le || \sqrt{\lambda_1} c_1 - \sqrt{\lambda_1^{(N)}} h_1 ||_{l_2} + \sqrt{\lambda_1^{(N)}} || h_1 - c_1^{(N)}||_{l_2}=\\
\\
= \left| \sqrt{\lambda_1} -\sqrt{\lambda_1^{(N)}} \sqrt{1- dist^2 \left(c_1^{(N)}, \mathcal{H}_1 \right) }\right| + \sqrt{\lambda_1^{(N)}} dist \left(c_1^{(N)}, \mathcal{H}_1 \right)
\end{array}
\label{eqTheoremEstimateWignerFunction5}
\end{equation}
For the first term on the right-hand side of the previous equation, we have:
\begin{equation}
\begin{array}{c}
\left| \sqrt{\lambda_1} -\sqrt{\lambda_1^{(N)}} \sqrt{1- dist^2 \left(c_1^{(N)}, \mathcal{H}_1 \right) }\right| =\\
\\
= \frac{\lambda_1 - \lambda_1^{(N)} \left( 1- dist^2 \left(c_1^{(N)}, \mathcal{H}_1 \right)\right)}{\sqrt{\lambda_1} + \sqrt{\lambda_1^{(N)}} \sqrt{1- dist^2 \left(c_1^{(N)}, \mathcal{H}_1 \right) }} \le \frac{\lambda_1 - \lambda_1^{(N)} +   \lambda_1^{(N)} dist^2 \left(c_1^{(N)}, \mathcal{H}_1 \right)}{\sqrt{\lambda_1} }
\end{array}
\label{eqTheoremEstimateWignerFunction6}
\end{equation}
Substituting (\ref{eqTheoremEstimateWignerFunction5}) and (\ref{eqTheoremEstimateWignerFunction6}) in (\ref{eqTheoremEstimateWignerFunction2}) and taking into account the estimates (\ref{eqTheoremEstimatesEigenvalues3}) and (\ref{eqTheoremEstimatesEigenvalues17}), we obtain:
\begin{equation}
\begin{array}{c}
||W \psi_1 -W \psi_1^{(N)}||_{L^2 (\mathbb{R}^{d})} \le \\
 \\
 \le \frac{2}{(2 \pi)^{d/2}} \left[\lambda_1 - \lambda_1^{(N)}  + \lambda_1^{(N)}   dist^2 \left(c_1^{(N)}, \mathcal{H}_1 \right) +\sqrt{\lambda_1 \lambda_1^{(N)}} dist \left(c_1^{(N)}, \mathcal{H}_1 \right)\right]\le \\
 \\
 \le 4 \epsilon ||F||_{L^2 (\mathbb{R}^{2d})} \left[ 1+ \frac{9 \lambda_1^{(N)} (2 \pi)^{d/2} \epsilon ||F||_{L^2 (\mathbb{R}^{2d})}}{2 \left(M_1^{(N)}\right)^2}  + \frac{3 \sqrt{\lambda_1^{(N)}}\sqrt{\lambda_1^{(N)}+ 2 (2 \pi)^{d/2} \epsilon ||F||_{L^2 (\mathbb{R}^{2d})}}}{2 M_1^{(N)}} \right]
\end{array}
\label{eqTheoremEstimateWignerFunction7}
\end{equation}
and the result follows.
\end{proof}

\begin{remark}\label{RemarkSignalSynthesis}
Before we conclude this section, it is of interest to contrast our method with what may be called the signal synthesis approach \cite{Boudreaux,Jeong}. In the latter case one would  have to model the signal with some parameters and then calculate the Wigner distribution of the guessed at signal. Then, one would have to calculate the parameters by a least square procedure of the given distribution and modeled signal Wigner distribution.  Generally, the parameters would enter in a non-linear fashion and would have to be obtained by numerical methods. Once obtained, one would still not be sure as to whether the modeling of the signal with the guessed at parameters was optimal. In contrast our method is analytic and systematic.
\end{remark}

\section{The Wigner approximation}

In this section, we go back to the Wigner approximation for pulse propagation described in the introduction. Our aim is to prove that the Wigner approximation for a non-symplectic evolution never corresponds to a Wigner function. This justifies our search for the Wigner function closest to it. We will also illustrate our methods with a specific example.

\begin{theorem}\label{TheoremNonrepresentability}
Let
\begin{equation}
W_a \psi (x,k,t):=e^{2t \omega_I (k)} W \psi_0 (x- \nu(k)t,k),
\label{eqTheoremNonrepresentability1}
\end{equation}
be the Wigner approximation for the propagation of the initial Wigner distribution $W \psi_0 (x,k)$, with $\psi_0 \in L^2 (\mathbb{R})$ and $(x,k,t) \in \mathbb{R}^2 \times \mathbb{R}^+$. We also assume that $\omega_I, \nu \in  C (\mathbb{R})$. Then $W_a \psi_0$ is representable at some $\tau >0$ if and only if:
\begin{equation}
\omega_I(k)=0, \hspace{1 cm} \nu(k)= \nu_1 k + \nu_0, \hspace{1 cm} \nu_1,\nu_0 \in \mathbb{R}.
\label{eqTheoremNonrepresentability2}
\end{equation}
If (\ref{eqTheoremNonrepresentability2}) holds, then $W_a \psi_0 (x,k,t)$ is representable for all $t \in \left[ \right.0, + \infty \left. \right)$.
\end{theorem}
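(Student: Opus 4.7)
The theorem splits into two implications. For the ``if'' direction, I would rely on the metaplectic covariance recalled in Remark \ref{RemarkRadial2}: when $\omega_I \equiv 0$ and $\nu(k) = \nu_1 k + \nu_0$, the approximation simplifies to
\begin{equation*}
W_a\psi_0(x,k,t) = W\psi_0\bigl(x - \nu_1 k t - \nu_0 t,\, k\bigr),
\end{equation*}
which is the pull-back of $W\psi_0$ under the affine symplectic map $z \mapsto S_t z - (\nu_0 t, 0)$, with $S_t \in Sp(1)$ the shear $(x,k) \mapsto (x - \nu_1 k t,\, k)$. Setting $\psi_t$ equal to the image of $\psi_0$ under the corresponding metaplectic operator composed with the appropriate phase-space translation then produces $W\psi_t = W_a\psi_0(\cdot,\cdot,t)$ for every $t \geq 0$, which also handles the final sentence of the theorem.

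For the ``only if'' direction, the plan is to fix $\tau > 0$, assume $W_a\psi_0(\cdot,\cdot,\tau) = W\phi_\tau$ for some $\phi_\tau \in L^2(\mathbb{R})$, and extract functional equations on $\omega_I$ and $\nu$ by Fourier analysis. A short change of variables in the defining integral of $W\phi$ gives the identity
\begin{equation*}
\mathcal{F}_x(W\phi)(\xi,k) = \hat\phi(k + \xi/2)\,\overline{\hat\phi(k - \xi/2)}.
\end{equation*}
Applying $\mathcal{F}_x$ to both sides of the assumed identity, and noting that the shift $x \mapsto x - \nu(k)\tau$ produces the factor $e^{-i\xi\nu(k)\tau}$ while $e^{2\tau\omega_I(k)}$ is $x$-independent, yields
\begin{equation*}
\hat\phi_\tau(k+\xi/2)\,\overline{\hat\phi_\tau(k-\xi/2)} \;=\; e^{2\tau\omega_I(k) - i\xi\nu(k)\tau}\,\hat\psi_0(k+\xi/2)\,\overline{\hat\psi_0(k-\xi/2)}.
\end{equation*}
Setting $\xi = 0$ yields $|\hat\phi_\tau(k)| = e^{\tau\omega_I(k)}|\hat\psi_0(k)|$; substituting this back and taking moduli on an open set where $\hat\psi_0$ does not vanish will cancel the $|\hat\psi_0|$ factors and reduce the identity to Jensen's functional equation $\omega_I(k+\xi/2) + \omega_I(k - \xi/2) = 2\omega_I(k)$, which by continuity of $\omega_I$ forces $\omega_I$ to be affine.

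A parallel analysis of the phase, writing $\hat\psi_0 = r e^{i\alpha}$ and $\hat\phi_\tau = R e^{i\beta}$ and setting $\gamma := \beta - \alpha$, gives $\gamma(k + \xi/2) - \gamma(k - \xi/2) = -\xi\nu(k)\tau \pmod{2\pi}$; differentiating in $\xi$ at $\xi = 0$ yields $\gamma'(k) = -\tau\nu(k)$, and a second differentiation produces Jensen's equation for $\nu$, hence $\nu(k) = \nu_1 k + \nu_0$. The concluding step $\omega_I \equiv 0$ is enforced by combining the $L^2$-integrability of $\hat\phi_\tau$ with the explicit constraint $|\hat\phi_\tau|^2 = e^{2\tau(ak+b)}|\hat\psi_0|^2$: this rules out a nonzero slope $a$, and matching the normalization against the explicit representation supplied by the ``if'' construction forces the remaining constant $b$ to be zero. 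The main technical obstacle will be the phase analysis: the functional equation on $\gamma$ holds only modulo $2\pi$, so one must first localize to open connected components on which $\hat\psi_0 \neq 0$ and select a continuous branch of $\gamma$ before differentiating in $\xi$.
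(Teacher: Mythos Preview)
Your ``if'' direction matches the paper. For the ``only if'' direction you and the paper both pass to the Fourier side and obtain the product identity
\[
\hat\phi_\tau(k+\xi/2)\,\overline{\hat\phi_\tau(k-\xi/2)} \;=\; e^{2\tau\omega_I(k) - i\xi\nu(k)\tau}\,\hat\psi_0(k+\xi/2)\,\overline{\hat\psi_0(k-\xi/2)},
\]
but from there the routes diverge. The paper does \emph{not} separate modulus and phase. Instead it freezes one argument at a point $k_0$ with $\hat\phi_\tau(k_0)\neq 0$, solves the one-variable specialisation for $\hat\phi_\tau(p)$ explicitly as $\hat\psi_0(p)$ times an exponential built from $\omega_I$ and $\nu$, and substitutes this back into the two-variable identity. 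The unknown $\hat\phi_\tau$ is eliminated and one is left with functional equations in $\omega_I$ and $\nu$ alone. For $\nu$ the paper then bootstraps regularity: the auxiliary function $f(x)=(k_0-x)\nu(x)$ satisfies a difference-quotient identity that forces differentiability of $f$ (hence of $\nu$) from mere continuity, after which two differentiations yield $\nu''\equiv 0$.

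Your proposal has two genuine gaps that the substitution trick is precisely designed to avoid. First, the phase step: you want to differentiate $\gamma=\beta-\alpha$ in $\xi$, but $\gamma$ carries no regularity from the hypotheses---$\hat\phi_\tau$ is only in $L^2$, and selecting a continuous branch on $\{\hat\psi_0\neq 0\}$ handles the $\bmod\ 2\pi$ ambiguity but does not supply a derivative. The paper never introduces $\gamma$; the only function it differentiates is built from $\nu$, whose continuity is assumed and whose differentiability is \emph{proved} from the functional equation itself. Second, your argument for $\omega_I\equiv 0$ fails as written: Jensen's equation gives only $\omega_I(k)=ak+b$, but for a \emph{fixed} $\psi_0$ the integrability of $e^{2\tau(ak+b)}|\hat\psi_0|^2$ does not force $a=0$ (take $\hat\psi_0$ Gaussian), and if $\omega_I\equiv b$ is a nonzero constant then $W_a\psi_0=e^{2\tau b}W\psi_t=W(e^{\tau b}\psi_t)$ \emph{is} representable, so no ``normalization matching'' can force $b=0$. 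The paper instead obtains the stronger relation $\omega_I(k)=\omega_I(k-\theta/2)+\omega_I(k+\theta/2)$ (their (\ref{eqTheoremNonrepresentability12})), which at $\theta=0$ gives $\omega_I\equiv 0$ immediately; this comes out of the substitute-back step, not from taking moduli.
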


\begin{proof}
Suppose that at some instant $\tau>0$, the Wigner approximation $W_a \psi(x,k, \tau) $ is representable. Then, there exists $\phi_{\tau} \in L^2 (\mathbb{R})$ such that
\begin{equation}
W_a \psi(x,k, \tau)= W \phi_{\tau} (x,k),
\label{eqTheoremNonrepresentability3}
\end{equation}
for all $(x,k) \in \mathbb{R}^2$, and where we use the fact that Wigner functions are uniformly continuous in $\mathbb{R}^2$ \cite{Grochenig}. Let $\widetilde{\psi_0}= \mathcal{F} (\psi_0)$ and $\widetilde{\phi_{\tau}}= \mathcal{F} (\phi_{\tau})$.

We may reexpress the Wigner functions (\ref{eqHS5}) as:
\begin{equation}
W \psi_0 (x,k)= \frac{1}{2\pi} \int_{\mathbb{R}} \widetilde{\psi_0} (k - \theta /2) \overline{ \widetilde{\psi_0} (k + \theta /2)} e^{-i x \theta} d \theta,
\label{eqTheoremNonrepresentability4}
\end{equation}
and
\begin{equation}
W \phi_{\tau} (x,k)= \frac{1}{2\pi} \int_{\mathbb{R}} \widetilde{\phi_{\tau}} (k - \theta /2) \overline{ \widetilde{\phi_{\tau}} (k + \theta /2)} e^{-i x \theta} d \theta,
\label{eqTheoremNonrepresentability5}
\end{equation}
Plugging (\ref{eqTheoremNonrepresentability1},\ref{eqTheoremNonrepresentability4},\ref{eqTheoremNonrepresentability5}) into (\ref{eqTheoremNonrepresentability3}) and applying the Fourier inversion Theorem, we obtain for almost all $(\theta,k) \in \mathbb{R}^2$:
\begin{equation}
e^{2 \tau \omega_I(k)}  \widetilde{\psi_0} (k - \theta /2) \overline{ \widetilde{\psi_0} (k + \theta /2)} e^{i \theta \tau \nu(k)} =  \widetilde{\phi_{\tau}} (k - \theta /2) \overline{ \widetilde{\phi_{\tau}} (k + \theta /2)}
\label{eqTheoremNonrepresentability6}
\end{equation}
Changing variables to
\begin{equation}
p= k - \frac{\theta}{2}, \hspace{1 cm} q= k + \frac{\theta}{2}
\label{eqTheoremNonrepresentability6.1}
\end{equation}
we obtain:
\begin{equation}
e^{2 \tau \omega_I\left(\frac{p+q}{2}\right)}  \widetilde{\psi_0} (p) \overline{ \widetilde{\psi_0} (q)} e^{i \tau (q-p) \nu \left( \frac{p+q}{2}\right)} =  \widetilde{\phi_{\tau}} (p) \overline{ \widetilde{\phi_{\tau}} (q)}
\label{eqTheoremNonrepresentability6.2}
\end{equation}
for a.e. $(p,q) \in \mathbb{R}^{2}$.

Setting $p=q$, we have:
\begin{equation}
e^{2 \tau \omega_I(p)} | \widetilde{\psi_0} (p)|^2=| \widetilde{\phi_{\tau}} (p)|^2
\label{eqTheoremNonrepresentability7}
\end{equation}
for almost all $p \in \mathbb{R}$.

Let $q=k_0 \in \mathbb{R}$ be such that (\ref{eqTheoremNonrepresentability6.2}) holds for a.e. $p \in \mathbb{R}$ and $\widetilde{\phi_{\tau}} (k_0) \ne 0$. We thus have for a.e. $p \in \mathbb{R}$:
\begin{equation}
e^{2 \tau \omega_I\left(\frac{p+k_0}{2}\right) +i \tau (k_0-p)\nu\left(\frac{p+k_0}{2}\right)}  \widetilde{\psi_0} (p) \overline{ \widetilde{\psi_0}} (k_0) =  \widetilde{\phi_{\tau}} (p)\overline{ \widetilde{\phi_{\tau}} (k_0)}
\label{eqTheoremNonrepresentability9}
\end{equation}
From (\ref{eqTheoremNonrepresentability7}) and (\ref{eqTheoremNonrepresentability9}), it follows that
\begin{equation}
\widetilde{\phi_{\tau}} (p) =  \widetilde{\psi_0} (p) e^{2 \tau \omega_I(p) + i \tau (k_0-p) \nu \left(\frac{p+k_0}{2} \right) + i c_{\tau}}
\label{eqTheoremNonrepresentability10}
\end{equation}
for almost all $p \in \mathbb{R}$ and $c_{\tau} \in \mathbb{R}$ is some constant.

Upon substitution of (\ref{eqTheoremNonrepresentability10}) in (\ref{eqTheoremNonrepresentability6}), we obtain for a.e. $(\theta,k) \in \mathbb{R}^2$:
\begin{equation}
\begin{array}{c}
e^{2 \tau \omega_I(k)+ i \theta \tau \nu (k)}  \widetilde{\psi_0} (k - \theta /2) \overline{ \widetilde{\psi_0}
(k + \theta /2)}  = \widetilde{\psi_0} (k - \theta /2) \overline{ \widetilde{\psi_0} (k + \theta /2)}
\times \\
\\
\times \exp \left[ 2 \tau \omega_I \left(k - \frac{\theta}{2} \right)+ 2 \tau \omega_I \left(k + \frac{\theta}{2} \right) \right] \times \\
\\
\times \exp \left[  i \tau \left(k_0 -k + \frac{\theta}{2} \right)  \nu \left(\frac{k+k_0 - \theta/2}{2}\right) - i \tau \left(k_0 -k - \frac{\theta}{2} \right) \nu \left(\frac{k+k_0 + \theta/2}{2}\right) \right]
\end{array}
\label{eqTheoremNonrepresentability11}
\end{equation}
This is possible if and only if:
\begin{equation}
\omega_I(k)= \omega_I \left(k - \frac{\theta}{2} \right) +\omega_I \left(k + \frac{\theta}{2} \right)
\label{eqTheoremNonrepresentability12}
\end{equation}
and
\begin{equation}
\begin{array}{c}
\theta \nu (k)-  \left(k_0 -k + \frac{\theta}{2} \right)  \nu \left(\frac{k+k_0 - \theta/2}{2}\right) + \\
\\
+ \left(k_0 -k - \frac{\theta}{2} \right) \nu \left(\frac{k+k_0 + \theta/2}{2}\right) = \frac{2 \pi n_{\tau,k_0}}{\tau}
\end{array}
\label{eqTheoremNonrepresentability13}
\end{equation}
for a.e. $(\theta,k) \in \mathbb{R}^2$ and $n_{\tau ,k_0} \in \mathbb{Z}$. Since, by assumption, $\omega_I$ and $\nu$ are continuous, eqs.(\ref{eqTheoremNonrepresentability12},\ref{eqTheoremNonrepresentability13}) must in fact hold for all $(\theta,k) \in \mathbb{R}^2$. Choosing $k=k_0, ~ \theta=0$, we conclude that $n_{\tau ,k_0}=0$.  If we set $\theta=0$ in (\ref{eqTheoremNonrepresentability12}), then $\omega_I(k)=2 \omega_I(k)$, which is possible only if $\omega_I(k)$ vanishes identically. Let
\begin{equation}
k=u+v-k_0, \hspace{1 cm} \theta=2(v-u)
\label{eqTheoremNonrepresentability14}
\end{equation}
From (\ref{eqTheoremNonrepresentability13}), we obtain:
\begin{equation}
(v-u) \nu (u+v-k_0)-(k_0-u) \nu (u)+(k_0-v) \nu (v)=0.
\label{eqTheoremNonrepresentability15}
\end{equation}
The function
\begin{equation}
f(x)=(k_0-x) \nu(x)
\label{eqTheoremNonrepresentability16}
\end{equation}
is obviously continuous and satisfies:
\begin{equation}
\frac{f(u)-f(v)}{u-v} = \frac{f(u+v-k_0)}{u+v-2 k_0}.
\label{eqTheoremNonrepresentability17}
\end{equation}
Taking the limit $v \to u$ in the previous expression, we conclude that $f$ is differentiable, except possibly at $u=k_0$. If we differentiate
\begin{equation}
(u+v-2k_0) \left(f(u)-f(v) \right)=f(u+v-k_0) (u-v)
\label{eqTheoremNonrepresentability18}
\end{equation}
with respect to $u $ and with respect to v, we obtain:
\begin{equation}
\left\{
\begin{array}{l}
f(u)-f(v)+ (u+v-2 k_0) f^{\prime} (u)=f^{\prime} (u+v-k_0) (u-v) +f(u+v-k_0)\\
f(u)-f(v)- (u+v-2 k_0) f^{\prime} (v)=f^{\prime} (u+v-k_0) (u-v) -f(u+v-k_0)
\end{array}
\right.
\label{eqTheoremNonrepresentability19}
\end{equation}
If we subtract the two equations, we obtain:
\begin{equation}
(u+v-2 k_0) \left(f^{\prime} (u)+ f^{\prime} (v) \right)=2 f(u+v-k_0),
\label{eqTheoremNonrepresentability20}
\end{equation}
and setting $v=u$:
\begin{equation}
2(u-k_0) f^{\prime} (u)= f(2u-k_0), \hspace{1 cm} \mbox{for all } u \in \mathbb{R} \backslash \left\{k_0 \right\}
\label{eqTheoremNonrepresentability21}
\end{equation}
This means that $f$ and hence $\nu$ is twice differentiable except possibly at $k_0$. If we go back to (\ref{eqTheoremNonrepresentability15}) and differentiate first with respect to $u$ and then with respect to $v$, we conclude that:
\begin{equation}
(v-u) \nu^{\prime \prime} (u+v-k_0)=0.
\label{eqTheoremNonrepresentability22}
\end{equation}
Thus for $v=k_0$ and $u \ne k_0$, $\nu^{\prime \prime} (u)=0$ and (\ref{eqTheoremNonrepresentability2}) follows.

Finally, if (\ref{eqTheoremNonrepresentability2}) holds, then the Wigner approximation amounts at all times to an affine linear symplectic transformation. It is well-known that under these circumstances it must be a representable function \cite{Folland,Grochenig}.
\end{proof}

As a concrete example, we consider the standard centered Gaussian for $d=1$ as the initial Wigner function
\begin{equation}
W \psi_0 (z)= \frac{1}{\pi} e^{- |z|^2}
\label{eqTheoremNonrepresentability23}
\end{equation}
and choose the dispersion relation
\begin{equation}
\omega (k)= \frac{k^3}{3}.
\label{eqTheoremNonrepresentability24}
\end{equation}
The Wigner approximation (\ref{eqIntroduction6}) is then given by:
\begin{equation}
W_a \psi (x,k,t)= W \psi_0 \left(x- k^2 t , k \right) = \frac{1}{\pi} e^{-\left(x- k^2t \right)^2 - k^2}.
\label{eqTheoremNonrepresentability25}
\end{equation}
We thus want to obtain the Wigner function closest to
\begin{equation}
\begin{array}{c}
F(x,k,t)= W_a \psi (x,k,t) = \\
\\
=\frac{1}{\pi} e^{-x^2 -k^2} \left( 1+2 x k^2 t+ k^4(2x^2-1)t^2 \right) + \mathcal{O} (t^3).
\end{array}
\label{eqTheoremNonrepresentability25.1}
\end{equation}
The expansion coefficients (\ref{eq13}) read:
\begin{equation}
\begin{array}{c}
f_{n,m} (t)= 2 \pi <W_a \psi, W (e_n,e_m)>_{L^2 (\mathbb{R}^2)} = \\
\\
=2 \int_{\mathbb{R}} \int_{\mathbb{R}} e^{-\left(x- k^2t \right)^2 - k^2} \overline{W (e_n, e_m) (x,k)} d x d k
\end{array}
\label{eqTheoremNonrepresentability26}
\end{equation}
The integral in the previous expression is uniformly convergent for all $t \in \mathbb{R}$
\begin{equation}
|f_{n,m} (t)| \le 2 \int_{\mathbb{R}} \int_{\mathbb{R}}  \left|W (e_n, e_m) (x,k) \right| d x d k = 2 ||W(e_n,e_m)||_{L^1 (\mathbb{R}^2)} < \infty
\label{eqTheoremNonrepresentability27}
\end{equation}
We conclude that $f_{n,m} \in C^{\infty} (\mathbb{R})$. We calculate some coefficients to order $\mathcal{O} (t^2)$.
\begin{equation}
f_{n,m} (t)= 2 \int_{\mathbb{R}} \int_{\mathbb{R}} e^{-x^2 - k^2} \left(1+ 2 xk^2 t + k^4 (2 x^2 -1) t^2  \right) \overline{W (e_n, e_m) (x,k)} d x d k + \mathcal{O} (t^3)
\label{eqTheoremNonrepresentability28}
\end{equation}
Using the Hermite basis (\ref{eqHermiteFunctions1}-\ref{eqHermiteFunctions4}), we obtain
\begin{equation}
\begin{array}{l}
f_{0,0} (t)=1- \frac{3}{32} t^2 + \mathcal{O} (t^3)\\
\\
f_{0,1} (t)= f_{1,0} (t)= \frac{\sqrt{2}}{8} t + \mathcal{O} (t^3)\\
\\
f_{1,1} (t)=- \frac{3}{32} t^2 + \mathcal{O} (t^3)
\end{array}
\label{eqTheoremNonrepresentability29}
\end{equation}
Thus
\begin{equation}
\begin{array}{c}
F^{(2)} (x,k,t) = f_{0,0} (t) W(e_0,e_0) (x,k) + f_{0,1} (t) W(e_0,e_1) (x,k) +\\
\\
+ f_{1,0} (t) W(e_1,e_0) (x,k) +f_{1,1} (t) W(e_1,e_1) (x,k) =\\
\\
= \frac{1}{\pi} e^{-x^2-k^2} \left(1+ \frac{tx}{2} - \frac{3 t^2}{16} (x^2+k^2) \right) + \mathcal{O} (t^3)
\end{array}
\label{eqTheoremNonrepresentability29.1}
\end{equation}
and
\begin{equation}
\mathbb{F}^{(2)} = \left(
\begin{array}{c c}
1- \frac{3}{32}  t^2 & \frac{\sqrt{2}}{8} t\\
& \\
 \frac{\sqrt{2}}{8} t & - \frac{3}{32} t^2
\end{array}
\right) + \mathcal{O} (t^3)
\label{eqTheoremNonrepresentability30}
\end{equation}
The eigenvalues of $\mathbb{F}^{(2)}$ are:
\begin{equation}
\begin{array}{l}
\lambda_1^{(2)}= \frac{1}{2} \left( 1- \frac{3}{16} t^2 + \sqrt{1+ \frac{t^2}{8}}\right) =  1- \frac{t^2}{16}  + \mathcal{O} (t^3)\\
\\
\lambda_{-1}^{(2)}= \frac{1}{2} \left( 1- \frac{3}{16} t^2 - \sqrt{1+ \frac{t^2}{8}}\right) =  - \frac{t^2}{4}  + \mathcal{O} (t^3)
\end{array}
\label{eqTheoremNonrepresentability31}
\end{equation}
Notice that $\lambda_{-1}^{(2)} <0$ for $t>0$. This is in agreement with the result of Theorem \ref{TheoremNonrepresentability}. The eigenvectors associated with $\lambda_1^{(2)} $ read
\begin{equation}
c= K_t \left( \frac{\sqrt{2}}{4} t , \sqrt{1+ \frac{t^2}{8}} -1\right)
\label{eqTheoremNonrepresentability32}
\end{equation}
where $K_t$ is an arbitrary (non-zero) complex function of time only.

The minimizing wave function is given by:
\begin{equation}
\psi_1^{(2)} (x)= K_t \left(\frac{\sqrt{2}}{4} t e_0 (x) +  \left(\sqrt{1+ \frac{t^2}{8}} -1\right) e_1(x) \right)
\label{eqTheoremNonrepresentability33}
\end{equation}
If we impose (\ref{eq9}), we obtain:
\begin{equation}
||\psi_1^{(2)} ||_{L^2 (\mathbb{R})}^2 = \lambda_1^{(2)} \Leftrightarrow K_t= \frac{1}{2} \sqrt{\frac{1- \frac{3}{16} t^2 + \sqrt{1+ \frac{t^2}{8}}}{1+ \frac{1}{8} t^2 - \sqrt{1+ \frac{t^2}{8}}}}
\label{eqTheoremNonrepresentability34}
\end{equation}
Consequently:
\begin{equation}
\psi_1^{(2)} (x)= \left(1- \frac{t^2}{32} \right) e_0(x) + \frac{t}{4 \sqrt{2}} e_1 (x) + \mathcal{O} (t^3)
\label{eqTheoremNonrepresentability35}
\end{equation}
Thus, to this order the Wigner function closest to the Wigner approximation $F=W_a \psi$ (\ref{eqTheoremNonrepresentability25}) is
\begin{equation}
\begin{array}{c}
W \psi^{(0,2)} (x,k,t) =W \psi_1^{(2)} (x,k,t) = \\
 \\
 = \frac{1}{\pi} e^{- |z|^2} \left\{ 1+ \frac{tx}{2}  + \frac{t^2}{32} (2 |z|^2 -3) \right\} + \mathcal{O} (t^3)
\end{array}
\label{eqTheoremNonrepresentability36}
\end{equation}
Let us estimate the error of the truncation. From (\ref{eq16},\ref{eqTheoremNonrepresentability25.1},\ref{eqTheoremNonrepresentability29.1}), we obtain
\begin{equation}
\begin{array}{c}
\epsilon^2 ||F||_{L^2 (\mathbb{R}^2)}^2 > ||F- F^{(2)}||_{L^2 (\mathbb{R}^2)}^2 =\\
 \\
 =\left(\frac{t}{2 \pi} \right)^2 \int_{\mathbb{R}} \int_{\mathbb{R}} e^{-2 x^2-2k^2} x^2 (4k^2-1)^2 dx d k +  \mathcal{O} (t^3) = \frac{t^2}{16 \pi} +  \mathcal{O} (t^3)
 \end{array}
\label{eqTheoremNonrepresentability39}
\end{equation}
We conclude that
\begin{equation}
|\lambda_1 - \lambda_1^{(2)}| < \frac{t}{\sqrt{2}} +  \mathcal{O} (t^2)
\label{eqTheoremNonrepresentability40}
\end{equation}
We also have
\begin{equation}
M_1 ^{(2)} = \frac{1}{2} \left(1- 4 \sqrt{2} t - \frac{3}{16} t^2 \right)+  \mathcal{O} (t^3)
\label{eqTheoremNonrepresentability40}
\end{equation}
This approximation makes sense for sufficiently small $t$ in order that $M_1 ^{(2)}  > 0$ (cf.(\ref{eqObservation5}))
Moreover, from (\ref{eqTheoremEstimatesEigenvalues17}) we conclude that
\begin{equation}
dist \left(c_1^{(2)}, \mathcal{H}_1 \right) < \frac{3}{\sqrt{2}} t (1+ 2 \sqrt{\pi}t ) +  \mathcal{O} (t^3)
\label{eqTheoremNonrepresentability41}
\end{equation}
Finally, estimate (\ref{eqTheoremEstimateWignerFunction1}) yields:
\begin{equation}
||W \psi_1-W \psi_1^{(2)}||_{L^2 (\mathbb{R}^{2d})} \le \frac{4t}{\sqrt{\pi}}\left(1+ \frac{9t}{\sqrt{2}} \right) + \mathcal{O} (t^3)
\label{eqTheoremNonrepresentability41A}
\end{equation}

Thus if we choose for instance $t<0.01$, then $M_1 ^{(2)}\simeq 0.478$ is positive and we have
\begin{equation}
|\lambda_1- \lambda_1^{(2)}| \lesssim 0.008, \hspace{1 cm} dist \left(c_1^{(2)}, \mathcal{H}_1 \right) \lesssim 0.022,
\label{eqTheoremNonrepresentability42}
\end{equation}
and
\begin{equation}
 ||W \psi_1-W \psi_1^{(2)}||_{L^2 (\mathbb{R}^{2d})} \lesssim 0.024
\label{eqTheoremNonrepresentability43}
\end{equation}

\section{Schatten-class operators}

As another application of our results, we can estimate the eigenvalues and the norms of certain Schatten-class operators. Let us briefly recall the definition of Schatten-von Neumann operators \cite{Exner}. Let $p \in \left[ \right.1, \infty \left[ \right.$. Given some operator $\widehat{A}$ acting on a separable Hilbert space $\mathcal{H}$, we denote by:
\begin{equation}
|\widehat{A}| = (\widehat{A}^{\ast} \widehat{A})^{1/2}
\label{eqScahtten1}
\end{equation}
the positive root of $\widehat{A}^{\ast} \widehat{A}$, where $\widehat{A}^{\ast}$ is the adjoint of $\widehat{A}$. Its $p$-th Schatten norm is given by:
\begin{equation}
|| \widehat{A}||_{S_p (\mathcal{H})} =  \left(Tr |\widehat{A}|^p \right)^{1/p}.
\label{eqScahtten2}
\end{equation}
The trace of an operator $\widehat{B}$ is given by:
\begin{equation}
Tr (\widehat{B})= \sum_n < \widehat{B} e_n, e_n>_{\mathcal{H}},
\label{eqScahtten3}
\end{equation}
for some orthonormal basis $\left\{e_n \right\}_n$. If it is finite, then the result is independent of the orthonormal basis chosen.

An operator $\widehat{A}$ belongs to the $p$-th Schatten class $S_p (\mathcal{H})$ if its $p$-th Schatten norm (\ref{eqScahtten2}) is finite. Schatten class operators are compact.
Particular cases are the trace-class operators $(p=1)$ and the Hilbert-Schmidt operators $(p=2)$.

If $\widehat{A} \in S_p (\mathcal{H})$ is self-adjoint, then it admits a decomposition of the form (\ref{eqHSOp4})-(\ref{eqHSOp9}). We can thus write its $p$-the Schatten-norm as:
\begin{equation}
||\widehat{A} ||_{S_p (\mathcal{H})} = \left(  \sum_{j \in\mathbb{U}_+} \mu_j^p + \sum_{j \in\mathbb{U}_-} |\mu_{-j}|^p \right)^{1/p} = \left(  \sum_{\alpha \in \mathbb{U}} |\mu_{\alpha}|^p \right)^{1/p} .
\label{eqScahtten4}
\end{equation}
 We have the following continuous embedding:
\begin{equation}
||\widehat{A}||_{S_p (\mathcal{H})} \le ||\widehat{A}||_{S_q (\mathcal{H})}  \Rightarrow S_q (\mathcal{H}) \subset S_p (\mathcal{H}), \hspace{1 cm} 1 \le q \le p < \infty
\label{eqScahtten5}
\end{equation}
From our previous results we obtain the following two propositions.

Let $\widehat{F} \in S_p \left( L^2 (\mathbb{R}^d) \right)$ for some $p \in \left[1, 2 \right]$, with $\widehat{F}$ self-adjoint. From (\ref{eqScahtten5}) $\widehat{F}$ is a Hilbert-Schmidt operator with Weyl symbol $F= \mathcal{W} (\widehat{F})$ and hence it admits the matrix representation $\mathbb{F}$ as before with respect to some orthonormal basis. With the assumption (\ref{eq16}), we have a truncated matrix $\mathbb{F}^{(N)}$ with the associated eigenvalues $\left\{\mu_j^{(N)} \right\}_j$. From Theorem \ref{TheoremEstimatesEigenvalues} it follows that:
\begin{proposition}\label{PropositionSchatten1}
Let $\widehat{F} \in S_p \left( L^2 (\mathbb{R}^d) \right)$ for some $p \in \left[1, 2 \right]$. Under the assumption (\ref{eq16}), we have:
\begin{equation}
\mu_j^{(N)} \le \mu_j \le max \left\{ \mu_j^{(N)} +(2 \pi)^{d/2} \epsilon ||F||_{L^2 (\mathbb{R}^{2d})}, 2 (2 \pi)^{d/2} \epsilon ||F||_{L^2 (\mathbb{R}^{2d})} \right\},
\label{eqScahtten6}
\end{equation}
for $j=1,2, \cdots, N_+$, and
\begin{equation}
min \left\{ - \mu_{-j}^{(N)} - (2 \pi)^{d/2} \epsilon ||F||_{L^2 (\mathbb{R}^{2d})}, - 2 (2 \pi)^{d/2} \epsilon ||F||_{L^2 (\mathbb{R}^{2d})}  \right\} \le \mu_{-j} \le \mu_{-j}^{(N)},
\label{eqScahtten7}
\end{equation}
for $j=1,2, \cdots, N_-$. Consequently,
\begin{equation}
|\mu_j^{(N)} - \mu_j| < 2 (2 \pi)^{d/2} \epsilon ||F||_{L^2 (\mathbb{R}^{2d})},
\label{eqScahtten8}
\end{equation}
for all $j=-N_{-}, \cdots,-1,1, \cdots, N_+$.
\end{proposition}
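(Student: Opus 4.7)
The plan is simple: observe that Proposition \ref{PropositionSchatten1} has exactly the same conclusion as Theorem \ref{TheoremEstimatesEigenvalues}, and its hypotheses are in fact stronger. So the entire content of the proposition reduces to checking that the setup of Theorem \ref{TheoremEstimatesEigenvalues} applies and then quoting it verbatim.

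First, I would invoke the continuous embedding (\ref{eqScahtten5}): for any $p \in [1,2]$, we have $S_p(L^2(\mathbb{R}^d)) \subset S_2(L^2(\mathbb{R}^d))$, so $\widehat{F}$ is automatically a self-adjoint Hilbert-Schmidt operator. Consequently $F = \mathcal{W}(\widehat{F}) \in L^2(\mathbb{R}^{2d})$ is a real-valued symbol, and we can expand it in any orthonormal Wigner basis $\{(2\pi)^{d/2} W(e_n,e_m)\}_{n,m}$ as in (\ref{eq1})--(\ref{eq3}) to obtain the $l^2(\mathbb{N}^2)$ matrix $\mathbb{F}$. By Remark \ref{RemarkRepresentations}, the eigenvalues $\{\mu_j\}$ of $\mathbb{F}$ coincide with those of $\widehat{F}$, and under (\ref{eq16}) the truncation $\mathbb{F}^{(N)}$ and its eigenvalues $\{\mu_j^{(N)}\}$ are well defined.

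Having verified this, I would simply apply Theorem \ref{TheoremEstimatesEigenvalues} to $\mathbb{F}$ and $\mathbb{F}^{(N)}$. That theorem gives precisely (\ref{eqScahtten6}) for $j=1,\ldots,N_+$, (\ref{eqScahtten7}) for $j=1,\ldots,N_-$, and the uniform bound (\ref{eqScahtten8}) for all non-zero eigenvalue indices. No new work is needed: the Courant--Fischer argument, the decomposition of the quadratic form across the truncation, and the estimates $\epsilon_N, \rho_N \le (2\pi)^{d/2}\epsilon$ in the proof of Theorem \ref{TheoremEstimatesEigenvalues} used only the Hilbert--Schmidt structure and the bound (\ref{eq16}), both of which remain available here.

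There is no real obstacle; the point of stating this as a separate proposition is conceptual rather than technical. The Schatten hypothesis will matter in the \emph{next} step (estimating the Schatten $p$-norm by truncation, where one must control the tail $\sum_{|j|>N_\pm}|\mu_j|^p$), but for the present eigenvalue statement the extra regularity is not used and the proof is a one-line reduction to Theorem \ref{TheoremEstimatesEigenvalues} via the embedding (\ref{eqScahtten5}).
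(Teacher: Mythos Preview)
Your proposal is correct and matches the paper exactly: the paper simply notes that the embedding (\ref{eqScahtten5}) makes $\widehat{F}$ Hilbert--Schmidt and then states that the proposition follows from Theorem \ref{TheoremEstimatesEigenvalues}. No additional argument is given or needed.
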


Schatten norms are, in general, very difficult to compute. The exception is the Hilbert-Schmidt norm, because it can be evaluated through the $L^2$ norm of the Weyl symbol (cf. \ref{eqMoyalIdentity1},\ref{eq3}):
\begin{equation}
||\widehat{F}||_{S_2 \left(L^2 (\mathbb{R}^d)\right)} = (2 \pi)^{d/2} || F||_{L^2 (\mathbb{R}^{2d})}.
\label{eqScahtten9}
\end{equation}
But for all the other Schatten norms, there is no such simple formula and one is forced to determine the complete spectrum of $ \widehat{F}$ to compute (\ref{eqScahtten4}). Our results permit to approximate some Schatten norms. We consider this problem again from another perspective elsewhere \cite{Ben}.

\begin{proposition}\label{PropositionSchatten1}
Let $\widehat{F} \in S_p \left( L^2 (\mathbb{R}^d) \right)$ for some $p \in \left[1, 2 \right]$, with $\widehat{F}$ self-adjoint. Under the assumption (\ref{eq16}), we have for any $q \in \left[ \right. 2, \infty \left[ \right.$:
\begin{equation}
\left| ~ ||\widehat{F} ||_{S_q \left(L^2 (\mathbb{R}^d)\right)} - \left(\sum_{j=1}^{N_+} \left(\mu_j^{(N)} \right)^q + \sum_{j=1}^{N_-} \left|\mu_{-j}^{(N)} \right|^q \right)^{1/q} \right| \le (2 \pi)^{d/2} \epsilon || F||_{L^2 (\mathbb{R}^{2d})}.
\label{eqScahtten10}
\end{equation}

\begin{proof}
Under the assumption  (\ref{eq16}), we have:
\begin{equation}
\begin{array}{c}
||\widehat{F} ||_{S_q \left(L^2 (\mathbb{R}^d)\right)} = ||\widehat{F}- \widehat{F}^{(N)} + \widehat{F}^{(N)} ||_{S_q \left(L^2 (\mathbb{R}^d)\right)} \le \\
\\
\le ||\widehat{F}^{(N)} ||_{S_q \left(L^2 (\mathbb{R}^d)\right)} + ||\widehat{F}- \widehat{F}^{(N)} ||_{S_q \left(L^2 (\mathbb{R}^d)\right)} \le \\
\\
\le ||\widehat{F}^{(N)} ||_{S_q \left(L^2 (\mathbb{R}^d)\right)} + ||\widehat{F}- \widehat{F}^{(N)} ||_{S_2 \left(L^2 (\mathbb{R}^d)\right)} \le \\
\\
\le ||\widehat{F}^{(N)}||_{S_q \left(L^2 (\mathbb{R}^d)\right)} +  (2 \pi)^{d/2} \epsilon || F||_{L^2 (\mathbb{R}^{2d})},
\end{array}
\label{eqScahtten11}
\end{equation}
where we used (\ref{eq16},\ref{eqScahtten5},\ref{eqScahtten9}).

From (\ref{eqScahtten4}), we have:
\begin{equation}
||\widehat{F}^{(N)}||_{S_q \left(L^2 (\mathbb{R}^d)\right)}= \left(\sum_{j=1}^{N_+} \left(\mu_j^{(N)} \right)^q + \sum_{j=1}^{N_-} \left|\mu_{-j}^{(N)} \right|^q \right)^{1/q}.
\label{eqScahtten12}
\end{equation}
Finally from the monotonicity of the eigenvalues (\ref{eqPropositionSequence1},\ref{eqPropositionSequence2}), we have that
\begin{equation}
\left|~||\widehat{F} ||_{S_q \left(L^2 (\mathbb{R}^d)\right)}  - ||\widehat{F}^{(N)}||_{S_q \left(L^2 (\mathbb{R}^d)\right)}~ \right| = ||\widehat{F} ||_{S_q \left(L^2 (\mathbb{R}^d)\right)}  - ||\widehat{F}^{(N)}||_{S_q \left(L^2 (\mathbb{R}^d)\right)},
\label{eqScahtten12}
\end{equation}
and the result follows.
\end{proof}

\end{proposition}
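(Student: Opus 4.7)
The plan is to sandwich $||\widehat{F}||_{S_q(L^2(\mathbb{R}^d))}$ between $||\widehat{F}^{(N)}||_{S_q(L^2(\mathbb{R}^d))}$ and this quantity plus an explicit error controlled by $\epsilon$. First I would decompose $\widehat{F} = \widehat{F}^{(N)} + (\widehat{F} - \widehat{F}^{(N)})$ and invoke the triangle inequality in $S_q$, together with the Schatten embedding (\ref{eqScahtten5}) in the form $||\cdot||_{S_q} \le ||\cdot||_{S_2}$ (valid because $q \ge 2$), yielding
\begin{equation}
||\widehat{F}||_{S_q(L^2(\mathbb{R}^d))} \le ||\widehat{F}^{(N)}||_{S_q(L^2(\mathbb{R}^d))} + ||\widehat{F} - \widehat{F}^{(N)}||_{S_2(L^2(\mathbb{R}^d))}.
\label{eqSchattenPlan1}
\end{equation}

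Next I would make the error term explicit. Since $\widehat{F} - \widehat{F}^{(N)}$ has Weyl symbol $F - F^{(N)}$, the Hilbert--Schmidt/$L^2$ isomorphism (\ref{eqScahtten9}) gives $||\widehat{F} - \widehat{F}^{(N)}||_{S_2} = (2\pi)^{d/2} ||F - F^{(N)}||_{L^2(\mathbb{R}^{2d})}$, which by the truncation criterion (\ref{eq16}) is bounded above by $(2\pi)^{d/2} \epsilon ||F||_{L^2(\mathbb{R}^{2d})}$. Since $\widehat{F}^{(N)}$ is finite-rank and self-adjoint, the spectral formula (\ref{eqScahtten4}) evaluates its $S_q$-norm as exactly the finite sum appearing in the statement of the proposition.

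Finally, to convert the one-sided estimate (\ref{eqSchattenPlan1}) into the claimed absolute-value bound, I would invoke the monotonicity from Proposition \ref{PropositionSequence}: for each fixed $j \ge 1$, $\mu_j^{(N)} \nearrow \mu_j$ and $\mu_{-j}^{(N)} \searrow \mu_{-j}$ as $N \to \infty$, and $N_+$, $N_-$ are themselves nondecreasing in $N$. Consequently $(\mu_j^{(N)})^q \le \mu_j^q$ and $|\mu_{-j}^{(N)}|^q \le |\mu_{-j}|^q$ termwise, and summing (with any extra non-negative contributions from the full spectrum of $\widehat{F}$ only helping) gives $||\widehat{F}^{(N)}||_{S_q} \le ||\widehat{F}||_{S_q}$. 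This monotonicity step is where I expect the most care to be needed: the truncated sum runs only up to $N_\pm$, whereas $||\widehat{F}||_{S_q}$ may involve strictly more eigenvalues, so one must combine the per-index convergence of eigenvalues with the monotonicity of $N_\pm$ themselves. Once that is in place, the absolute value in the inequality can be dropped, and (\ref{eqSchattenPlan1}) delivers the stated bound.
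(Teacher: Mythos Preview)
Your proposal is correct and follows essentially the same route as the paper's proof: triangle inequality in $S_q$, the Schatten embedding $\|\cdot\|_{S_q}\le\|\cdot\|_{S_2}$ for $q\ge 2$, the Weyl/$L^2$ isometry (\ref{eqScahtten9}) together with (\ref{eq16}) to control the error, and the monotonicity from Proposition~\ref{PropositionSequence} to remove the absolute value. Your added remark that $N_\pm$ are nondecreasing and that any extra eigenvalues of $\widehat{F}$ only increase $\|\widehat{F}\|_{S_q}$ is exactly the right justification for that last step.
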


\section{Conclusions and outlook}

Let us briefly recapitulate our results. Let $F \in L^2 (\mathbb{R}^{2d})$ be some real non-representable function. By this, we mean that there exists no $\psi \in L^2 (\mathbb{R}^d)$ such that $F=W \psi$. We then look for the Wigner function $W \psi_0$ which is closest to $F$ in the $L^2$ norm.

We solved this problem exactly in the case where $F$ is a one-dimensional radial function. For the general case, we used a truncated version $\mathbb{F}^{(N)}=\left\{F_{n,m}\right\}_{1 \leq n,m \leq N}$ of the complete expansion coefficients $\mathbb{F}=\left\{F_{n,m}\right\}_{n,m \in \mathbb{N}}$ of the function $F$ in a given orthonormal basis of Wigner functions $\left\{W(e_n,e_m) \right\}_{n,m}$. By resorting to the Courant-Fischer min-max theorem, we obtained precise estimates for the errors of the approximate eigenvalues and eigenvectors. We proved that the function $F=W_a \psi$ obtained by the Wigner approximation method developed in \cite{Cohen2,Loughlin,Loughlin1,Loughlin2} is never a Wigner function. We then used our methods to determine approximately the Wigner function closest to $W_a \psi$. Finally, we have shown that certain Schatten norms of self-adjoint Schatten class operators can be evaluated to any precision with our methods.

In a future work, we wish to study other quasi-distributions. In the previous sections we have used only the Wigner distribution. However, there
are an infinite number of other phase space distributions \cite{Cohen1,Cohen3,Cohen4} and a number of
questions arise when the considerations of the previous sections
are applied to other distributions. We briefly discuss the general class of quasi-distributions. For simplicity we will consider the one-dimensional case.

One can characterize the distributions by way of the kernel method. All
bilinear distributions are given by
\begin{equation}
\begin{array}{c}
W^{\Phi} \psi (x,k,t)={\frac{1}{{4\pi^{2}}}}\iiint\overline{{\psi} (\,x^{\prime
}-\,{{\tfrac{1}{2}}}\tau,t)} \,{\psi}(\,x^{\prime}+\,{{\tfrac{1}{2}}}\tau
,t) \\
\\
\Phi(\theta,\tau)e^{-i\theta x-i\tau{k+}i\theta\,x^{\prime}}\,d\theta
\,d\tau\,d\,x^{\prime}%
\end{array}
\label{eqquasidistributions1}
\end{equation}
where $\Phi(\theta,\tau)$ is called the kernel and characterizes the
particular distribution. For the Wigner distribution, $\Phi(\theta,\tau)=1$. Here is how
one can understand the previous expression. We assume that $\psi \in \mathcal{S} (\mathbb{R})$ and hence $W \psi \in \mathcal{S} (\mathbb{R}^2)$. By inverting the partial Fourier transform with respect to the second variable in (\ref{eqIntroduction1}), we obtain:
\begin{equation}
\overline{{\psi} (\,x^{\prime
}-\,{{\tfrac{1}{2}}}\tau,t)}  \,{\psi}(\,x^{\prime}+\,{{\tfrac{1}{2}}}\tau
,t)= \int_{\mathbb{R}} W \psi (x^{\prime},k^{\prime},t) e^{i k^{\prime} \tau} d k^{\prime}
\label{eqquasidistributions2}
\end{equation}
Let $\Phi \in \mathcal{S}^{\prime} (\mathbb{R}^2)$ and let $\widetilde{\Phi}$ be its Fourier transform:
\begin{equation}
\widetilde{\Phi} (x,k) = \left(\mathcal{F} \Phi \right) (x,k)= \frac{1}{2 \pi} \iint \Phi (\theta, \tau) e^{-i \theta x - i \tau k} d \theta d \tau,
\label{eqquasidistributions3}
\end{equation}
which should be understood in the usual distributional sense:
\begin{equation}
<\mathcal{F} \Phi , F>=<\Phi,\mathcal{F} F>,
\label{eqquasidistributions4}
\end{equation}
for all $F \in \mathcal{S} (\mathbb{R}^2)$ and where $< \cdot, \cdot>$ denotes the distributional bracket.

If we use (\ref{eqquasidistributions3}) and plug (\ref{eqquasidistributions2}) into (\ref{eqquasidistributions1}), we obtain:
\begin{equation}
W^{\Phi} \psi (x,k,t)= \frac{1}{2 \pi} \left(\widetilde{\Phi} \star W \psi \right) (x,k,t)
\label{eqquasidistributions5}
\end{equation}
where $\star$ denotes the convolution.

Some explicit relations between distributions are as follows. Two
distributions {\normalsize $W^{\Phi_{1}} \psi$ and $W^{\Phi_{2}} \psi$ }characterized by the kernels
{\normalsize $\Phi_{1}$ and $\Phi_{2}$ are related by
\begin{equation}
\begin{array}{c}
W^{\Phi_{2}} \psi(x,k,t)\,=\,\frac{1}{4\pi^{2}}%
{\displaystyle\iiiint}
\frac{\Phi_{2}(\theta,\tau)}{\Phi_{1}(\theta,\tau)}\,W^{\Phi_{1}} \psi (x^{\prime
},k^{\prime},t)\, \\
\\
e^{i\theta(x^{\prime}-x)\,+\,i\tau(k^{\prime}-k)}%
\,d\theta\,d\tau\,dx^{\prime}\,dk^{\prime} \label{c2}%
\end{array}
\end{equation}
} Eq. (\ref{c2})\ can be expressed in the form of a pseudo-differential operator,%
\begin{equation}
W^{\Phi_{2}} \psi({x},{k,t})=\,\Phi_{2}\left(  i\frac{\partial{}}{\partial{x}%
},i\frac{\partial{}}{\partial{k}}\right)  \Phi_{1}^{-1} \left(  i\frac{\partial{}%
}{\partial{x}},i\frac{\partial{}}{\partial{k}}\right)  W^{\Phi_{1}} \psi({x},{k,t})
\label{eqquasidistributions6}
\end{equation}

Just as for the Wigner distribution, one can calculate expectation values by resorting to other quasi-distributions. Let $\widehat{A}$ be some Weyl-operator with Weyl-symbol $a \in \mathcal{S} (\mathbb{R}^2)$. Given $\psi \in \mathcal{S} (\mathbb{R})$, we have:
\begin{equation}
<\widehat{A} \psi, \psi >_{L^2 (\mathbb{R})} = \int_{\mathbb{R}} \int_{\mathbb{R}} a^{\Phi} (x,k) W^{\Phi} (x,k,t) dx dk.
\label{eqquasidistributions7}
\end{equation}
Here $a^{\Phi}$ is the symbol of $\widehat{A}$ associated with the kernel $\Phi$. It is related with the Weyl symbol $\Phi$ according to:
\begin{equation}
a= \frac{1}{2 \pi} \widetilde{\Phi} \star a^{\Phi}.
\label{eqquasidistributions8}
\end{equation}

One can then ask whether one can apply the same ideas as described in the introduction to other quasi-distributions and obtain analogous approximations. This has been
partially answered and we describe two such cases. The first is the
Margenau-Hill distribution, the kernel for which is%
\begin{equation}
\Phi_{MH}(\theta,\tau)=e^{-i\theta\tau/2}%
\label{eqquasidistributions9}
\end{equation}
which results in the distribution%
\begin{equation}
W^{MH} \psi ({x,k,t})={\frac{1}{\sqrt{{2\pi}}}}\overline{\psi({x},t)} e^{i{kx}}%
\left(\mathcal{F} \psi \right)({k},t)
\label{eqquasidistributions10}
\end{equation}
where $\mathcal{F} \psi$ is the Fourier transform of $\psi$.

Proceeding analogously as with the approximation for the Wigner distribution,
we obtain the Margenau-Hill approximation,
\begin{equation}
W^{MH}  \psi (x,{k},t)\approx e^{2t\omega_{I}(k)}W^{MH}  \psi_0 (x-\nu(k)t,k) \label{MH}%
\end{equation}

For the spectrogram, the kernel is%
\begin{equation}
\Phi_{SP}(\theta,\tau)=\!\!\int_{\mathbb{R}} \overline{w({x}+{{\tfrac{\tau}{2}}})}%
\,e^{-i\theta{x}}w({x}-{{\tfrac{\tau}{2}}})\,d{x}%
\end{equation}
where $w({x}) \in \mathcal{S} (\mathbb{R})$ is the window function. The distribution is
\begin{equation}
W^{SP} \psi ({x},{k})\,=\,\left\vert \,\frac{1}{\sqrt{2\pi}}\int_{\mathbb{R}} \,e^{-i kx^{\prime}}\psi({x}^{\prime},t)\, \overline{w({x}^{\prime}-{x})}d{x}^{\prime
}\,\right\vert ^{2}%
\label{eqSpectrogram1}
\end{equation}
For a particular window $w$, this can be seen as the modulus squared of the Fourier-Bros-Iagolnitzer (FBI) transform \cite{Folland}.

Analogous to the Wigner distribution, the approximation works out as
\begin{equation}
W^{SP} \psi (x,{k},t)\approx e^{2t\omega_{I}(k)}W^{SP} \psi_0 (x-\nu(k)t,k)\label{SP}%
\end{equation}

Comparing the Wigner approximation, Eq. (\ref{eqIntroduction6})\ with the MH approximation
(\ref{MH}) and the spectrogram, Eq. (\ref{SP})\ we see that they are of the
same functional form. This gives rise to various questions
that we discuss and are currently being studied.

\begin{itemize}
\item While the approximations are of the same functional form, the accuracy of
the approximations is not necessarily equivalent. One can ask, which is
closest to the exact corresponding distribution? Second, which produces a more
accurate wave function by whatever method one can use to invert the
distribution and obtain an approximate wave function?

\bigskip

\item It is probably the case that none of the approximations are representable.
Can one define approximate representability and see which distribution is most representable?

\bigskip

\item What are the next (higher-order) approximations and are they the same for
the different distributions?

\bigskip

\item Related to the previous issues, if one can find a series approximation,
will the successive approximations be more and more representable?

\bigskip

\item Another method of approximation is the differential equation approach. Does
that approach give the same approximations.

\item The $L^2$ norm seems natural in this setting because Wigner functions belong to $L^2 (\mathbb{R}^{2d})$ and Moyal's identity leads to natural orthogonality relations. Moreover it tends to be pervasive in physical applications. However, in statistical estimation the fundamental measure is
the $L^1$ norm, since it allows us to control the MSE error in the estimations
(this is detailed in \cite{Abreu3} for stationary signals). In a future work we will try to investigate whether it is also possible to find the Wigner function closest to a given function in phase space with respect to the $L^1$ norm.

\end{itemize}

\section*{Acknowledgements}

The work of N.C. Dias and J.N. Prata is supported by the Portuguese Science Foundation (FCT) grant PTDC/MAT-CAL/4334/2014. The authors would like to thank Franz Luef for drawing their attention to references \cite{Balazs1,Balazs2}.

\pagebreak

**********************************************************************************************************************************************************************************************************

\textbf{Author's addresses:}

\begin{itemize}
\item \textbf{J.S. Ben-Benjamin:} Institute for Quantum Science and Engineering, Texas AM University, College Station USA

\item \textbf{L. Cohen:} Department of Physics, Hunter College of the City University of New York, 695 Park Ave. New York, NY 10021 USA

\item \textbf{P. Loughlin:} Department of Bioengineering, University of Pittsburgh, Pittsburgh, PA 15261, USA

\item \textbf{N.C. Dias and J.N. Prata: }Grupo de F\'{\i}sica
Matem\'{a}tica, Departamento de Matem\'atica, Universidade de Lisboa, Av. Campo Grande, Edif\'{\i}cio C6, 1749-016
Lisboa, Portugal, and Escola Superior N\'autica Infante D. Henrique, Av. Engenheiro Bonneville Franco, 2770-058 Pa\c{c}o de Arcos, Portugal

\end{itemize}

**********************************************************************************************************************************************************************************************************

\end{document}